%% file: main.tex
\documentclass[11pt]{article}

\usepackage[a4paper,width=159mm,top=25mm,bottom=25mm,bindingoffset=0mm]{geometry}

\usepackage[sorting=none,style=numeric-comp]{biblatex}
\usepackage[utf8]{inputenc}
\usepackage[T1]{fontenc}
\usepackage[english]{babel}

\usepackage{amsmath,amscd,amssymb,amsfonts}
\usepackage{mathrsfs}
\usepackage{braket}
\usepackage{array}
\usepackage{graphicx}
\usepackage{float}
\usepackage{subcaption}
\usepackage{tikz}
\usepackage{booktabs}
\usepackage{multirow}
\usepackage{siunitx}
\usepackage{tabularx}
\usepackage{mdframed}
\usepackage{xurl}
\usepackage{authblk}

\usepackage{hyperref}
\usepackage[dvipsnames]{xcolor}
\hypersetup{
    colorlinks=true,
    linkcolor=RoyalBlue,
    filecolor=magenta,      
    urlcolor=cyan,
    citecolor=RoyalPurple
    }

\usepackage[toc,page]{appendix}

\usepackage{amsthm}
\newtheorem{theorem}{Theorem}[section]
\newtheorem{corollary}{Corollary}[theorem]
\newtheorem{lemma}[theorem]{Lemma}
\newtheorem{proposition}[theorem]{Proposition}

\usepackage{dsfont}
\newcommand{\thetaB}{\boldsymbol{\theta}}
\newcommand{\Var}{{\rm Var}}
\newcommand{\Cov}{{\rm Cov}}
\newcommand{\Tr}{{\rm Tr}}
\newcommand{\dd}{\text{d}}
\renewcommand{\Re}{{\rm Re}}
\newcommand{\poly}{{\rm poly}}
\newcommand{\polylog}{{\rm polylog}}
\newcommand{\rk}{{\rm rank}}
\newcommand{\im}{{\rm im}}

\newcommand{\qimpl}{\quad \Rightarrow \quad}

\graphicspath{{images/}}

\usepackage[acronyms,toc]{glossaries}

\makeglossaries
\glsdisablehyper
\renewcommand*{\glossaryentrynumbers}[1]{}

\loadglsentries{glossary}

\begin{document}

\title{Quantum Graph Convolutional Networks: Implementation and Trainability Analysis}

\author[1,2]{Paul San Sebastian Sein$^*$}
\author[3,4]{Theodor Iosif$^*$}
\author[3]{Tilen G. Limbäck-Stokin}
\author[3]{Kin Ian Lo}
\author[5,6,7]{Yidong Liao}

\affil[1]{Ikerlan Technology Research Centre, Basque Research and Technology Alliance (BRTA), Arrasate-Mondragon, Spain}
\affil[2]{University of the Basque Country/Euskal Herriko Unibertsitatea-EHU}

\affil[3]{Quantum Learning Labs, Department of Computer Science, University College London, London, United Kingdom}
\affil[4]{London Centre for Nanotechnology, London, United Kingdom}
\affil[5]{Centre for Quantum Software and Information, University of Technology Sydney, Sydney, NSW, Australia}
\affil[6]{Sydney Quantum Academy, Sydney, NSW, Australia}
\affil[7]{Laboratoire d'Informatique de Paris 6, CNRS, Sorbonne Université, 4 Place Jussieu, 75005 Paris, France}

\date{}

\maketitle
\def\thefootnote{*}\footnotetext{These authors contributed equally to this work.}\def\thefootnote{\arabic{footnote}}

\begin{abstract}
Graph Neural Networks (GNNs) achieve state-of-the-art performance on graph-structured data, but training and inference on large graphs are often bottlenecked by memory constraints and sparse linear-algebra workloads. Quantum computing offers an alternative set of primitives that may improve scalability for graph learning. Building on the quantum graph neural network (QGNN) framework of Liao \textit{et al.}~\cite{liao_graph_2024}, this work implements two representative architectures --- the \gls{sgc} and \gls{lgc} models --- and evaluates them on open benchmark graph datasets and semi-supervised learning tasks using quantum simulation. We compare predictive performance and optimization behavior against classical baselines, showing that the quantum models achieve competitive performance with fewer parameters. 
Finally, we present a cost gradient analysis that identifies the tasks for which the models showcased are trainable. This is followed by a classical simulability study to find regimes in which the proposed circuits remain robust during training.
\end{abstract}

\section{Introduction}\label{qgnn_intro}

Graph Neural Networks (GNNs) are powerful machine learning models for analyzing structured data represented as graphs. They have shown remarkable success in various applications including social network analysis \cite{borisyuk_lignn_2024,fan_graph_2019}, recommendation systems \cite{de_nadai_personalized_2024}, drug discovery \cite{stokes_deep_2020,zitnik_modeling_2018}, and traffic prediction \cite{derrow-pinion_eta_2021}. From a theoretical perspective, GNNs have been posited as a universal framework for various neural network architectures: Convolutional Neural Networks, Recurrent Neural Networks, Transformers etc. can be viewed as special cases of GNNs \cite{bronstein_geometric_2021, joshi_transformers_2025}.\\

Despite their success, classical GNNs face several challenges when dealing with large-scale graphs. One major challenge is the memory limitation that arises when handling giant graphs. Large and complex graphs become increasingly difficult to fit in the conventional memory used by most classical computing hardware \cite{chiang_cluster-gcn_2019}. Another issue lies in the inherent sparse matrix operations of GNNs, which pose challenges for efficient computation on modern hardware like GPUs that are optimized for dense matrix operations\footnote{Customized hardware accelerators for sparse matrices can improve GNNs' latency and scalability, but their design remains an open question.}. Moreover, the common method of managing large graphs through graph subsampling techniques (e.g. dividing large graphs into smaller, more manageable subgraphs \cite{chiang_cluster-gcn_2019}) may encounter reliability issues, since it is challenging to guarantee that these subgraphs preserve the semantics of the entire graph and provide reliable gradients for training GNNs. In summary, the memory and computational requirements of processing large-scale graphs often exceed the capabilities of classical computing hardware, limiting the practical scalability of GNNs. The need for efficient and scalable graph learning has motivated ongoing efforts in developing specialized hardware accelerators for GNNs \cite{kiningham_grip_2023,auten_hardware_2020,abadal_computing_2021} as well as the exploration of utilizing alternative computing paradigms, such as quantum computing, to address these challenges.\\

Quantum computers hold the promise of significantly improving machine learning by providing computational speed-ups or improved model scalability \cite{cerezo_challenges_2022}. In the context of graph learning, quantum computing provides new opportunities to design quantum machine learning architectures tailored for graph-structured data \cite{beer_quantum_2023,skolik_equivariant_2022,verdon_quantum_2019}. Motivated by this potential, Liao et al.~\cite{liao_graph_2024} proposed novel \gls{qgnn} architectures using quantum computing primitives that serve as a starting point for developing more complex \gls{qgnn} models.\\

In this work, two branches of the Quantum Graph Convolutional Networks (QGCNs) in \cite{liao_graph_2024} are considered --- the simple (SGC) and linear (LGC) graph convolutions. They are implemented, trained, and tested using simulators, and their performance is compared to that of their classical counterparts. Open benchmark graph datasets were employed for this task, in a semi-supervised learning context. Additionally, a detailed trainability analysis is provided, showing the robustness of these architectures to the loss concentration problem.\\

The complexity analysis by Liao et al.~\cite{liao_graph_2024} indicates that the quantum implementation of a SGC network can potentially achieve significant improvements in time and/or space complexity compared to its classical counterpart, under conditions commonly encountered in real-world applications. The quantum LGC closely follows, with an additional exchange between a more expressive message-passing and runtime. Following the trainability calculation, our work extends this complexity analysis to compare the full models, also taking into account the backpropagation and shot noise factors. Here we flag that these quantum models suffer from an additional circuit compilation complexity which on its own is comparable to the classical SGC performance. This is called the \textit{input problem} and is shared between several quantum models approaching classical data  \cite{Aaronson_2015, Biamonte_Wittek_Pancotti_Rebentrost_Wiebe_Lloyd_2017}. As this problem is generally mitigated by assumption of an efficient QRAM, we add a dequantization scheme to even out the comparison. With this scheme we can find in which regimes the quantum models' improvements are maintained. 

To this end, we emphasize that the parameter-count reduction demonstrated throughout this work concerns the trainable variational component; any claim of an end-to-end time or memory advantage additionally depends on the data-loading, block-encoding, and readout assumptions discussed in Sec.~\ref{sec: Note on Complexity and Dequantization}. \\

In summary, the main contributions of this work are as follows:
\begin{itemize}
    \item Detailed implementation of two novel \gls{qgnn} models (\gls{sgc} and \gls{lgc}) in a quantum simulator.
    \item Experimental comparison of the implemented models and their classical versions for open benchmarking datasets.
    \item Detailed trainability analysis of the proposed QGCN models.
    \item Complexity comparison between the quantum and classical convolutional architectures and conditions for low-rank simulability.
\end{itemize}

The paper is organized as follows: Sec.~\ref{sec:classical_gnn} reviews the foundations of classical graph neural networks and introduces the convolutional perspective adopted throughout the work. Next, Sec.~\ref{sec:semi-sup-node} presents the semi-supervised node-classification setting and recalls the classical graph convolutional models that motivate the quantum constructions. Then, Sec.~\ref{sec:QGCN} introduces the proposed quantum graph convolutional architectures, with dedicated subsections for the \gls{sgc} and \gls{lgc} models. Afterwards, Sec.~\ref{sec:experiments_qgnn} describes the experimental setup and discusses the numerical results obtained on benchmark datasets. Sec.~\ref{sec: Trainability and Gradient Variance Bounds} provides a theoretical analysis of the trainability and the loss landscape of the proposed models. This is followed by a complexity comparison with the classical counterparts as well as a dequantization scheme to reason on the regimes of advantage of the QGCNs. Finally, Sec.~\ref{sec:conclusion_qgnn} summarizes the main conclusions and outlines future research directions.

\section{Classical Graph Neural Networks}
\label{sec:classical_gnn}

Following Refs.~\cite{bronstein_geometric_2021,velickovic_everything_2023,battaglia_relational_2018}, a brief introduction to classical Graph Neural Networks is provided, which serve as the foundation for the development of the proposed quantum GNNs. \\

Graphs are a natural way to represent complex systems of interacting entities. Formally, a graph $G = (V, E)$ consists of a set of nodes $V$ and a set of edges $E \subseteq V \times V$ that connect pairs of nodes. In many real-world applications, graphs are used to model relational structure, with nodes representing entities (e.g., users, proteins, web pages) and edges representing relationships or interactions between them (e.g., friendships, molecular bonds, hyperlinks). To enable rich feature representations, nodes are often endowed with attribute information in the form of real-valued feature vectors. Given a graph with $N = |V|$ nodes, we can summarize the node features as a matrix $X \in \mathbb{R}^{N \times C}$, where the $u$-th row $\mathbf{x}_u \in \mathbb{R}^C$ corresponds to the $C$-dimensional feature vector of node $u$. The connectivity of the graph can be represented by an adjacency matrix $A \in \mathbb{R}^{N \times N}$, where $a_{uv} = 1$ if there is an edge between nodes $u$ and $v$, and $a_{uv} = 0$ otherwise.\newline

Graph Neural Networks (GNNs) are a family of machine learning models that operate on the graph structure $(X, A)$. The key defining property of GNNs is \textit{permutation equivariance}. Formally, let $P \in \{0,1\}^{N \times N}$ be an permutation matrix. A GNN layer, denoted by $\textbf{F}(X, A)$, is a permutation-equivariant function in the sense that:
\begin{equation}
    \textbf{F}(PX, PAP^\top) = P\textbf{F}(X, A)
\end{equation}
Permutation equivariance is a desirable inductive bias for graph representation learning, as it ensures that the GNN output will be equivariant to arbitrary reorderings of the nodes. This property arises naturally from the unordered nature of graph data, i.e., a graph is intrinsically defined by its connectivity and not by any particular node ordering.\newline

 In each GNN layer, nodes update their features by aggregating information from their local neighborhoods ((undirected) neighborhood of node $u$ is defined as $\mathcal{N}_{u} = \{v | (u,v) \in E \text{ or } (v,u) \in E \}$). This local computation is performed identically (i.e., shared) across all nodes in the graph. Mathematically, a GNN layer computes a new feature matrix $H \in \mathbb{R}^{N \times C'}$ from the input features $X$ as follows:
\begin{equation}
H = \textbf{F}(X, A) = [\phi(\mathbf{x}_1, X_{\mathcal{N}_1}), \phi(\mathbf{x}_2, X_{\mathcal{N}_2}), ..., \phi(\mathbf{x}_N, X_{\mathcal{N}_N})]^T
\end{equation}
where $\phi$ is a local function often called the \textit{neighborhood aggregation} or \textit{message passing} function, and $X_{\mathcal{N}_{u}} = \{\!\!\{\mathbf{x}_v \ | \ v \in \mathcal{N}_{u}\}\!\!\} $ denotes the multiset of all neighborhood features of node $u$. In other words, the new feature vector $\mathbf{h}_{u}:=\phi(\mathbf{x}_u, X_{\mathcal{N}_u})$ of node $u$ is computed by applying $\phi$ to the current feature $\mathbf{x}_u$ and the features of its neighbors $X_{\mathcal{N}_{u}}$. Since $\phi$ is shared across all nodes and only depends on local neighborhoods, it can be shown that if $\phi$ is permutation invariant in $X_{\mathcal{N}_{u}}$, then $\textbf{F}$ will be permutation equivariant. Stacking multiple GNN layers allows information to propagate over longer graph distances, enabling the network to capture high-order interaction effects. \newline 

While the general blueprint of GNNs based on local neighborhood aggregation is quite simple and natural, there are many possible choices for the aggregation function $\phi$. The design and study of GNN layers is a rapidly expanding area of deep learning, and the literature can be divided into three \textit{flavors} \cite{bronstein_geometric_2021}: convolutional, attentional, and message-passing. These flavors determine the extent to which $\phi$ transforms the neighborhood features, allowing for varying levels of complexity when modeling interactions across the graph.\newline

This work will focus on convolutional models, despite the associated paper \cite{liao_graph_2024} including additional study on attentional and message-passing models. In the convolutional flavor \cite{kipf_semi-supervised_2017}, the features of the neighboring nodes are directly combined with fixed weights,
\begin{equation} \label{eq:graph_convolution}
\mathbf{h}_{u}=\phi\left(\mathbf{x}_{u}, \bigoplus_{v \in \mathcal{N}_{u}} c_{u v} \psi\left(\mathbf{x}_{v}\right)\right).
\end{equation}
Here, $c_{u v}$ is a constant indicating the significance of node $v$ to node $u^{\prime}$ s representation. $\bigoplus$ is the aggregation operator which is often chosen to be a simple summation. $\psi$ and $\phi$ are learnable transformations\footnote{Note the activation function in the original definition in \cite{bronstein_geometric_2021} is omitted,  as well as $\mathbf{b} $ in the quantum case, for simplicity.}: $\psi(\mathbf{x})=W \mathbf{x}+\mathbf{b}$, $\phi(\mathbf{x}, \mathbf{z})=W \mathbf{x}+U \mathbf{z}+\mathbf{b}$.  \newline

Classical GNNs have been shown to be highly effective in a variety of graph-related tasks including~\cite{wu_graph_2022,velickovic_everything_2023}:

\begin{enumerate}
    \item \textit{Node classification}: the goal is to assign labels to nodes based on their attributes and the graph structure. For example, in a social network, the task could be to classify users into different categories by leveraging their profile information and social connections. In a biological context, a canonical example is classifying protein functions in a protein-protein interaction network \cite{hamilton_inductive_2017}.

    \item \textit{Link prediction}: the goal is to determine whether an edge exists between two nodes, or to infer edge properties. In a social network, for instance, this translates to anticipating potential interactions between users; in a biological context, it can involve predicting associations between drugs and diseases — a task commonly known as drug repurposing.

    \item \textit{Graph classification}: the goal is to classify entire graphs based on their structures and attributes. A typical example is classifying molecules in terms of their quantum-chemical properties, which holds significant promise for applications in drug discovery and materials science \cite{gilmer_neural_2017}.
\end{enumerate}

\section{Semi-supervised node classification with Graph Convolutional Networks}\label{sec:semi-sup-node}

From the aforementioned applications, this work will hold to the problem of \textit{semi-supervised node classification}. In this scenario, all nodes in the graph belong to one of the possible categories; however, only a small subgroup of nodes is labeled.\footnote{For trainability purposes, the fraction of labeled nodes to all nodes is non-vanishing as the graph size scales --- see Sec.~\ref{sec: Trainability and Gradient Variance Bounds}.} The objective of a GNN is to process the nodes' features and their connectivity to infer the correct class each node belongs to (see Fig.~\ref{fig:semi-supervised}). In contrast to conventional multiclass classification tasks, where patterns in the features are learned to infer over new data, GNNs exploit relational information provided by the graph geometry. This allows GNNs to generalize the classification patterns from a few labeled instances and propagate them through the entire graph.

\begin{figure}[H]
    \centering    \includegraphics[width=1\linewidth]{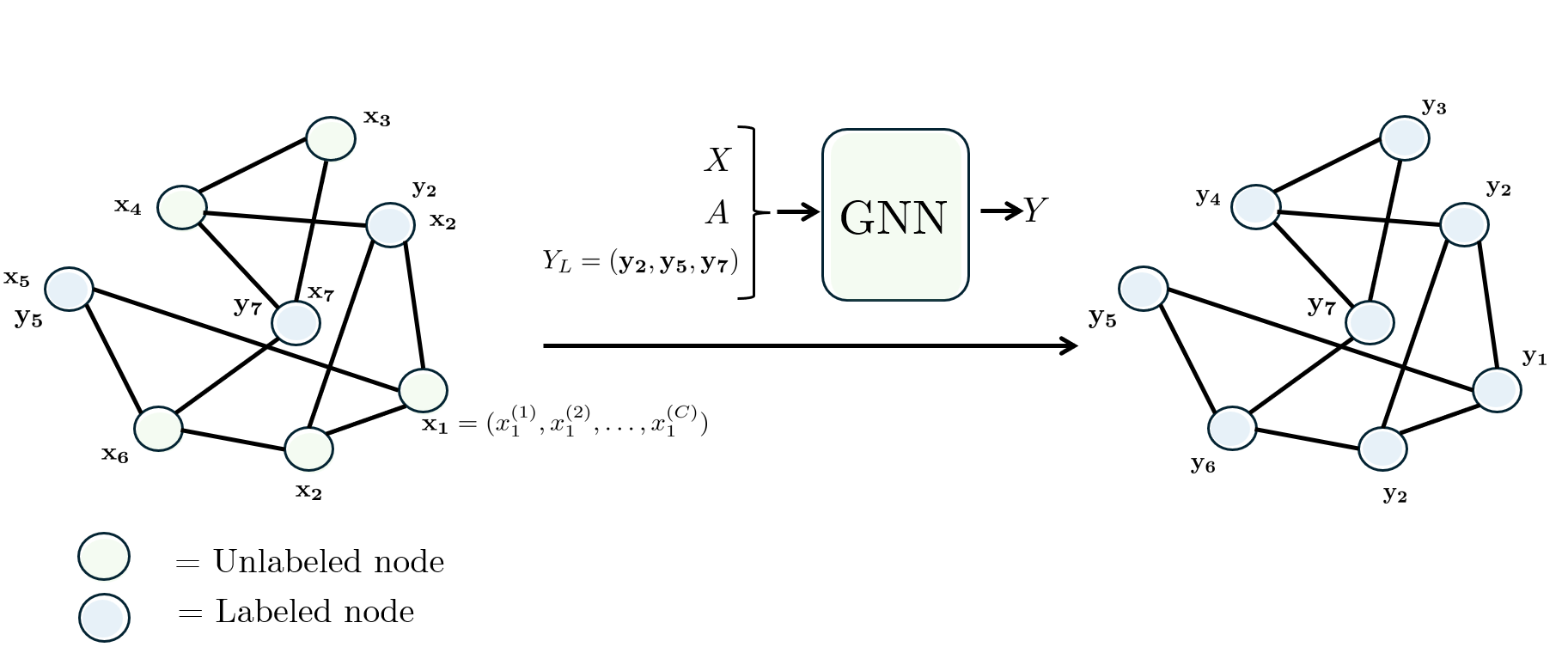}
    \caption{Schematic illustration of the semi-supervised node classification.}
    \label{fig:semi-supervised}
\end{figure}

Graph Convolutional Networks (GCNs) interleave graph convolutional layers, described in Eq.~\ref{eq:graph_convolution}, carried out  by the following matrix operation \cite{kipf_semi-supervised_2017}:
\begin{equation}
    H^{(l+1)}=\sigma\left(\hat{A} H^{(l)} W^{(l)}\right).
\label{gcneq}
\end{equation}
Here, $\hat{A}=\tilde{D}^{-\frac{1}{2}} \tilde{A} \tilde{D}^{-\frac{1}{2}}$ in which $\tilde{A}=A+I_{N}$ is the adjacency matrix of graph $G$ with added self-connections ($I_{N}$ is the identity matrix), $\tilde{D}_{i i}=\sum_{j} \tilde{A}_{i j}$, and $W^{(l)}$ is a layer-specific trainable weight matrix. $\sigma(\cdot)$ denotes a nonlinear activation function. This way, the GNN learns richer node representations $H^{(l)}$ with every layer. Naturally, in the first layer, the node representation corresponds to the input feature-matrix $H^{(1)}=X$. At the output of the last layer, the softmax function, defined as $\operatorname{softmax}\left(x_{i}\right)=\frac{1}{\mathcal{Z}} \exp \left(x_{i}\right)$ with $\mathcal{Z}=\sum_{i} \exp \left(x_{i}\right)$, is applied row-wise to the node feature matrix, producing the final output of the network:
\begin{equation}
    Z=\text{softmax}(\hat{A} H^{(p-1)} W^{(p-1)}).
\end{equation}
Here, $p$ represents the total number of layers applied. Each layer adds more trainable weights to the final cost and mixes the information of a node with its $p$-nearest neighboring nodes.

For semi-supervised multiclass classification, the cost function is defined by the cross-entropy error over all labeled examples \cite{kipf_semi-supervised_2017}:
\begin{equation} 
   \mathcal{L}=-\sum_{s \in {Y}_{L}} \sum_{f=1}^{F_K} Y_{s f} \ln Z_{s f},
   \label{cost}
\end{equation}
where ${Y}_{L}$ is the set of node indices that have labels, $Y\in \mathbb{B}^{N \times F_K}$ denotes the one-hot encoding of the labels.

\section{Quantum Graph Convolutional Networks}\label{sec:QGCN}

An algorithmic design to leverage quantum circuits to create GCNs is proposed in this section. The models comprising this design are termed \textit{Quantum Graph Convolutional Networks} (QGCNs). The objective of the quantum circuit is to implement the graph convolution operation, described in Eq.~\ref{gcneq}, to the node feature matrix $X$. To do so, one must encode the data into the quantum system.

\subsubsection{Data encoding}\label{de}

The GCN model's node features $X \in \mathbb{R}^{N \times C}$, whose entries are denoted by $X_{ik}$, can be encoded in a quantum state $\left|\psi_{X}\right\rangle$ (after normalization)\footnote{Note that the normalization factors in quantum states is often omitted during this work.} as follows:
\begin{equation}\label{eq: input encoding}
\left|\psi_{X}\right\rangle=\sum_{i=1}^{N} |i\rangle\ket{\mathbf{x}_{i}} = \sum_{i=1}^{N} \sum_{k=1}^{C}X_{ik}\ket{i}\ket{k}.
\end{equation}
The entire state is prepared on two quantum registers hosting the channel index $k$ and node index $i$, which are denoted as $Reg(k)$ and $Reg(i)$, respectively. Therefore, this encoding corresponds to an \textit{amplitude encoding}, of the vectorized node feature matrix. In this work, the method from Ref.~\cite{zhang_circuit_2024} is chosen, as their work provides a tunable trade-off between the number of ancillary qubits and the circuit depth for state preparation. Note that $Reg(k)$ initially hosts the $C$-dimensional feature index; after the layer-wise transformations described in Sec.~\ref{mgcn}, the same register is used as a readout register, and the $F_K$ class labels are read out by projecting onto $F_K$ of its computational basis states (see Eq.~\ref{eq: target encoding} and the prediction rule introduced later in Sec.~\ref{sec:QGCN}).

\subsubsection{Layer-wise transformation}\label{mgcn}

The layer-wise linear transformation for a multi-channel GCN (i.e. $H'^{(l)}=\hat{A} H^{(l)} W^{(l)}$), can be implemented by applying the block-encoding of $\hat{A}$ and a parameterized quantum circuit implementing $W^{(l)}$ on the two quantum registers $Reg(i)$ and $Reg(k)$ respectively, as depicted in Fig.~\ref{fig:qgnn_circuit}. This is proven following the \textit{Kronecker-vec identity} \cite{magnus_matrix_2019}:
\begin{equation}
    vec(ABC) = \left( C^T \otimes A\right)vec(B) \longrightarrow vec\left({H'^{(l)}}^T\right) = \left( \hat{A} \otimes {W^{(l)}}^T  \right)vec\left({H^{(l)}}^T\right),
\end{equation}
where $vec(\cdot)$ operator transforms a matrix into a column vector by stacking its columns on top of one another, from left to right.\\

\begin{figure}
    \centering\includegraphics[width=0.8\linewidth]{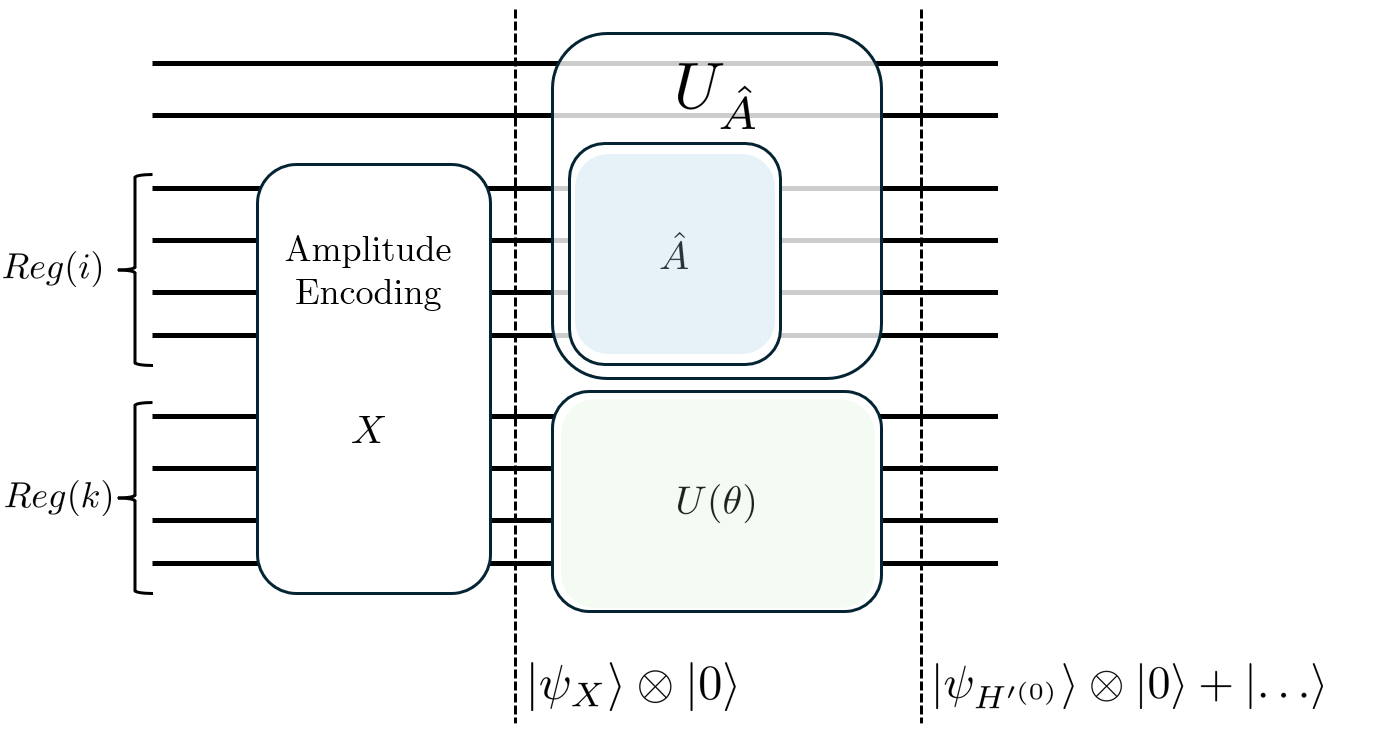}
    \caption{Quantum circuit implementing the proposed graph convolution operation.}
    \label{fig:qgnn_circuit}
\end{figure}

It has been observed in many experiments that deeper GNNs do not always yield to an improved performance and can even lead to worse outcomes than shallow models, as a consequence of a problem called \textit{oversmoothing} \cite{keriven_not_2022}. Accordingly, Ref.~\cite{kipf_semi-supervised_2017} considered a two-layer GCN where the nonlinear activation function is applied only once and the forward model takes the following form:
\begin{equation}
Z = \text{softmax}\left(\hat{A} \sigma\left(\hat{A}XW^{(0)}\right) W^{(1)}\right). 
\end{equation}
Taking this model as reference, we describe the state evolution for the quantum analogue of a two-layer GCN. Denote the block-encoding of $\hat{A}$ by $U_{\hat{A}}$ and the parameterized quantum circuit implementing $W^{(0)}$ by $U_{W^{(0)}}$. Applying these operations to the quantum state $\ket{\psi_X}$ yields
\begin{equation}
 \left(U_{\hat{A}} \otimes U_{W^{(0)}}\right) \ket{\psi_X}\otimes \ket{0} = \ket{\psi_{H'^{(0)}}} \otimes \ket{0} + \ket{\ldots} ,
\end{equation}
where $\ket{\psi_{H'^{(0)}}}=\sum_{i=1}^{N} \ket{i}\ket{\mathbf{h'}^{(0)}_{i}} = \sum_{i=1}^{N} \sum_{k=1}^{C}H'^{(0)}_{ik} \ket{i}\ket{k}$ is the amplitude encoding of the linearly transformed node features for node $i$ over the feature index $k\in\{1,\dots,C\}$. The term ``$\ket{\ldots}$''\footnote{Throughout this paper, the terms ``$\ket{\ldots}$'' in the quantum states are consistently used as defined here.} denotes a quantum state orthogonal to the first term in the sum. The block-encoding of $\hat{A}$ aggregates neighboring node features, while $U_{W^{(0)}}$ applies the trainable weight matrix to the node features.\newline

After the linear layer-wise transformation, the element-wise nonlinear activation function can be applied using an established technique called Nonlinear Transformation of Complex Amplitudes (NTCA) \cite{guo_nonlinear_2024}. One can also potentially utilize the techniques from Ref.~\cite{rattew_non-linear_2023} to apply the nonlinear activation function and achieve better performance. In our case, an explicit quantum implementation of nonlinear activation functions is left for future work.\\

By using a nonlinear activation function on the amplitudes of the state $\ket{\psi_{H'^{(0)}}} \otimes \ket{0} + \ket{\ldots}$, the state $\ket{\psi_{H^{(1)}}} \otimes \ket{0} + \ket{\ldots}$ is obtained. Finally, applying $U_{\hat{A}}$ and $U_{W^{(1)}}$ to the state $\ket{\psi_{H^{(1)}}} \otimes \ket{0} + \ket{\ldots}$ yields the output state $\ket{\psi_{\text{out}}}=\ket{\psi_{H'^{(1)}}} \otimes \ket{0} + \ket{\ldots}$.

\subsubsection{Cost function}

Having computed the feature state of the last layer $\ket{\psi_{\text{out}}}$, to evaluate the loss function defined as in Equation \ref{cost}, full state tomography must be performed to recover all $Z_{sf}$ values that are encoded in the amplitudes of computational basis states. As this can be prohibitively expensive for real quantum devices \cite{schmied_quantum_2014}, we instead define a new cost function:
\begin{equation} \label{eq:inner_product_loss}
\mathcal{L}_{QGCN} = -\mathrm{Re}\big(\left\langle\psi_{out}|\psi_Y\right\rangle\big).
\end{equation}
We stress that this overlap-based loss is a surrogate for the classical row-wise softmax cross-entropy of Eq.~\ref{cost}, chosen because it can be estimated efficiently via the Modified Hadamard Test described below. The quantum model does not implement the classical cross-entropy loss directly; final class labels are instead inferred from projections of the output state onto class basis states, as described later in this section.\\

For this, the one-hot-encoded target state
\begin{equation}\label{eq: target encoding}
\ket{\psi_Y} = \sum_{s \in Y_L} \sum_{f=1}^{F_K} Y_{sf} \, \ket{s}\ket{f}\otimes \ket{0}
\end{equation}
can be efficiently amplitude-encoded using the algorithm proposed in Ref.~\cite{shukla_efficient_2024}. This cost function can be evaluated via the \textit{Modified Hadamard Test}~\cite{knorzer_cross-platform_2023, luongo_quantum_2022}. Assuming that $\ket{\psi_1}= U_1\ket{0}$ and $\ket{\psi_2}= U_2\ket{0}$, an ancillary qubit is prepared in a uniform superposition using a Hadamard gate. Then, both $U_1$ and $U_2$ are applied, each controlled by a different branch of the ancilla. Finally, another Hadamard gate is applied to the ancilla, resulting in the state (see Fig.~\ref{fig:modified_hadamard_test}):
\begin{equation}
    \ket{\psi} = \frac{1}{2}\Bigl[ \ket{0}(\ket{\psi_1}+\ket{\psi_2})+\ket{1}(\ket{\psi_1}-\ket{\psi_2})\Bigr].
\end{equation}
Now, it is straightforward to see that the probability of observing the ancilla qubit in the state $\ket{0}$ is
\begin{equation}
    p(0)=\frac{1}{4}(\bra{\psi_1}+\bra{\psi_2})(\ket{\psi_1}+\ket{\psi_2}) = \frac{2+2\text{Re}[\braket{\psi_1|\psi_2}]}{4}.
\end{equation}

\begin{figure}[H]
    \centering
    \includegraphics[width=0.85\linewidth]{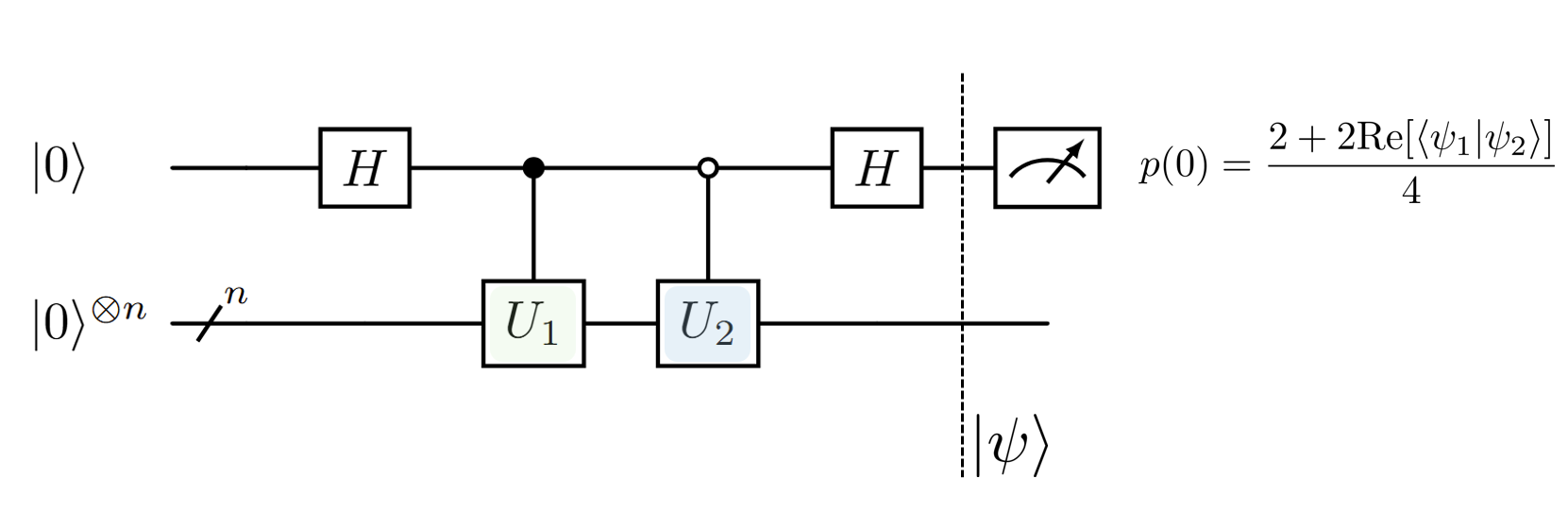}
    \caption{Schematic representation of the quantum circuit applying the Modified Hadamard Test.}
    \label{fig:modified_hadamard_test}
\end{figure}

\paragraph{Note:} One can alternatively find the modulus squared of the overlap,
\begin{equation}
    \mathcal{L}_{QGCN}^\text{alternative}\equiv |\braket{\psi_1|\psi_2}|^2.
\end{equation}
For this, the imaginary part of the overlap can also be measured by adding an $S^\dagger$ phase gate to the ancilla after the first Hadamard gate; consequently, this variant of the loss can be reconstructed from the two measurements. This protocol is related to the \textit{Swap test}, but applies both controlled unitaries to the same register, reducing the number of qubits at the cost of increased circuit depth. If depth must be minimized, the Swap test can be employed, yielding an equivalent estimate of the overlap.

\subsection{Simplified Graph Convolutional Network}

Since all the transformation of quantum states are unitary transformations, and therefore, linear transformations, the application of activation functions is not an easy task for quantum computers. For instance, even if NTCA is a reasonable proposal, it adds a significant computational cost to the algorithm. For this reason, and in order to implement models that can be simulated in classical computers or run in \gls{nisq} devices, activation functions are omitted from this point forward throughout this work.\\

By removing all intermediate activation functions, all linear operations collapse into the following transformation:
\begin{equation}
    \sigma\left(\hat{A}\sigma\left(\hat{A}\sigma(\cdots\hat{A}XW^{(1)})\cdots W^{(p-1)} \right)W^{(p)} \right)\rightarrow\hat{A}^pXW,
\end{equation}
where $W=W^{(1)}W^{(2)}\dots W^{(p)}$. Since all weight matrices are trainable, we treat their product as a single trainable matrix and initialize $W$ directly. This kind of transformation leads to a well-known GCN variant called \textit{\acrfull{sgc}}, which shows competitive results despite its low model complexity \cite{wu_simplifying_2019, maekawa_beyond_2022,pasa_simple_2021}:
\begin{equation}
Y_{\text{SGC}} = \text{softmax}(\hat{A}^p X W).
\end{equation}
As $p$ now only represents how far information is mixed across the graph, we call this the \textit{propagation order}, or $p$\textit{-hop} hereafter. \\

The quantum alternative proposed here (QSGC) applies the transformation described in Sec.~\ref{mgcn} to the encoded feature matrix:
\begin{equation}
    \ket{\psi_{\text{out}}} =  \left( U_{\hat{A}^p}  \otimes U(\theta) \right) \ket{\psi_X},
\end{equation}
where $U_{\hat{A}^p}$ can be implemented by directly block-encoding the precomputed matrix $\hat{A}^p$, or by applying \gls{qsvt} to compute the power $p$ of the normalized adjacency matrix\footnote{For simplicity, the first method is used in the experimental part of this paper.}. To keep the number of ancilla qubits to a minimum, the matrix $A$ can be encoded as
\begin{equation}
    U_A = \begin{bmatrix}
        A & \sqrt{I_N-AA^\dagger}\\
        \sqrt{I_N-A^\dagger A} & -A^\dagger
    \end{bmatrix}.
\end{equation}
The efficient circuit implementation of the block-encoding protocol is still an open and active research area. We refer to \cite{camps_explicit_2023, yang_dictionary-based_2025} for the efficient block-encoding of certain sparse and structured matrices.\\

The weight matrix $U(\theta)$ is implemented by a \gls{pqc}, introducing one subtle difference between the classical and quantum models: $W$ can be any matrix in $\mathbb{R}^{C\times F_K}$, whereas $U(\theta)$ is a unitary matrix in $SU(2^{n_k})$. Here, $F_K$ denotes the number of classes and $n_k = \lceil\log_2 C\rceil$ is the number of qubits in $Reg(k)$.  We emphasize that $U(\theta)$ is therefore not a direct implementation of the classical linear map $W$. It replaces $W$ by a unitary variational feature transformation followed by class readout described below.\\

Once trained with the loss function in Eq.~\ref{eq:inner_product_loss}, the predicted label of node $i$ can be inferred as
\begin{equation}
    y_i = \arg \max_f\quad \mathrm{Re}\big((\bra{i}\bra{f})\ket{\psi_{\text{out}}} \big) =  \arg \max_f \; \mathrm{Re}\big({H'}^{(p)}_{if}\big), \quad f\in[F_K],
\end{equation}
estimated via the Modified Hadamard Test. Note that the index $f$ associated to the class labels, iterates along computational basis states of $Reg(k)$, which requires that $F_K\leq C$. \\

\begin{figure}[!htbp]
    \centering
    \includegraphics[width=0.85\linewidth]{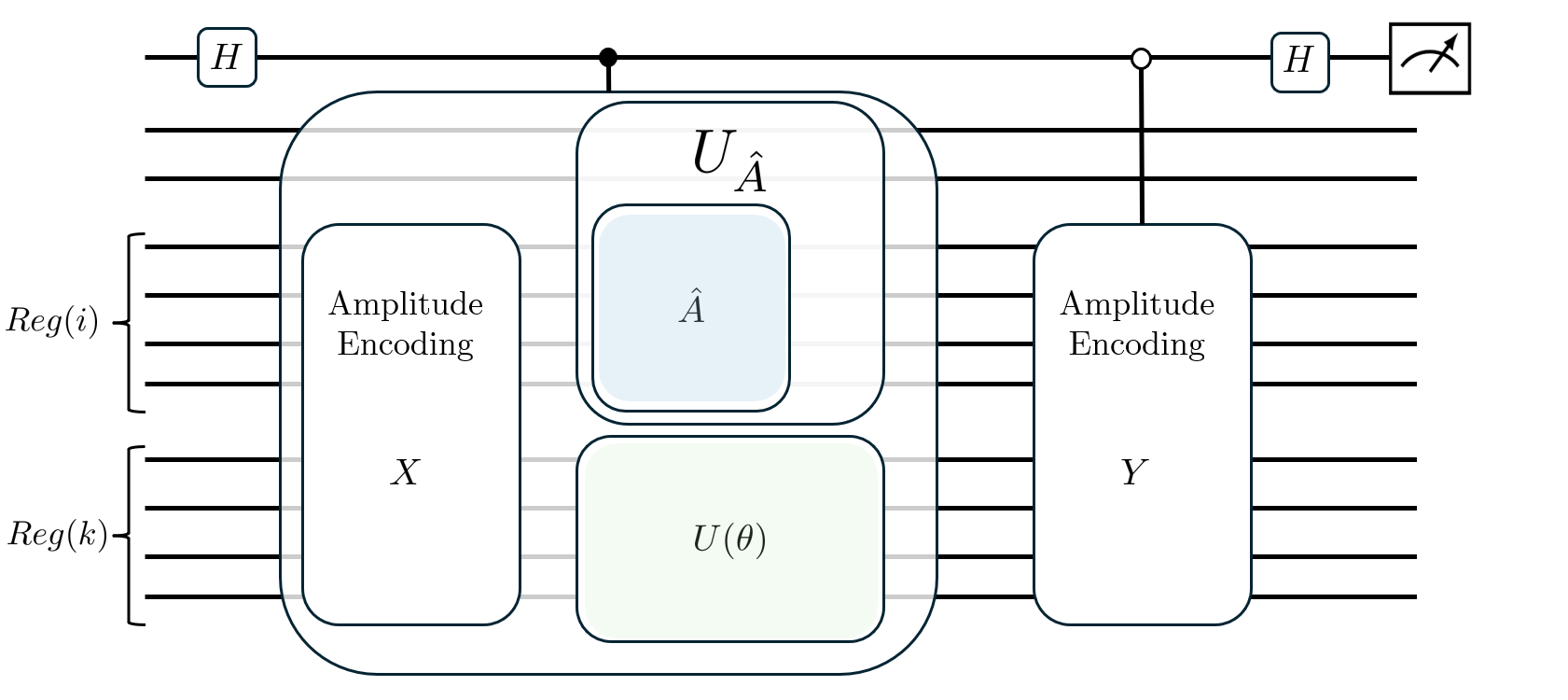}
    \caption{Quantum circuit implementing the QSGC network. The cost function is evaluated via the Modified Hadamard Test.}
    \label{fig:QSGC_circuit}
\end{figure}

Finally, the power $p$ of the normalized adjacency matrix is treated as a hyperparameter controlling the propagation of neighbor features through the graph. Fig.~\ref{fig:QSGC_circuit} shows the quantum circuit implementing the QSGC network.

\subsection{Linear Graph Convolutional Network}\label{sec: Linear Graph Convolutional Network}

Extending the \gls{sgc}, Pasa et al.~\cite{pasa_empowering_2024} proposed a more expressive variant, called \textit{\acrfull{lgc}}:
\begin{equation}
H = \sum_{i=0}^p \alpha_i L^i XW
\end{equation}
where $L\equiv I-\hat{A}$ is the normalized Laplacian matrix of the graph and $\alpha_i \in \mathbb{R}$ are treated as learnable parameters. The quantum version can be implemented as
\begin{equation}
    \ket{\psi_{\text{out}}} =  \left( U_{P(L)}  \otimes U(\theta) \right) \ket{\psi_X},
\end{equation}
where $P(L) = \sum_{i=0}^p \alpha_i L^i$ is a polynomial in the Laplacian matrix. By allowing multiple learnable weighting coefficients $\alpha_i$ for each $L^i$ up to order $p$, LGCs can represent a much richer class of graph convolution filters compared to SGCs. This increased expressivity enables LGCs to capture more complex graph structures and long-range dependencies, leading to improved performance on certain downstream tasks.\\ 

There are several ways to block-encode this polynomial. The simplest approach is to classically compute $P_K(L)$ and directly block-encode it. However, because the advantage of the quantum model arises when the graph is too large to handle classically, it is preferable in that regime to evaluate the polynomial within the quantum circuit. A second option is to implement a \textit{Linear Combination of Unitaries (LCU)} \cite{childs_hamiltonian_nodate}, but then each power $L^i$ must still be computed classically and block-encoded independently. We therefore opt to leverage \textit{Quantum Singular Value Transformation (QSVT)} \cite{gilyen_quantum_2019} to evaluate the polynomial directly in the quantum circuit.

The QSVT algorithm nevertheless imposes restrictions on the polynomials it can generate. First, it requires the polynomial to have a defined parity\footnote{The \textit{parity} of a polynomial implies that it only contains even or odd indices with non-zero terms}. Second, $\|P_K(L)\|_2\leq 1$ must be satisfied to ensure that $P_K(L)$ is a valid sub-unitary to block-encode. The first condition can be addressed by separating the polynomial into the sum of odd and even terms,
\begin{equation}
    P_K(L) = P_K^{\textrm{odd}}(L) + P_K^{\textrm{even}}(L),
\end{equation}
and combining both using an LCU with a single ancilla qubit. Regarding the second condition, even if a valid configuration (and scaling) of coefficients $\alpha_i$ is found, these coefficients are trainable parameters, so it is easy to violate this constraint during gradient descent. To mitigate such violations, we propose optimizing the Quantum Signal Processing phases $\phi_k$ instead:
\begin{equation}
    \Bigg[\prod_{k=1}^{\lfloor p/2\rfloor } \Pi_{\phi_{2k-1}} U_L^{\dagger} \tilde{\Pi}_{\phi_{2k}} U_L \Bigg]\Pi_{\phi_{p+1}} = \begin{pmatrix}
        P_K^{\textrm{even}}(L) & * \\
        * & *
    \end{pmatrix}
\end{equation}

\begin{equation}
    \tilde{\Pi}_{\phi_{1}} U_L\Bigg[\prod_{k=1}^{\lfloor( p-1)/2\rfloor} \Pi_{\phi_{2k}} U_L^{\dagger} \tilde{\Pi}_{\phi_{2k+1}} U_L \Bigg]\Pi_{\phi_{p+1}} = \begin{pmatrix}
        P_K^{\textrm{odd}}(L) & * \\
        * & *
    \end{pmatrix},
\end{equation}
where $U_L$ is the block-encoding of L. This way, even if the track of the produced polynomial is blurred, it is ensured that the outcome fulfills the required conditions. The application of $U(\theta)$ and the evaluation of the cost is performed in the same way as in SGC. Fig.~\ref{fig:QLGC_circuit} depicts the quantum circuit implementing the proposed algorithm.

\begin{figure}[H]
    \centering
    \includegraphics[width=0.9\linewidth]{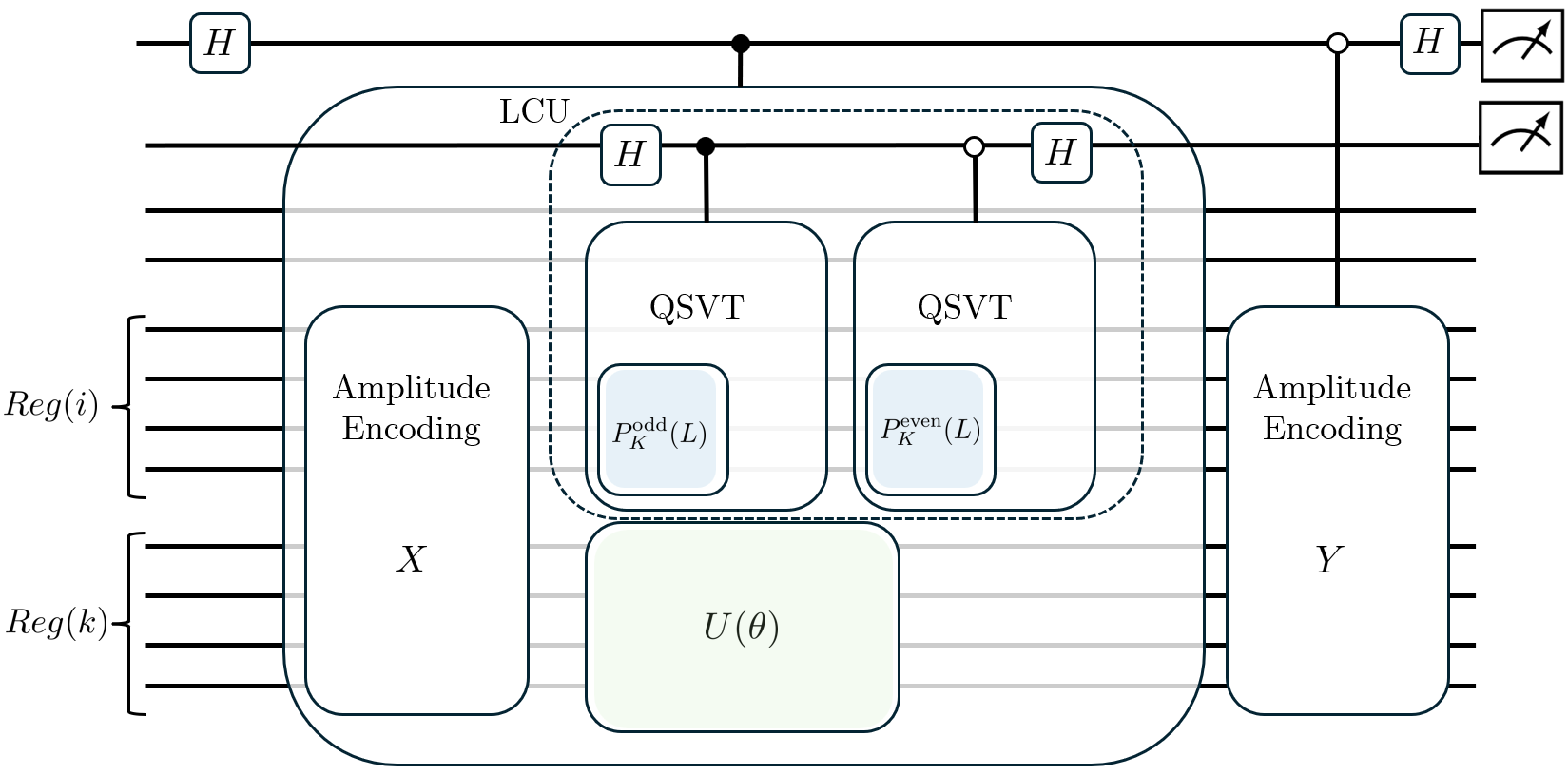}
    \caption{Quantum circuit implementing QLGC network. The polynomial of $L$ is split into odd and even parity polynomials implemented by QSVT and added by LCU. The cost function is evaluated by the Modified Hadamard test.}
    \label{fig:QLGC_circuit}
\end{figure}

\section{Experiments}\label{sec:experiments_qgnn}

\subsection{Experimental setup}

Two types of quantum models --- the \textit{Quantum Simplified Graph Convolutional (QSGC)} and \textit{Quantum Linear Graph Convolutional (QLGC)} networks --- are considered, together with their respective classical counterparts, termed \textit{CSGC} and \textit{CLGC}. Quantum models are implemented using \textit{Pennylane} \cite{bergholm_pennylane_2022} and run in a noiseless simulator, employing \textit{JAX} \cite{jax2018github} to accelerate the underlying linear algebra. Classical models are implemented using \textit{PyTorch} \cite{paszke_pytorch_2019}. For the QLGC experiments, we simulate the effective polynomial graph filter $P(\hat{A})$ resulting from the QSVT/LCU construction of Sec.~\ref{sec: Linear Graph Convolutional Network}, rather than a fully compiled, gate-level QSVT/LCU circuit. The ancilla and depth counts reported in Sec.~\ref{sec: Note on Complexity and Dequantization} refer to such a compiled circuit, not to the simulator used here.\\

\begin{table}[H]
\centering

\begin{tabular}{@{}lrrrcrrr@{}}
\toprule
\textbf{Name} & \textbf{Nodes} & \textbf{Features} & \textbf{Classes} & \textbf{Train / Val / Test} & Reference\\
\midrule
Karate Club & 34 & 34 & 4 & 4 / 10 / 20 & \cite{zachary_information_1977}\\
Cora & 2{,}708 & 1{,}433 & 7 & 140 / 500 / 1,000 & \cite{yang_revisiting_2016} \\
Texas & 183 & 1{,}703 & 5 & 109 / 37 / 37 & \cite{pei_geom-gcn_2020}  \\
Wisconsin & 251 & 1{,}703 & 5 & 150 / 50 / 51 & \cite{pei_geom-gcn_2020} \\
Cornell & 183 & 1{,}703 & 5 & 109 / 37 / 37 & \cite{pei_geom-gcn_2020} \\
\bottomrule
\end{tabular}
\caption{Summary of the datasets employed in the numerical experiments.}
\label{tab:Datasets}
\end{table}

All models are trained for the datasets listed in Tab.~\ref{tab:Datasets}. In the case of the graph \textit{Karate Club}, no features are provided originally for nodes. For this reason, nodes are one-hot-encoded and the data matrix $X=I_N$. In those cases, the amplitude-encoded quantum state can be efficiently prepared in $O(\log_2 N)$ depth and space \cite{shukla_efficient_2024}.\\

As variational ansatz in the quantum models, the widely spread \textit{Strongly Entangling Layers} \cite{schuld_circuit-centric_2020} architecture is employed, adding $3$ optimizable parameters per qubit and per ansatz layer. Those variational parameters are optimized with the classical optimizer \textit{Adam} \cite{kingma_adam_2017} using the same learning rate for all models, but ensuring their convergence to a minimum. A brief summary of the models is provided in Table \ref{tab:summary_models}.\\

\begin{table}[H]
    \centering
    \begin{tabular}{|c|c|c|c|c|}
\hline
 & \multicolumn{2}{c|}{Simplified} & \multicolumn{2}{c|}{Linear}\\
 \hline
Type & Quantum & Classical & Quantum & Classical \\
\hline
Name & QSGC & CSGC & QLGC & CLGC \\
\hline
Linear Transformation & \multicolumn{2}{c|}{$\hat{A}^p XW$} & \multicolumn{2}{c|}{$\sum_{i=0}^p\alpha_iL^iXW$}  \\
\hline
Parameters & $3\ell\lceil\log_2C\rceil $ & $CF_K$ & $3\ell\lceil\log_2C\rceil +p$ & $CF_K+p$\\
\hline
\end{tabular}
    \caption{Summary of the type of models implemented for the experiments. $p$ is treated as a hyperparameter, that controls the (maximum) power of the normalized adjacency matrix. $\ell$ is the number of layers in the quantum ansatz, $C$ is the number of features of nodes and $F_K$ is the number of classes. }
    \label{tab:summary_models}
\end{table}

\subsection{Hyperparameter selection:} We evaluate the classification performance of classical (CSGC, CLGC) and quantum (QSGC, QLGC) models across the five benchmark datasets given in Table \ref{tab:Datasets}. Table~\ref{tab:score_table} presents the comparative analysis, displaying mean accuracy, Macro F1, and Geometric Mean (G Mean) scores averaged over 5 random initializations. For each dataset and architecture we perform a grid search over the hyperparameter configuration $(p, \ell)$ and report the corresponding results. Here $p$ represents the propagation order and $\ell$ denotes the number of ansatz layers. The overall top performing model for each dataset is highlighted in \textbf{bold}, and the top performing quantum model is \underline{underlined}.\\

The selection of the $p$-hop propagation order ($p \in \{1, 2, 3, 4, 8, 16\}$) dictate the amount of global graph feature aggregation. The nonlinear progression ($1, 2, 3, 4, 8, 16$) efficiently spans both localized and deep structural regimes.
\begin{itemize}
    \item \textbf{Low ($p \in \{1, 2\}$):} Captures immediate local neighborhood interactions. Good for highly homophilous graphs (such as Cora), where direct neighbors share identical labels and higher orders risk over smoothing.
    \item \textbf{Intermediate ($p \in \{3, 4\}$):} Captures more global structure, such as cluster boundaries. Serves a transition between localized and global feature mixing.
    \item \textbf{High-order ($p \in \{8, 16\}$):} Captures long-range structural dependencies. This is crucial for heterophilous graphs (Texas, Wisconsin, Cornell), where immediate neighbors mainly belong to different classes, requiring distant node features to classify.
\end{itemize}

The selection of the quantum ansatz depth ($\ell \in \{1, 2, 4, 8, 12, 16\}$) determines the expressivity of the PQC by controlling the entangling capacity and parameter count of the model, while still being reasonably efficiently computable.
\begin{itemize}
    \item \textbf{Shallow ($\ell \in \{1, 2, 4\}$):} Provide low-parameter, low-entanglement baselines. Shallow circuits suppress over-parameterization, mitigating severe overfitting on small datasets (such as Karate Club and WebKB), and tend to avoid high loss concentration.
    \item \textbf{Moderate ($\ell = 8$):} A middle ground balancing sufficient expressivity and feature entanglement against against optimization complexity.
    \item \textbf{Deep ($\ell \in \{12, 16\}$):} Maximize expressivity and allows empirical identification of the depth threshold where noise, trainability bottlenecks, or capacity saturation begin to cause performance degradation.
\end{itemize}

We fix the learning rate to $\eta = 0.01$ for stable convergence. For the classical models we set the $L_2$ penalty to $5 \times 10^{-4}$ to avoid parameter drift during backpropagation. Finally, we initialize all PQC parameters uniformly from $\mathcal{U}(0, 2\pi)$ covering the complete domain of single-qubit Bloch sphere rotations. We perform this initialization, and the subsequent evaluation, across 5 distinct random seeds for each hyperparameter combination to ensure that results shown in the experimental figures reflect real generalization rather than anomalies.

\subsection{Numerical results}

The results of numerical experiments are presented in Tab. \ref{tab:score_table}, after an extensive training phase of various models with different hyperparameters and for different datasets. The distribution of the accuracy score of the best performing hyperparameter combinations is depicted in Fig.~\ref{fig:boxplotz}.

\begin{table}[!htb]
    \centering
    \begin{tabular}{llccccc}
    \toprule
    Dataset & Model & $p$ & $\ell$ & Accuracy & Macro F1 & G Mean \\
    \midrule
    \multirow{4}{*}{Karate} & QSGC & 3 & 8 & \textbf{\underline{0.912}} & 0.909 & 0.938 \\
    & CLGC & 4 & --- & 0.882 & 0.879 & 0.872 \\
    & QLGC & 2 & 16 & 0.882& 0.867 & 0.873 \\
    & CSGC & 4 & --- & 0.853 & 0.859 & 0.849 \\
    \midrule
    \multirow{4}{*}{Texas} & CLGC & 1 & --- & \textbf{1.000} & 1.000 & 1.000 \\
    & QLGC & 2 & 16 & \underline{0.942} & 0.869 & 0.622 \\
    & QSGC & 2 & 16 & 0.879 & 0.665 & 0.003 \\
    & CSGC & 2 & --- & 0.812 & 0.579 & 0.003 \\
    \midrule
    \multirow{4}{*}{Wisconsin} & CLGC & 3 & --- & \textbf{0.994} & 0.996 & 0.994 \\
    & QLGC & 3 & 16 & \underline{0.985} & 0.974 & 0.958 \\
    & QSGC & 2 & 16 & 0.953 & 0.937 & 0.910 \\
    & CSGC & 1 & --- & 0.718 & 0.612 & 0.538 \\
    \midrule
    \multirow{4}{*}{Cornell} & CLGC & 1 & --- & \textbf{1.000} & 1.000 & 1.000 \\
    & CSGC & 1 & --- & 0.977 & 0.968 & 0.962 \\
    & QLGC & 1 & 16 & \underline{0.897} & 0.876 & 0.850 \\
    & QSGC & 4 & 16 & 0.866 & 0.832 & 0.769 \\
    \midrule
    \multirow{4}{*}{Cora} & CSGC & 8 & --- & \textbf{0.788} & 0.779 & 0.795 \\
    & QSGC & 8 & 16 & \underline{0.678} & 0.675 & 0.710 \\
    & CLGC & 8 & --- & 0.658 & 0.645 & 0.668 \\
    & QLGC & 1 & 16 & 0.376 & 0.370 & 0.392 \\
    \bottomrule
    \end{tabular}
    \caption{
    Comparative analysis of classical (CSGC, CLGC) and quantum (QSGC, QLGC) Graph Convolutional Networks. Results display mean accuracy, Macro F1, and Geometric Mean (G Mean) scores across five datasets, averaged over 5 initializations. The best performing model for each dataset is in \textbf{bold} and best performing quantum model is \underline{underlined}. Parameters $p$ and $\ell$ denote the $p$-hop order and number of ansatz layers, respectively.}
    \label{tab:score_table}
\end{table}

\begin{figure}
    \centering
    \includegraphics[width=1\linewidth]{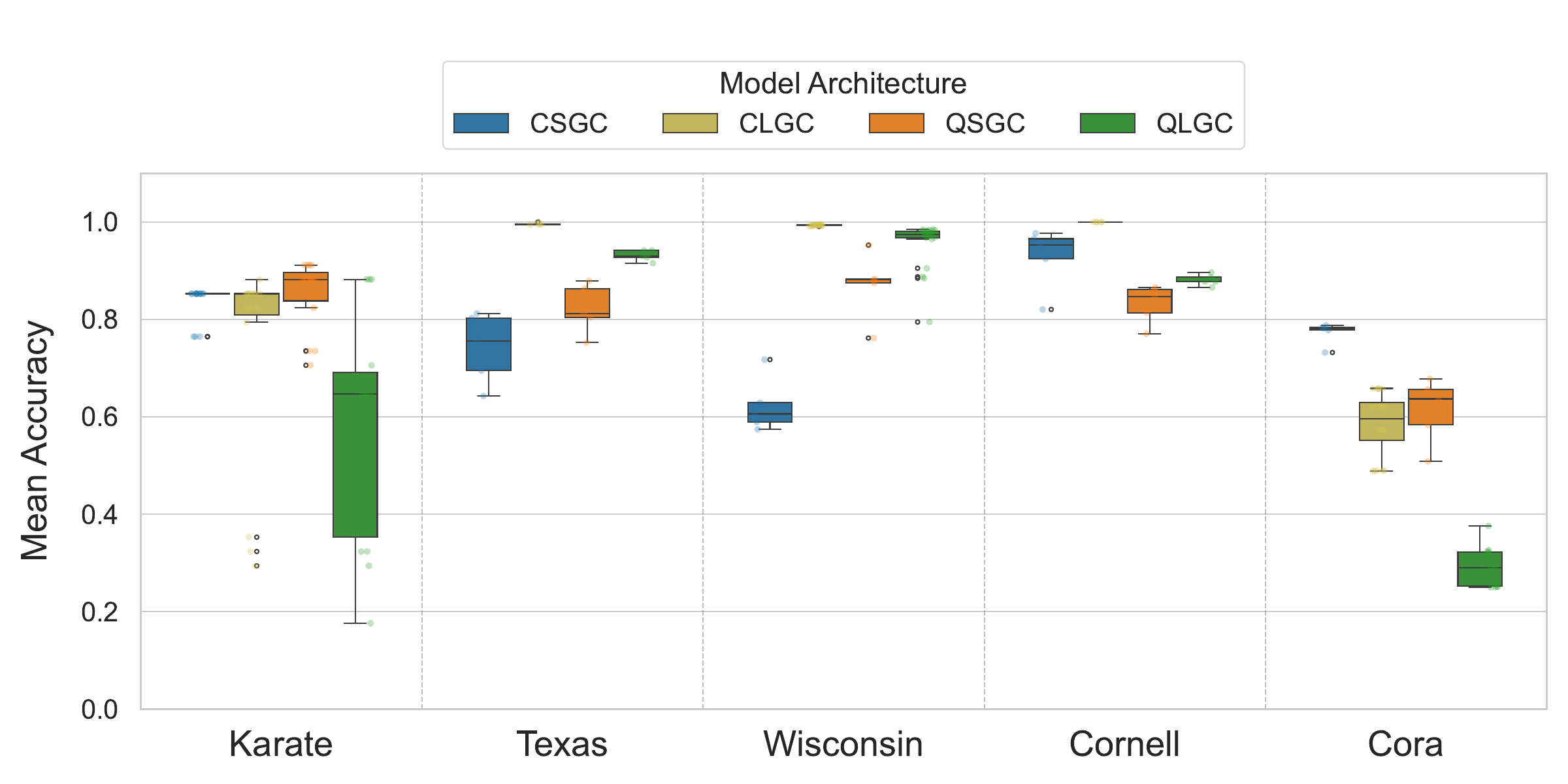}
    \caption{Distribution of model accuracy across all experimental runs for best performing hyperparameter configurations. The box plots compare, in order, classical (CSGC, CLGC) and quantum (QSGC, QLGC) architectures across five datasets. Error bars and data points indicate the stability and variance of each model type over multiple random initialization. While quantum models (orange, green) utilize fewer parameters they maintain competitive accuracies and stability compared to their classical counterparts (blue, yellow).}
    \label{fig:boxplotz}
\end{figure}

From these results, two main conclusions can be extracted. First, the proposed quantum models are able to learn, and therefore, the results validate the theoretical proposals. The second conclusion is that the quantum model can approximate the performance of the classical ones, even surpass it in some cases. This is remarkable, considering that the quantum models achieve this using substantially fewer trainable parameters in the variational component (see \ Table~\ref{tab:summary_models}).\\

We also note that some high-accuracy results are accompanied by a G Mean close to zero, e.g.\ QSGC and CSGC on Texas (G Mean $0.003$, Table~\ref{tab:score_table}). This indicates a collapse in recall for at least one minority class despite a competitive overall accuracy. We therefore report Macro F1 and G Mean alongside accuracy throughout, and interpret cases where accuracy and G Mean disagree cautiously rather than describing them simply as competitive.\\

Fig.~\ref{fig:acc_hyperparameters} shows the influence of hyperparameter in the performance of the models. The same results are depicted in a two-dimensional heatmap in Fig.~\ref{fig:heatmap}. It can be seen that the performance of the quantum models is improved with the number of layers in the ansatz. This happens because adding layers increases the expressivity of the parameterized quantum circuit and this adds degrees of freedom in the weight matrix that the model can generate in the graph convolution. When it comes to the power $p$ of the adjacency matrix (or the degree of polynomial), increasing its value does not always result in an improvement of the model. On the contrary, it starts with an initial increase, but it suffers a downgrade after an early maximum. This behavior, however, is dependent on the dataset and we believe it has a strong relation with the \textit{oversmoothing} \cite{keriven_not_2022} and \textit{overfitting} problems.  \\

\begin{figure}[H]
    \centering
    \includegraphics[width=1\linewidth]{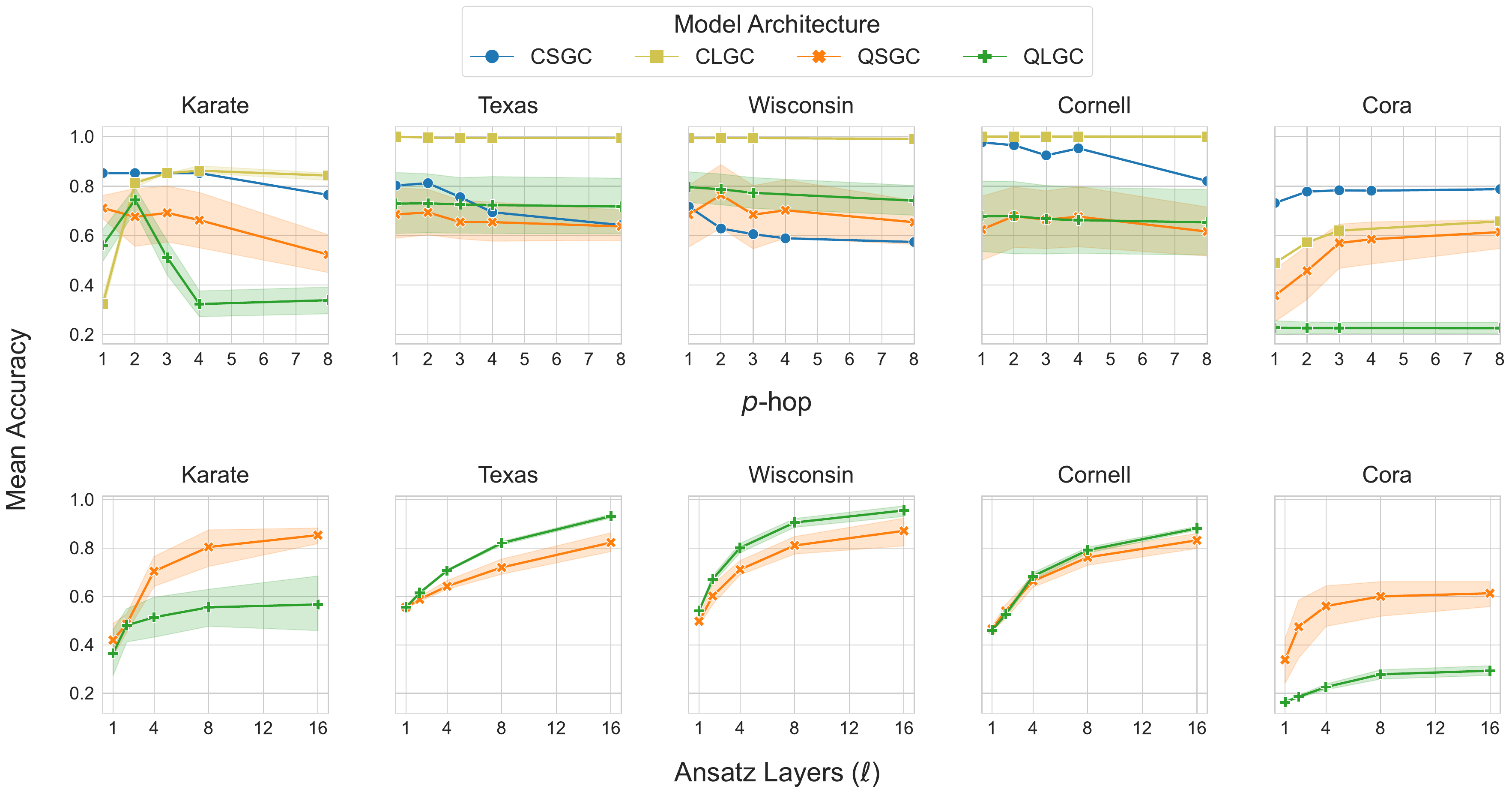}
    \label{fig:sensitivity}
    
 \caption{Sensitivity analysis of model accuracy with respect to different hyperparameters across five datasets. The top row displays the effect of the $p$-hop order, or rather the power of $\hat{A}$ in SGC models and the maximum degree of $P(L)$ for LGC models, on mean accuracy across all classical and quantum architectures. The bottom shows the influence of the number of ansatz layers $\ell$ on performance. Shaded regions represent the confidence interval (95\%) across many independent trials, highlighting the trade-off between structural information and circuit expressivity.}
\label{fig:acc_hyperparameters}
\end{figure}

\begin{figure}[H]
    \centering
    \includegraphics[width=1\linewidth]{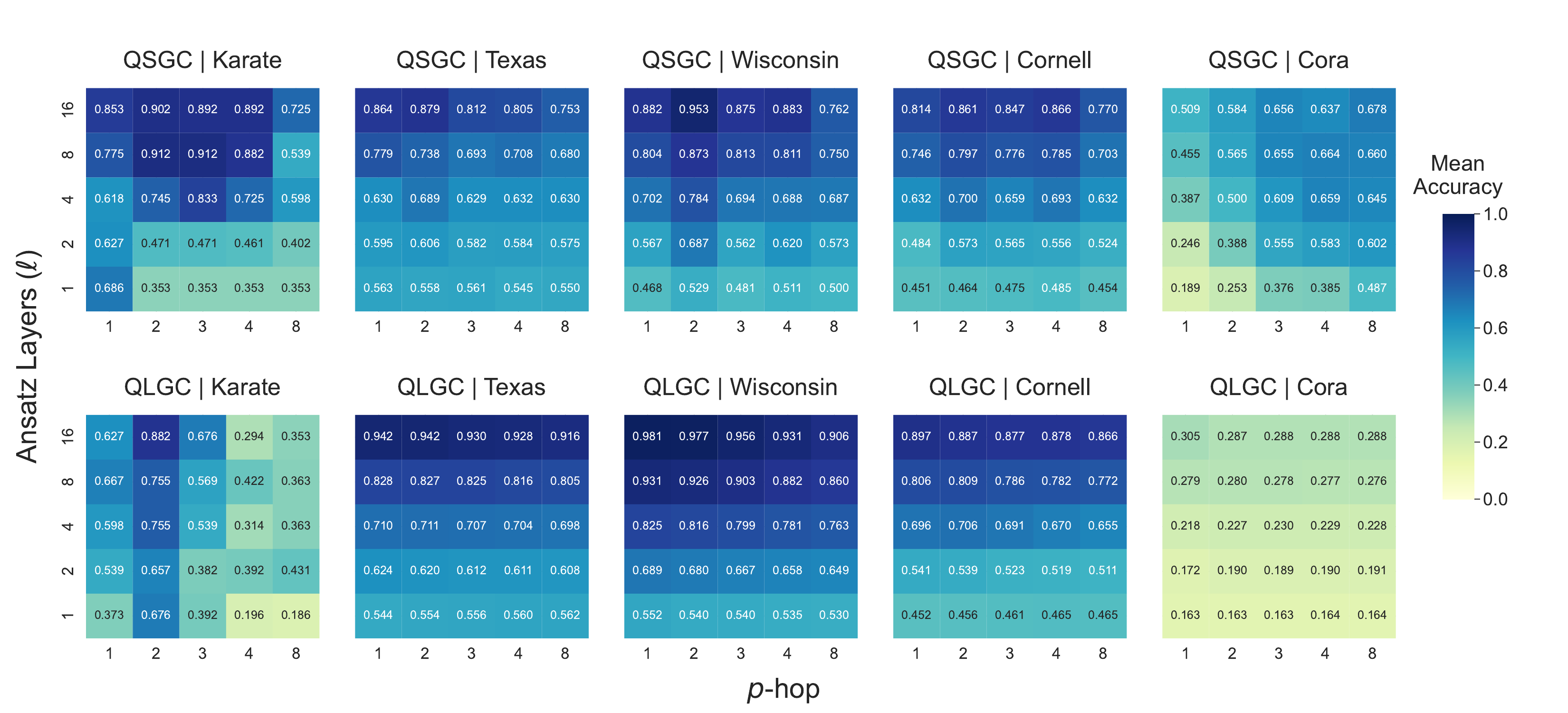}
    \caption{Grid of performance heatmaps across the joint hyperparameter space of $p$-hop order and ansatz layers ($\ell$). Each heatmap illustrates the mean accuracy for a specific quantum architecture (QSGC top and QLGC bottom) on each dataset. The color gradient represents the mean accuracy, where darker regions indicate optimal hyperparameter configurations. This visualization demonstrates that while increasing circuit depth ($\sim \ell$) consistently improves accuracy across most datasets, the optimal $p$-hop distance is dataset-dependent and prone to over-smoothing. QLGC (bottom) generally shows a broader stable region for higher accuracy than QSGC (top).}
    \label{fig:heatmap}
\end{figure}

To conclude with the numerical experiments, the problem of vanishing gradients, commonly known as \textit{barren plateaus} \cite{larocca_barren_2025} is studied. For this, various subgraphs of \textit{Karate} and \textit{Wisconsin} datasets are randomly sampled, in order to encode them in different qubit-sized circuits. For each number of qubits and layers in the quantum circuit, $1000$ random initializations of parameters are made, for which the gradient of the loss function is computed. Fig.~\ref{fig:BP} shows the scaling of the variance of those gradients depending on the number of qubits in the quantum circuit. The phenomenon of Barren Plateaus is characterized by an exponential decrease in the variance, but the results show that the proposed models reveal a more gentle descent. For the \textit{Wisconsin} dataset, it even seems that the variance converges into a constant value. These results suggest that the proposed quantum models might not suffer from Barren Plateaus, but analytical results are required to confirm this claim. In the following section we add a thorough trainability analysis for all QGCNs considered in this work.\\

\begin{figure}[H]
    \centering
    \includegraphics[width=0.83\linewidth]{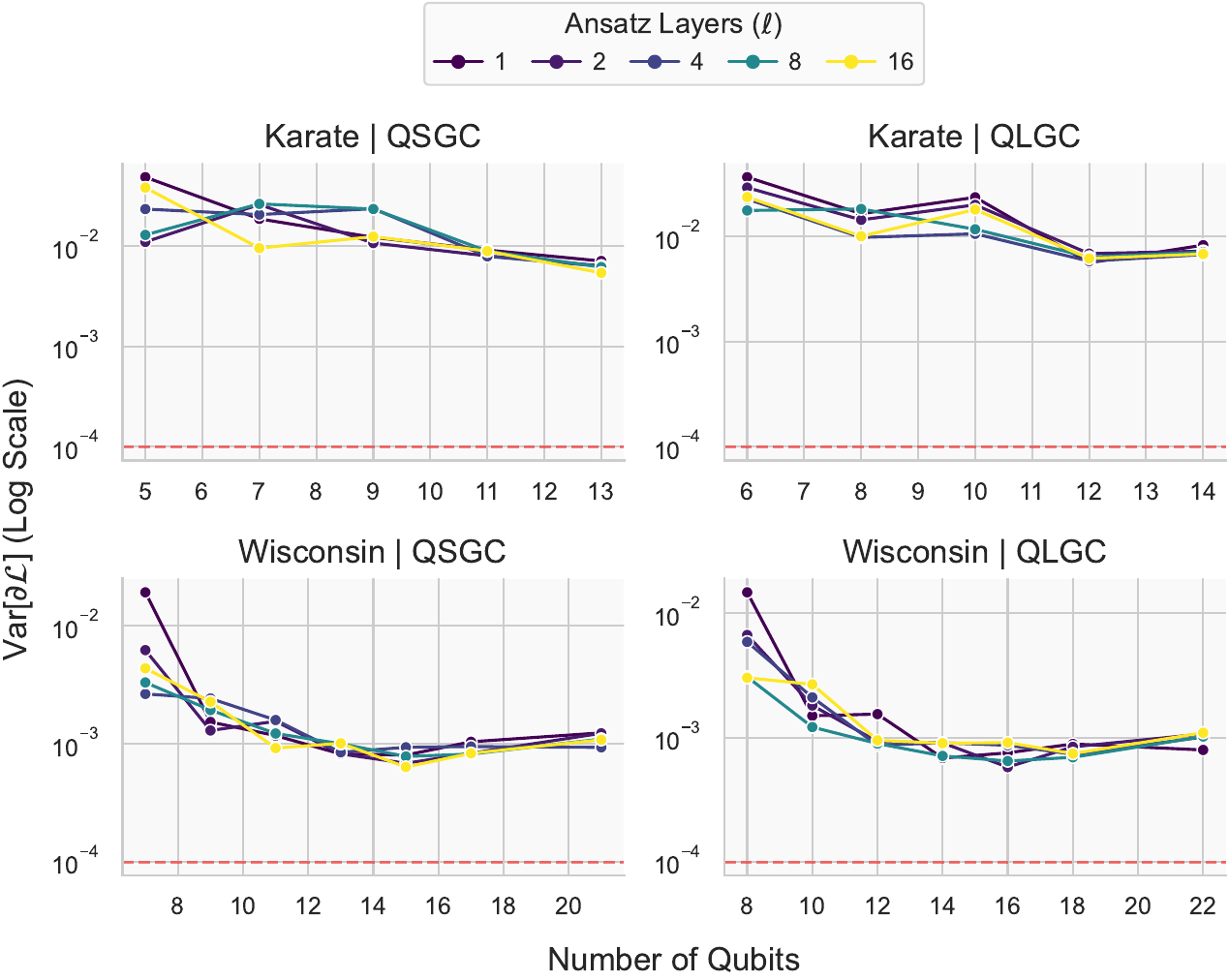}
    \label{fig:BP_karate}
     \caption{Variance of the cost function gradient $\text{Var}[\partial\mathcal{L}]$ as a function of the number of qubits for the Karate and Wisconsin datasets. The plots illustrate the scaling behavior of the gradient for different ansatz layers ($\ell$) for the QSGC (left) and QLGC (right) models. The dashed red line represents the precision limit for 10,000 shots, where the signal-to-noise ratio becomes insufficient for gradient-based optimization. The flattening of the variance for larger qubit counts suggest that circuit expressivity does not incur the scaling limits typically associated with random-like quantum landscapes, maintaining trainability.}
    \label{fig:BP}
\end{figure}

\section{Trainability and Gradient Variance Bounds}\label{sec: Trainability and Gradient Variance Bounds}

The concept of trainability is broad, but it often relates to characterizing the parameter-loss landscape with the goal of finding efficient, or identifying inefficient, ways to navigate it. In particular, the main obstacle hindering the trainability of variational quantum algorithms is the presence of vanishing gradients, a phenomenon labeled as the barren plateau problem. This is a landscape whose geometry becomes exponentially flatter with the system size. Namely,
\begin{equation}
    \Var(\partial \mathcal{L}/\partial \theta_\mu) \propto O\left(\frac{1}{b^n}\right),
\end{equation}
where $\mathcal{L}$ is the cost and its gradient is taken over any direction $\theta_\mu$, $b>1$ is some real number, and $n$ is the system size (in our case, the number of qubits).

In this section, the main results of a cost gradient analysis are provided (and expanded on in Appendix \ref{app:Detailed trainability calculations}). This is done by first isolating the one-layer QSGC model and, then by showing that extending the analysis to weighted graphs, all other architectures reduce to this model (done in Appendix \ref{app: Generalizing the trainability results across different QGCNs}).

\subsection{Trainability of a single-layer QSGC}

The investigated cost function is defined in Eq.~\ref{eq:inner_product_loss}. Encodings of the input data and target classes are given in Eq.~\ref{eq: input encoding} and Eq.~\ref{eq: target encoding}. Explicitly,
\begin{equation}\label{eq: explicit cost}
    \mathcal{L}_{QSGC} = -\frac{1}{N}\sum_{i,j=1}^{N}\hat{A}_{ij}\cdot \Re\left(\bra{\bold{y}_j}U(\thetaB)\ket{\bold{x}_i}\right). 
\end{equation}
Here, $\hat{A}$ is the normalized adjacency matrix with self-loops, $\ket{\bold{x}_i}$ and $\ket{\bold{y}_j}$ are embedded input and target states respectively\footnote{Thanks to the amplitude encoding method, the target states are just columns from the matrix $Y_{sf}$ in Eq.~\ref{eq: target encoding}.}. 
For the ansatz $U(\thetaB)$, a generic sequence of generalized rotations defined by 
\begin{equation}\label{eq: ansatz}
    U(\mathbf{\thetaB}) := \prod_{\eta=D}^1 \exp{(-i\theta_\eta V_\eta)} \cdot W_\eta
\end{equation}
is chosen, where $V_\eta^2=\mathds{1}$ and $D$ is the number of weights $\theta_\eta$.

\paragraph{Remarks:} The trainability calculation below is done for a fully-supervised classification task (hence assuming each node has a target state $\ket{\bold{y}_j}$). A simple generalization for semi-supervised tasks is outlined below.\\

\noindent
As derived in the Appendix, the explicit form of the cost function variance is
\begin{equation}\label{eq: variance}
    \Var(\partial_\mu \mathcal{L}_{QSGC}) = 
    \frac{1}{2^{D+1} N^2}\sum_{\Vec{\alpha}\in\{0,1\}^D} 
    \left[(-1)^{ w(\Vec{\alpha})}\Re\left(Z_{\Vec{\alpha}}^2\right) + \left|Z_{\Vec{\alpha}}^2\right|\right],
\end{equation}
where $\Vec{\alpha}$ sums over all $D$-bit strings and the argument 
\begin{equation}\label{eq: Z_a}
    Z_{\Vec{\alpha}} = \sum_{i,j=1}^{N}\hat{A}_{ij} \bra{\bold{y}_j} \underbrace{\prod_{\eta=D}^1 V_\eta^{\alpha_\eta} W_\eta}_{D_{\Vec{\alpha}}} \ket{\bold{x}_i}.
\end{equation}

\subsubsection{Assumptions}

Whilst Eq.~\ref{eq: variance} is non-trivial, one only needs to determine how the variance scales with the system size. That is, the number of nodes $N$, the edge density (average node degree) $k$, and the number of features per node, $C$. For this matter, three assumptions are made. 

Firstly, the graph is approximated to be $k$-regular. Due to the self-loops, it follows that every edge brings an equal contribution through $\hat{A}= 1/(k+1)$. This reduces Eq.~\ref{eq: Z_a} to $Z_{\Vec{\alpha}}=(k+1)^{-1}\sum_\text{edges}\bra{\bold{y}_j}D_{\Vec{\alpha}}\ket{\bold{x{_i}}}$.

The second assumption is that the whole Hilbert space of the data register is used to embed input and target states, $\ket{\bold{x}_i}$ and $\ket{\bold{y}_j}$. Namely, 
the number of features $C\equiv N_k$, where $N_k=2^{n_k}$ is the Hilbert space dimension of register $Reg(k)$.\footnote{In Appendix \ref{What is $C < N_k$?}, the assumption to any $C\in [N_k/2+1, N_k]$ is loosened by padding the feature vectors with zeros and applying a random permutation to the components. This action turns out to preserve the trainability results described below.}

The final approximation is an averaging over the possible input and target states (labelled ``i,t''). Namely, the distribution of variances over these states is described as a spread centered in the expected variance:
\begin{equation}
    \Var(\partial_\mu \mathcal{L}_{QSGC}) \approx \mathbb{E}_{\text{i,t}}[\Var(\partial_\mu \mathcal{L}_{QSGC})] \pm \Delta _{\text{i,t}}[\Var(\partial_\mu \mathcal{L}_{QSGC})].
\end{equation}
This approximation is reasonable for discrete, classical data as opposed to Haar-random quantum data which concentrates the loss around zero \cite{Thanasilp_2023, Cerezo_2021, Holmes_Sharma_Cerezo_Coles_2022}.

Furthermore, the analysis follows the experimental setting from the sections above: the target states (class labels) are one-hot encoded and the input states are multihot-encoded as
\begin{equation}\label{eq: target input}
    \begin{split}
        \mathbb{E}_\text{i,t}[\ket{\bold{y}_i}] =& \ket{\bold{e}_a}, \\
        \mathbb{E}_\text{i,t}[\ket{\bold{x}_i}] =& \frac{1}{\sqrt{ w(\vec{\gamma})}}\sum_{b=1}^{N_k}\gamma_b\ket{\bold{e}_b}.
    \end{split}
\end{equation}
These are inserted in the variance calculation, with the condition that repeated nodes map to identical basis states $\ket{\bold{e}_a}$ used to average the inputs and targets. Whilst the algebra becomes more tedious (Appendix \ref{app: Expected values over the input and target states}), looking at the expected gradient variance sheds light on the behavior of relevant quantities: $N$, $k$, and $C$.

\subsubsection{Bounds and expected behavior}

With the assumptions above in mind, an upper bound of the expected variance is found:
\begin{equation}\label{eq: upper bound}
    \mathbb{E}_{\text{i,t}}[\Var(\partial_\mu \mathcal{L}_{QSGC})]_\text{upper} = \Theta(C^{-1}).
\end{equation}
Furthermore, this bound is attained with the Instantaneous Quantum Polynomial (IQP) Ansatz, which is defined as
\begin{equation}
    U(\theta)=H^{\otimes n_k}\exp\left(
    \sum_{j=1}^{n_k} i\theta_j Z_j + \sum_{j<k} i\theta_{jk} Z_jZ_k
    \right)H^{\otimes n_k}.
\end{equation}
For this PQC, the commuting $Z$-Pauli gates, whilst limiting the expressivity, help reach the exact upper bound calculated in Eq.~\ref{eq: app upper bound} multiplied by a factor of $1/2$.

For the lower bound, it can be observed (and proven in Lemma \ref{lemma: app alternating sum}) that for each ansatz resulting in 
\begin{equation}
    \mathbb{E}_\text{i,t}\left[\sum_{\Vec{\alpha}\in\{0,1\}^D} 
    (-1)^{ w(\Vec{\alpha})}\Re\left(Z_{\Vec{\alpha}}^2\right)\right]<0,
\end{equation}
changing the gate $W_\lambda\to V_\lambda W_\lambda$ in the ansatz flips the sign of the alternating sum above, which increases the expected variance. Because the addition of $V_\lambda$ is a unitary process, this map does not diminish expressivity. It is straightforward to check the sign of this sum (numerically, using e.g.~a tensor network approximation), which means one can easily triage the ansätze and correct those with a negative sign. Such ansätze are labeled ``inefficient''. Therefore, it is safe to consider only the correct (``efficient'') ansätze when computing our lower bound. This is
\begin{equation}\label{eq: lower bound}
    \mathbb{E}_{\text{i,t}}[\Var(\partial_\mu \mathcal{L}_{QSGC})]_\text{lower} = \Theta(N^{-1}C^{-1}).
\end{equation}

Now that the bounds are established, it is worth asking the question: what is the distribution of the expected variances with respect to the choice of ansatz $U(\thetaB)$? As seen in the calculations in Appendices \ref{app: Upper bound} and \ref{app: Lower bound}, a strong constraint imposed on the ansatz is that $|\bra{\bold{1}}D_{\Vec{\alpha}}\ket{\bold{1}}|$ is either 1 or 0, respectively, for each bound. In reality, by taking an arbitrary ansatz, this quantity will suffer from the \textit{curse of dimensionality} and will scale inversely with the size of the states' Hilbert space. Therefore, by choosing $|\bra{\bold{1}}D_{\Vec{\alpha}}\ket{\bold{1}}|\sim O(C^{-1})$, the expected behavior (derived in Appendix \ref{app: Note on the curse of dimensionality}) turns out to be 
\begin{equation}\label{eq: expected trend}
    \mathbb{E}_{\text{i,t}}[\Var(\partial_\mu \mathcal{L}_{QSGC})]_\text{expected} = \Theta(C^{-2}+N^{-1}C^{-1}).
\end{equation}

\subsubsection{Interpreting the results}

What do Eqs.~\ref{eq: upper bound}, \ref{eq: lower bound}, and \ref{eq: expected trend} say about the single-layer QSGC trainability? On the one hand, the lower bound suggests the presence of the barren plateau as the variance decays exponentially with the number of qubits in registers $Reg(i)$ and $Reg(k)$, which is $\lceil\log(NC)\rceil$.\footnote{This excludes the number of ancillas which can themselves depend linearly as $O(NC)$. However, employing this number of ancillas defeats the purpose of a variational quantum algorithm altogether.} On the other hand, both the upper bound and expected behavior manage to avoid this by having a term that is independent of $N$. 

This means that by restricting the type of tasks considered for the quantum model, the barren plateau phenomenon can be avoided altogether. Specifically, if one focuses on the graph scaling and makes the number of features scale slowly with $N$, as e.g.~$C=O(\polylog(N))$, the trainability may be preserved. This choice fits the purpose of the GCNs considered --- if $C$ were comparable to $N$, a learning model would not be able to generalize correctly anyway.

\paragraph{Semi-supervision.} The calculation was performed for fully supervised tasks to match the experimental set-up. For semi-supervised approaches, the only difference mathematically is the collapse of some of the targets $\ket{\bold{y}_j}$ to zero for unlabeled states. By accounting for this in the expected value calculation, all expected variances gain an extra factor $M/N$, where $M=|Y_L|$ is the number of labeled nodes, out of all $N$ nodes. Therefore, if $M\ll N$, this may impact the results. Intuitively, the less nodes are labeled, the more the training process has to account for this lack of information. The model is still efficiently trained if the fraction $M/N=\Theta(1)$ --- for example, a constant $10\%$ of the nodes are labeled.

\subsection{Generalizing the results to all QGCNs}

In order to extend the single-layer QSGC conclusions, the results above are shown to hold for graphs with weighted edges. One can redefine the adjacency matrix of such a graph to be $\hat{A}_{ij}=w_{ij}$, where $\langle i,j\rangle$ form an edge with weight $w_{ij}$. Furthermore, the $k$-regularity assumption is loosened to a weighted $k$-regularity: edges connected to a node must sum up their weights to $k$. In this case, Appendix \ref{app: Graphs with weighted edges} shows that the results on the bounds and expected behavior for arbitrary ansätze are identical to the ones above.

In addition, Appendix \ref{app: Weighted graphs} highlights that multiplying (or taking a convex sum between) two commuting normalized adjacency matrices with self-loops $\hat{A}$ and $\hat{B}$ lead to a valid normalized adjacency matrix with self-loops $\hat{C}$ of a new graph. Furthermore, if the initial graphs are weighted $k$-regular, then the new graph is also weighted $k$-regular. An immediate corollary is that any polynomial of $\hat{A}$, denoted $P(\hat{A})$ can be seen as some adjacency matrix $\hat{B}$ of a new weighted $k$-regular graph. Thus, all QGCN models reduce to a single-layer QSGC, making the results above valid for any QGCN.

\paragraph{Note:} Models such as $p$-th order QSGCs (QLGCs) are still needed because running $\hat{A}^p$ ($P(\hat{A})$) on quantum circuits is more scalable than computing it classically, complexity-wise. However, on current devices, the overheads may invert the statement.

\section{Note on Complexity and Dequantization}\label{sec: Note on Complexity and Dequantization}

The QGCN complexity results in \cite{liao_graph_2024} suggest a trade-off between the algorithm qubit number and circuit depth. These complexities are respectively proportional to the space and time complexities of the forward pass. This section adds that the backpropagation (using the \textit{parameter-shift rule} \cite{Mitarai_Negoro_Kitagawa_Fujii_2018, Banchi_Branford_Waghela_2025}) contributes with a query complexity of $O(D)$ ($O(D+p)$) for the QSGC (QLGC) models. Furthermore, the \textit{shot noise} brings an additional query complexity, inversely proportional to the cost gradient variance established in the previous section \cite{Sweke_Wilde_Meyer_Schuld_Faehrmann_Meynard-Piganeau_Eisert_2020, Gu_Lowe_Dub_Coles_Arrasmith_2021}. By considering the expected behavior of an arbitrary ansatz, the shot-noise contributes a factor of $O(1/(C^{-2}+N^{-1}C^{-1}))\approx O(C^2)$ for $C\ll N$.

Finally, the most significant contribution to the final space allocation and runtime is given by the \textit{input problem}: as one uploads classical data onto the quantum computer using the amplitude and block encoding schemes, the circuit needs to be compiled before it is run. This compilation time will require at least storing and reading each of the classical inputs. As such, the compilation complexities alone end up matching the classical GCN complexities, hence canceling any desirable advantage. The only way to bypass this problem is to assume that the circuits are compiled efficiently using an oracle, or QRAM. For this reason, the classical and quantum GCN algorithms are first compared in isolation, assuming an efficient data loading process for both. Then, a dequantized QGCN model is introduced and regimes where it outperforms the quantum model are found . The full complexity study is done in Appendix \ref{app: Complexity and dequantization expanded}.

\subsection{Complexities and regimes of advantage}

The full space and time complexities of the classical and quantum models are given below:
\begin{align}
    S_{CGCN} =& O(Nk+NC+C^2), \label{eq: space CGCN}\\
    T_{CGCN} =& O(NkC+NC^2), \label{eq: time CGCN}\\
    S_{QGCN} =& O(n_{anc}+n_{anc}'), \label{eq: space QGCN}\\
    T_{QGCN} =& \tilde{O}\left(\underbrace{N\frac{\log(n_{anc})}{n_{anc}}}_{\text{amplitude enc.}} + \underbrace{N\log(N) \cdot s\log(s)\frac{\log(n'_{anc})}{n'_{anc}}}_{\text{block enc.}}+\underbrace{n'_{anc}}_{\text{QSVT}}\right)\cdot \underbrace{O(D)}_{\text{backprop.}}\cdot \underbrace{O(C^2)}_{\text{shots}}. \label{eq: time QGCN} 
\end{align}
Here, $n_{anc}$ and $n'_{anc}$ are respectively the numbers of ancillary qubits required for the amplitude and block encoding gates. The number $s$ is the sparsity of the adjacency matrix $\hat{A}$, and $k$ is the average degree.\footnote{All fixed precision factors are omitted, and the fixed QGCN order $p$ is absorbed into the Big-$O$ notation. The Big-$\tilde{O}$ notation also ignores doubly logarithmic functions.} The number of ansatz weights $D$ is considered to be varying much more slowly than $C$, and can thus be ignored as $D\ll C\ll N$. Additionally, the QSGC models omit the QSVT runtime term, which is specific to the linear graph convolution. The numbers of ancillas are bounded as follows:
\begin{align}
    \Omega(\log(NC))\leq & n_{anc} \leq O(NC), \label{eq: nanc bounds}\\
    \Omega(\log(N))\leq & n'_{anc}\leq O(N\log(N) \cdot s\log(s)). \label{eq: nancprime bounds}
\end{align}

\paragraph{Regimes of advantage.} There is freedom in choosing the number of ancillary qubits, $n_{anc}$ and $n'_{anc}$. More qubits can perform the amplitude and block encoding with a significantly smaller depth, and vice-versa. 

For the QSGC models, this gives a trade-off between the overall space and time complexities, and effectively splits the advantage in three regimes: \textbf{(1)} By choosing $O(\polylog(N))$ ancillary qubits, the QSGC gains an \textit{exponential space reduction} and a polylogarithmic speed-up w.r.t.~its classical counterpart. \textbf{(2)} Choosing a sublinear number of ancillas $n_{anc}=O(N^\alpha)$ and $n'_{anc}=O(N^{\alpha'})$ with $\alpha,\alpha'\in(0,1)$ leads to \textit{polynomial advantage in both space and time complexities}. \textbf{(3)} Lastly, by choosing $O(N/\polylog(N))$ ancillas, we flip the advantage in the first case --- now having an \textit{exponential speed-up} with a polylogarithmic space reduction.

For the QLGC models, the additional runtime of the QSVT when implementing the polynomial $P(\hat{A})$ limits the trade-off that the QSGC benefits from. Namely, the runtime in this case is lower-bounded by 
\begin{equation}
    \left(T_{QLGC}\right)_{\text{min}} = \tilde{O}\left(
    NC^3\frac{\log n_{anc}}{n_{anc}} + C^2\sqrt{Ns \log^2 N \log s}
    \right).
\end{equation}
As this is at least sublinear in $N$, the exponential speed-up case \textbf{(3)} given above is lost. However, regimes \textbf{(1)} and \textbf{(2)} are maintained.

\subsection{Low-rank simulability}

To even out the QRAM assumption, Ewin Tang's breakthrough algorithm \cite{Tang_2019, Chia_Gilyen_Li_Lin_Tang_Wang_2020} constructs an efficient classical oracle (a tree data-structure) for an $M$-by-$N$ matrix, so that it is not needed to query all elements of the matrix. Instead, a sampling distribution is created in $O(\polylog(MN)\poly(r,1/\varepsilon))$ time that can reproduce a rank-$r$ approximation of the matrix with precision $\varepsilon$. This method is a baseline for dequantization of algorithms that claim quantum advantage. As derived in Appendix \ref{app: Low-rank simulability}, this algorithm is applied to find simulability conditions for the QGCN. An estimator for the cost function
\begin{equation}
    \mathcal{L}_{QGCN}^\text{Deq}= \Tr\left(Y^T\cdot P(\hat{A})\cdot X \cdot U(\thetaB)\right)
\end{equation}
is found using the sampling method, where $P(\hat{A})$ is either the polynomial of a QLGC or $\hat{A}^p$ of a QSGC. As the matrices $X$, $Y$, and $\hat{A}$ are large, efficient data structures are assumed for each of them. Then, by sampling sets of rows and columns from the adjacency matrix, one can estimate the diagonalization of $\hat{A}$ and use it to approximate the polynomial $P(\hat{A})$. To this end, the final trace is computed by also sampling the input and target matrices and computing an estimator of the cost $\mathcal{L}$. A full description of the dequantized algorithm is done in Appendix \ref{app: Low-rank simulability}. The resulting space and time complexities (including exact exponents of the sampling method) are:
\begin{align}
    S_{QGCN}^\text{Deq} &= O(r^{11}+C^2), \label{eq: s deq} \\
    T_{QGCN}^\text{Deq} &= O(\log^2(N)\cdot r^{16.5}+Cr^{6.5}). \label{eq: t deq}
\end{align}
Comparing them to the quantum model's complexities, the dequantized model gains a speed-up over the quantum one for $r=O(N^{0.11})$, and achieves an additional space suppression for $r=O(N^{0.09})$. 

As $r$ is a truncated rank, the QGCN models considered may be dequantized if the normalized adjacency matrix with self-loops of the graph considered, $\hat{A}$, has a stable rank
\begin{equation}
    r_\text{stable} \equiv \frac{\sum_i\sigma_i^2}{\sigma_1^2} \approx O(N^{0.11}).
\end{equation}
Here, $\sigma_i$ are the singular values of $\hat{A}$, in descending order. Because the sparsity $s$ of this adjacency matrix is lower-bounded by the fraction $N/r$ (derived in Lemma \ref{lemma: rank vs sparsity}), then a high rank implies a sparse graph. Hence, the quantum model may still have an advantage for either \textit{sparse enough graphs}, or for \textit{dense, unstructured graphs} (such that $r_\text{stable}$ is large enough).

\section{Conclusion}\label{sec:conclusion_qgnn}

In this article, a family of \gls{qgnn} architectures based on graph convolutions has been introduced and studied from a theoretical and a numerical perspective. Starting from the classical formulation of graph convolutional models, quantum counterparts were proposed by replacing the trainable linear maps with parameterized quantum circuits acting on amplitude-encoded node features, i.e.\ the PQC replaces the classical weight matrix $W$ rather than implementing an arbitrary rectangular linear map. This construction preserves the general message-passing intuition of graph learning while exploiting the compact representation and expressive power of quantum states. In particular, two concrete realizations were analyzed: the QSGC, as the quantum analogue of Simplified Graph Convolutions, and the QLGC, inspired by linear graph convolutional models.\\

The main theoretical motivation for these models lies in their potential computational advantage. As discussed throughout the article, the quantum implementation can encode high-dimensional node features using a logarithmic number of qubits and can generate effective weight matrices with a number of trainable parameters that is significantly smaller than in the classical case. Under suitable assumptions on state preparation and data access, this leads to favorable time- and space-complexity scalings for large graphs. We stress that this potential advantage concerns primarily the number of trainable parameters and the asymptotic scaling of the algorithm; it does not by itself constitute a demonstrated end-to-end quantum advantage, which would additionally require efficient state preparation, block encoding, and readout. Although such assumptions are nontrivial and currently hardware-dependent, they provide a clear indication that graph learning is a natural candidate for quantum enhancement, especially in regimes where memory consumption becomes the main bottleneck for classical methods.\\

The numerical experiments support the practical relevance of these proposals. Both quantum architectures were shown to be trainable and capable of achieving competitive results on several benchmark datasets for semi-supervised node classification. Even when the classical models remain the strongest baseline overall, the quantum models often approach their performance closely and, in some cases, surpass a classical counterpart despite using a more compact parametrization. The hyperparameter study further indicates that circuit depth improves the expressivity of the models, while increasing the graph propagation order yields the expected trade-off between richer structural information and the risk of oversmoothing. In addition, the empirical analysis of gradient concentration suggests that the proposed architectures do not exhibit a severe barren plateau behavior in the explored regimes, which is an encouraging sign for their trainability.\\

On the theoretical side, these numerical results are validated by the cost gradient analysis presented above: the model can be made trainable independently of the graph size or structure. This is valid only if the feature representation is reasonably compact, as the variance is expected to scale inversely with the number of features per node $C$. Thanks to the entanglement structure between the address and data registers, the graph scaling can thus be decoupled from the PQC processing the features. 

Additionally, the QGCN models are shown to achieve better complexities than their classical counterparts in certain regimes. Depending on the allocation of ancillary qubits, this can be an exponential space reduction with a small time speed-up, or vice versa, or a balanced polynomial space-time advantage. However, as the model assumes an efficient oracle to avoid the input problem, the situation is not as fortunate in reality. To this end, the low-rank simulablity analysis provided indicates that the model can be dequantized for structured graphs. \\

Nevertheless, the work shows that quantum graph learning is a promising research direction. The proposed models establish a concrete bridge between the theory of graph neural networks and the framework of variational quantum circuits, offering architectures that are not only conceptually well motivated but also empirically viable. Future developments may include deeper analyses of expressivity and generalization, the extension to attentional or message-passing quantum architectures, the incorporation of noise-aware designs for near-term devices, and the study of real quantum implementations. Trainability-wise, stricter bounds can be found, or alternative encoding methods can be compared to further optimize the model. Altogether, the present results position \gls{qgnn}s as an appealing avenue for exploring how quantum computing may contribute to scalable machine learning on structured data.\\

\section*{Acknowledgements}
Many thanks to Léo Monbroussou for helpful discussions, as well as observations pointed out in earlier drafts of of the trainability and classical simulability sections. This work was supported by the Engineering and Physical Sciences Research Council [grant number EP/Y035046/1].


\newpage
\printbibliography

\newpage
\appendix
\renewcommand{\appendixpagename}{\textbf{APPENDIX}}
\begin{center}  
\appendixpagename
\end{center}

\input{appendix}

\end{document}

%% file: glossary.tex
\newacronym{ml}{ML}{Machine Learning}
\newacronym{qml}{QML}{Quantum Machine Learning}
\newacronym{pac}{PAC}{Probably Approximately Correct}
\newacronym{nisq}{NISQ}{Noisy Intermediate-Scale Quantum}
\newacronym{hhl}{HHL}{Harrow-Hassidim-Lloyd}
\newacronym{qpe}{QPE}{Quantum Phase Estimation}
\newacronym{qft}{QFT}{Quantum Fourier Transform}
\newacronym{lcu}{LCU}{Linear Combination of Unitaries}
\newacronym{qsvt}{QSVT}{Quantum Singular Value Transform}
\newacronym{qubo}{QUBO}{Quadratic Unconstrained Binary Optimization}
\newacronym{pqc}{PQC}{Parameterized Quantum Circuit}
\newacronym{vqa}{VQA}{Variational Quantum Algorithm}
\newacronym{qaoa}{QAOA}{Quantum Approximate Optimization Algorithm}
\newacronym{vqe}{VQE}{Variational Quantum Eigensolver}
\newacronym{qnn}{QNN}{Quantum Neural Network}
\newacronym{vqc}{VQC}{Variational Quantum Classifier}
\newacronym{qram}{QRAM}{Quantum Random Access Memory}
\newacronym{dla}{DLA}{Dynamical Lie Algebra}
\newacronym{zne}{ZNE}{Zerno Noise Extrapolation}
\newacronym{gpu}{GPU}{Graphics processing unit}
\newacronym{qpu}{QPU}{Quantum Processing Unit}
\newacronym{bps}{BP}{Barren Plateaus}
\newacronym{gqml}{GQML}{Geometric Quantum Machine Learning}
\newacronym{eqnn}{EQNN}{Equivariant Quantum Neural Networks}
\newacronym{etf}{ETF}{Equiangular Tight Frame}
\newacronym{cnn}{CNN}{Convolutional Neural Network}
\newacronym{hea}{HEA}{Hardware-Efficient Ansatz}
\newacronym{gnn}{GNN}{Graph Neural Networks}
\newacronym{qgnn}{QGNN}{Quantum Graph Neural Network}
\newacronym{sgc}{SGC}{Simplified Graph Convolution}
\newacronym{lgc}{LGC}{Linear Graph Convolution}
\newacronym{gcn}{GCN}{Graph Convolutional Networks}
\newacronym{ntca}{NTCA}{ Nonlinear Transformation of Complex Amplitudes}

%% file: appendix.tex

\section{Graph preliminaries}\label{app: Graph preliminaries}

\subsection{Unweighted graphs}\label{app: Unweighted graphs}

A graph is a collection of nodes $\mathcal{N}$ and edges $\mathcal{E}$ between the nodes. It usually describes structured information where different parts of the data are linked via some relations. In our work, we define addresses of the nodes and encode them into the register $Reg(i)$, thus called the address register. In order to encode the edges, we employ the \textit{normalized adjacency matrix with self-loops}. To unwrap this notion, we define the adjacency matrix $A$ as a symmetric matrix such that
\begin{equation}
    A_{ij} = \begin{cases}
        1 \quad \text{if nodes }i,j\text{ are connected}, \\
        0 \quad \text{otherwise}.
    \end{cases}
\end{equation}
By definition, $A_{ii}=0$. In a GNN, it is important to reinforce the information of a node onto itself, so we introduce self-loops in the graph (literally meaning that each note has an edge linking to itself). The self loops count as one extra edge for each node. 
We define this as $\tilde{A}\equiv A+I_N$, where $N$ is the total number of nodes.
Finally, for the purpose of encoding this matrix into a unitary gate, the matrix itself needs to have a sub-unitary spectral norm, meaning $\|A\|_2\leq1$. For this reason,  $\tilde{A}$ must also be normalized. In order not to bias more connected nodes, we use the following normalization: 
\begin{equation}\label{eq: app normalizing adj matr}
    \hat{A}=\tilde{D}^{-\frac{1}{2}} \tilde{A} \tilde{D}^{-\frac{1}{2}},
\end{equation}
where $\tilde{D}_{i i}=\sum_{j} \tilde{A}_{i j}$ is the degree matrix. This normalizes each edge by the degrees of the connected nodes as 
\begin{equation}\label{eq: app norm adj matrix}
    \hat{A}_{ij} = \frac{A_{ij}}{\sqrt{\deg(i)\deg(j)}}.
\end{equation}
Here, the degree $\deg(i)=\tilde{D}_{i i}$ of node $i$ is the number of edges connected to it.

\subsection{Weighted graphs}\label{app: Weighted graphs}

For a graph with \textit{weighted edges}, each edge has an associated weight $w_{ij}>0$. In this case the adjacency matrix is defined as 
\begin{equation}\label{eq: app adj matr weighted edges}
    A_{ij} = \begin{cases}
        w_{ij} \quad \text{if nodes }i,j\text{ are connected}, \\
        0 \quad \text{otherwise}.
    \end{cases}
\end{equation}

By adding the self loops, we now have extra freedom in choosing the weights of the self loops to be $w_{ii}>0$. Therefore, the new adjacency matrix is also 
\begin{equation}
    \tilde{A}_{ij} = \begin{cases}
        A_{ij} \quad \text{if }i\neq j, \\
        w_{ii} \quad \text{otherwise}.
        \end{cases}
\end{equation}

The definitions for adding self-loops and the degrees are the same as above, therefore inheriting the changes from Eq.~\ref{eq: app adj matr weighted edges}. We note that by node degree and degree matrix in this case we refer to weighted degrees, meaning that, in accordance with the definitions above, we have
\begin{equation}
    \tilde{D}_{ii} = \sum_j \tilde{A}_{ij} = \sum_jw_{ij},
\end{equation}
with $\deg(i) := \tilde{D}_{ii}$. This means that node $i$ is not restricted to having $\deg(i)$ edges connected to it, but instead it sums up the weights of all its connected edges.

Next, we will prove two important results that will help us generalize our first order QSGC results to both general QSGCs and QLGCs. The generalization is done in Appendix \ref{app: Generalizing the trainability results across different QGCNs}. For further details we refer the reader to \cite{Godsil2001-ms, gallier2016spectraltheoryunsignedsigned}.

\subsubsection{Multiplying normalized adjacency matrices}

Here we show that multiplying two commuting normalized adjacency matrices with self-loops produces a new, valid normalized adjacency matrix with self-loops. This is done in Proposition \ref{prop: app adj matr prod is adj matr}, and to prove it, we will employ the following two lemmas:

\begin{lemma}\label{lemma: app eigenvector}
    For a graph whose normalized adjacency matrix with self-loops is $\hat{A}$ and the degree matrix is $\tilde{D}$, the vector $\tilde{D}^{1/2}\bold{1}$ is an eigenvector of $\hat{A}$, with eigenvalue 1. Here, $\bold{1}$ is a vector whose elements are ones.
\end{lemma}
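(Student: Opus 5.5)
The argument is a direct computation with the definition $\hat{A}=\tilde{D}^{-1/2}\tilde{A}\tilde{D}^{-1/2}$ from Eq.~\ref{eq: app normalizing adj matr}. It covers the weighted case as well, with $\tilde{D}_{ii}=\sum_j \tilde{A}_{ij}$ the weighted degree. I would compute $\hat{A}\,\tilde{D}^{1/2}\bold{1}$ and show that it equals $\tilde{D}^{1/2}\bold{1}$.

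First I would note that $\tilde{D}$ is diagonal with strictly positive entries, because the self-loop contributes $w_{ii}>0$ (or $1$ in the unweighted case). Hence $\tilde{D}^{\pm1/2}$ are well defined and $\tilde{D}^{-1/2}\tilde{D}^{1/2}=I_N$. Then
\begin{equation}
\hat{A}\,\tilde{D}^{1/2}\bold{1}=\tilde{D}^{-1/2}\tilde{A}\,\tilde{D}^{-1/2}\tilde{D}^{1/2}\bold{1}=\tilde{D}^{-1/2}\tilde{A}\bold{1}.
\end{equation}
The key observation is that $(\tilde{A}\bold{1})_i=\sum_j\tilde{A}_{ij}=\tilde{D}_{ii}$, which says $\tilde{A}\bold{1}=\tilde{D}\bold{1}$. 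Substituting gives $\tilde{D}^{-1/2}\tilde{D}\bold{1}=\tilde{D}^{1/2}\bold{1}$, so the eigenvalue is $1$. Finally, $\tilde{D}^{1/2}\bold{1}$ has entries $\sqrt{\deg(i)}>0$, so it is nonzero and therefore a genuine eigenvector.

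There is no real obstacle. The only point needing care is that the degree used in the normalization is exactly the row sum of $\tilde{A}$ with self-loops included. This holds by definition in both the unweighted and weighted settings, and it is what makes $\tilde{A}\bold{1}=\tilde{D}\bold{1}$ exact.
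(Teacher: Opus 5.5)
Your proposal is correct and matches the paper's proof: both use the row-sum identity $\tilde{A}\mathbf{1}=\tilde{D}\mathbf{1}$ and then cancel $\tilde{D}^{-1/2}\tilde{D}^{1/2}$, which gives $\hat{A}\tilde{D}^{1/2}\mathbf{1}=\tilde{D}^{-1/2}\tilde{A}\mathbf{1}=\tilde{D}^{1/2}\mathbf{1}$. You also check that $\tilde{D}$ has strictly positive diagonal, so $\tilde{D}^{\pm1/2}$ exist, and that $\tilde{D}^{1/2}\mathbf{1}\neq 0$; the paper leaves both points implicit.
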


\begin{proof}
    For the usual adjacency matrix with self-loops $\tilde{A}$, we notice that
    \begin{equation}
        (\tilde{A}\bold{1})_i = \sum_j \tilde{A}_{ij}\cdot 1 = \tilde{D}_{ii} \qimpl \tilde{A}\bold{1} = \tilde{D}\bold{1}.
    \end{equation}
    Therefore, calculating
    \begin{equation}
        \hat{A}\cdot\tilde{D}^{1/2}\bold{1} = \tilde{D}^{-1/2}\tilde{A}\tilde{D}^{-1/2}\tilde{D}^{1/2}\bold{1} = \tilde{D}^{-1/2} \tilde{A}\bold{1}=\tilde{D}^{1/2}\bold{1},
    \end{equation}
    thus $\tilde{D}^{1/2}\bold{1}$ is an eigenvector of $\hat{A}$ corresponding to eigenvalue 1.
\end{proof}

\begin{lemma}\label{lemma: app prop degrees}
    If the normalized adjacency matrices with self-loops of two graphs commute, then their degree matrices are proportional.
\end{lemma}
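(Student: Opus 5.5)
The plan is to combine Lemma~\ref{lemma: app eigenvector} with a Perron--Frobenius argument. I will assume throughout that both graphs are connected. This is the implicit setting of the paper, and without it the statement can fail: if the eigenvalue $1$ is degenerate, the degree vectors of different connected components can be rescaled independently.

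\textbf{Step 1: spectral facts for a single graph.} Let $\hat{A}=\tilde{D}_A^{-1/2}\tilde{A}\tilde{D}_A^{-1/2}$.
\begin{itemize}
\item $\hat{A}$ is similar to $\tilde{D}_A^{-1}\tilde{A}$, which is row-stochastic, so its spectral radius is $1$.
\item $\hat{A}$ is symmetric with nonnegative entries.
\item Connectivity together with the self-loops makes $\hat{A}$ irreducible, in fact primitive.
\end{itemize}
By Perron--Frobenius, the eigenvalue $1$ is therefore simple. Its eigenspace is spanned by a strictly positive vector, which by Lemma~\ref{lemma: app eigenvector} is $\mathbf{v}_A=\tilde{D}_A^{1/2}\mathbf{1}$. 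The same holds for $\hat{B}$, with positive Perron vector $\mathbf{v}_B=\tilde{D}_B^{1/2}\mathbf{1}$.

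\textbf{Step 2: use commutation.} Since $\hat{A}\hat{B}=\hat{B}\hat{A}$, the matrix $\hat{B}$ leaves every eigenspace of $\hat{A}$ invariant. In particular, $\hat{A}\hat{B}\mathbf{v}_A=\hat{B}\mathbf{v}_A$, so $\hat{B}\mathbf{v}_A$ lies in the one-dimensional eigenspace $\mathrm{span}(\mathbf{v}_A)$. Hence $\mathbf{v}_A$ is an eigenvector of $\hat{B}$, with some eigenvalue $\mu$.

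\textbf{Step 3: identify $\mu=1$.} $\hat{B}$ is real symmetric, so eigenvectors belonging to distinct eigenvalues are orthogonal. If $\mu\neq 1$ we would need $\mathbf{v}_A^T\mathbf{v}_B=0$. This is impossible, because both vectors have strictly positive entries and so $\mathbf{v}_A^T\mathbf{v}_B>0$. Thus $\mu=1$.

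\textbf{Step 4: conclude.} By simplicity of the eigenvalue $1$ for $\hat{B}$, we get $\tilde{D}_A^{1/2}\mathbf{1}=c\,\tilde{D}_B^{1/2}\mathbf{1}$ for some $c>0$. Comparing entries and squaring gives $\tilde{D}_A=c^2\tilde{D}_B$, which is the claimed proportionality.

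\textbf{Main obstacle.} The key step is the simplicity of the Perron eigenvalue, which is exactly where connectedness enters. I would state this hypothesis explicitly and verify irreducibility of $\hat{A}$ directly: its zero pattern equals that of $\tilde{A}$, so it is irreducible precisely when the graph is connected. For weighted graphs the same argument applies verbatim, since $w_{ij}>0$ preserves the zero pattern.
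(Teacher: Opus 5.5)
Your proof is correct, and its core mechanism is the same as the paper's. Lemma~\ref{lemma: app eigenvector} supplies the positive eigenvectors $\tilde{D}_A^{1/2}\mathbf{1}$ and $\tilde{D}_B^{1/2}\mathbf{1}$, and the positivity of their inner product rules out orthogonality.

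What you add is the step the paper skips. The paper simply asserts that both vectors belong to a common orthogonal eigenbasis of $\hat{A}$ and $\hat{B}$. That only follows when the eigenvalue $1$ is simple for each matrix, and your Perron--Frobenius argument under connectedness is exactly what justifies it.

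Your extra hypothesis is genuinely needed, not cosmetic. Take three isolated nodes, so $\hat{A}=I$; it commutes with the $\hat{B}$ of the path $1$--$2$--$3$. Yet the degree matrices $I$ and $\mathrm{diag}(2,3,2)$ are not proportional. So the lemma as stated is false, and both graphs must be assumed connected, as you do.
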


\begin{proof}
    For graphs $\mathcal{G}_A$ and $\mathcal{G}_B$ we define the normalized adjacency matrices with self-loops to be $\hat{A}$ and $\hat{B}$, respectively. We also denote the degree matrices as $\tilde{D}_A$ and $\tilde{D}_B$.
    
    If $\hat{A}$ and $\hat{B}$ commute, then they share a mutual basis of eigenvectors. Furthermore, because they are symmetric, the basis is orthogonal. 
    
    Using Lemma \ref{lemma: app eigenvector}, we know that $\tilde{D}_A^{1/2}\bold{1}$ and $\tilde{D}_B^{1/2}\bold{1}$ are in this basis. If they do not coincide, they should be orthogonal, which is impossible because their inner product is
    \begin{equation}
        \bold{1}^T(\tilde{D}_B^{1/2})^T\tilde{D}_A^{1/2}\bold{1} = \sum_i\sqrt{(\tilde{D}_A)_{ii}(\tilde{D}_B)_{ii}} > 0.
    \end{equation}
    Therefore, they must be proportional, and so $(\tilde{D}_A)_{ii}\propto(\tilde{D}_B)_{ii}$, which completes the proof.
\end{proof}

\begin{proposition}\label{prop: app adj matr prod is adj matr}
    Let $\mathcal{G}_A$ and $\mathcal{G}_B$ be two graphs with weighted edges and the same number of nodes, such that their normalized adjacency matrices with self-loops are respectively $\hat{A}$ and $\hat{B}$. Then there exists a graph $\mathcal{G}_C$ whose normalized adjacency matrix with self-loops matches $\hat{C}=\hat{A}\hat{B}$ only if the two matrices $\hat{A}$ and $\hat{B}$ commute.
\end{proposition}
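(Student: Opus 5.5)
Both directions of the statement should be addressed, since the text before it says that multiplying two \emph{commuting} matrices gives a valid normalized adjacency matrix. In other words, the real content is the ``if'' direction. I would prove necessity first, because it is short. Suppose $\hat{C}=\hat{A}\hat{B}$ is the normalized adjacency matrix with self-loops of some weighted graph $\mathcal{G}_C$. Then $\hat{C}=\tilde{D}_C^{-1/2}\tilde{A}_C\tilde{D}_C^{-1/2}$ is symmetric, because $\tilde{A}_C$ is symmetric. Since $\hat{A}$ and $\hat{B}$ are symmetric, $\hat{A}\hat{B}=\hat{C}=\hat{C}^T=\hat{B}\hat{A}$, so the two matrices commute.

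For sufficiency, I assume $\hat{A}\hat{B}=\hat{B}\hat{A}$ and construct $\mathcal{G}_C$ explicitly. By Lemma \ref{lemma: app prop degrees}, $\tilde{D}_B=c\,\tilde{D}_A$ for some constant $c>0$. I will then define the candidate adjacency matrix $\tilde{A}_C:=\tilde{D}_A^{1/2}\hat{A}\hat{B}\tilde{D}_A^{1/2}$, equivalently $\tilde{A}_C=c^{-1/2}\tilde{A}\tilde{D}_A^{-1}\tilde{B}$ up to the scaling constant. The steps are as follows.

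\textbf{(i) Symmetry.} $\tilde{A}_C$ is symmetric because $\hat{A}\hat{B}$ is symmetric under commutation.

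\textbf{(ii) Nonnegativity.} $\tilde{A}_C$ is entrywise nonnegative. It is a product of matrices with nonnegative entries, namely $\tilde{D}_A^{1/2}$, $\hat{A}$, $\hat{B}$ and $\tilde{D}_A^{1/2}$.

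\textbf{(iii) Positive self-loops.} The diagonal is strictly positive. We have $(\hat{A}\hat{B})_{ii}\ge \hat{A}_{ii}\hat{B}_{ii}>0$ because every node carries a self-loop of positive weight. Hence $\tilde{A}_C$ defines a weighted graph whose self-loop weights are $w_{ii}=(\tilde{A}_C)_{ii}>0$ and whose edges $i\neq j$ have weights $(\tilde{A}_C)_{ij}$ wherever these are positive.

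\textbf{(iv) Degree matrix.} The degree matrix of this graph is $\tilde{D}_A$. Using Lemma \ref{lemma: app eigenvector} twice, $\hat{B}\tilde{D}_A^{1/2}\bold{1}=\hat{B}\,c^{-1/2}\tilde{D}_B^{1/2}\bold{1}=\tilde{D}_A^{1/2}\bold{1}$ and $\hat{A}\tilde{D}_A^{1/2}\bold{1}=\tilde{D}_A^{1/2}\bold{1}$. This gives $\tilde{A}_C\bold{1}=\tilde{D}_A^{1/2}\hat{A}\hat{B}\tilde{D}_A^{1/2}\bold{1}=\tilde{D}_A\bold{1}$, so $\tilde{D}_C=\tilde{D}_A$.

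\textbf{(v) Normalization.} Finally, $\tilde{D}_C^{-1/2}\tilde{A}_C\tilde{D}_C^{-1/2}=\hat{A}\hat{B}=\hat{C}$, as required.

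The main obstacle is the degree bookkeeping in step (iv). A generic rescaling of $\hat{A}\hat{B}$ would have row sums that do not reproduce the normalization. The proportionality of degrees from Lemma \ref{lemma: app prop degrees} is exactly what makes $\tilde{D}_A^{1/2}\bold{1}$ a common fixed vector of both factors. I would double-check that the constant $c$ cancels correctly there. I would also note the caveat that a self-loop weight of zero could in principle occur if the paper allowed it, but the positivity $w_{ii}>0$ in the weighted definition rules this out.
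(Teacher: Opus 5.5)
Your first paragraph already proves the statement exactly as written: if $\mathcal{G}_C$ exists, then $\hat{A}\hat{B}$ is symmetric, hence $\hat{A}\hat{B}=\hat{B}\hat{A}$. The paper never gives this argument; its proof is entirely the converse. For the converse you follow the paper's route: proportional degrees from Lemma~\ref{lemma: app prop degrees}, then $\tilde{D}_C\propto\tilde{D}_A$ with the paper's free constant set to $1$. You add a cleaner row-sum check via Lemma~\ref{lemma: app eigenvector} and a self-loop positivity check that the paper skips. (Minor slip: with $\tilde{D}_B=c\tilde{D}_A$ you get $\tilde{A}_C=c^{-1}\tilde{A}\tilde{D}_A^{-1}\tilde{B}$, not $c^{-1/2}$.)

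The converse, however, has a genuine gap at step (iv), and it is inherited from the paper: Lemma~\ref{lemma: app prop degrees} is false for disconnected graphs. Take $\mathcal{G}_A$ on three nodes with no edges and unit self-loops, so $\tilde{D}_A=I$ and $\hat{A}=I$, which commutes with everything. Take $\mathcal{G}_B$ to be the path $1$--$2$--$3$ with self-loops, so $\tilde{D}_B=\mathrm{diag}(2,3,2)\not\propto\tilde{D}_A$. Your recipe then gives $\tilde{A}_C=\hat{B}$, whose first row sums to $1/2+1/\sqrt{6}\neq 1$. So $\tilde{D}_C\neq\tilde{D}_A$, and the normalization in (v) does not return $\hat{B}$, even though $\hat{A}\hat{B}=\hat{B}$ is obviously realizable (by $\tilde{C}=\tilde{B}$). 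The lemma's proof breaks because when the eigenvalue $1$ is degenerate, a common eigenbasis need not contain $\tilde{D}_A^{1/2}\mathbf{1}$ or $\tilde{D}_B^{1/2}\mathbf{1}$.

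The repair starts from the observation that step (iv) only needs a strictly positive $w$ with $\hat{A}w=\hat{B}w=w$. Set $\tilde{D}_C:=\mathrm{diag}(w)^2$ and $\tilde{A}_C:=\mathrm{diag}(w)\,\hat{A}\hat{B}\,\mathrm{diag}(w)$. Then $\tilde{A}_C\mathbf{1}=\mathrm{diag}(w)\,w=\tilde{D}_C\mathbf{1}$, and your steps (i)--(iii) and (v) go through unchanged.

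Such a $w$ exists whenever $\hat{A}\hat{B}=\hat{B}\hat{A}$:
\begin{itemize}
\item The orthogonal projector $\Pi$ onto $\ker(\hat{A}-I)$ is a polynomial in $\hat{A}$, so it commutes with $\hat{B}$.
\item Hence $w:=\Pi\tilde{D}_B^{1/2}\mathbf{1}$ satisfies $\hat{A}w=w$ and $\hat{B}w=\Pi\hat{B}\tilde{D}_B^{1/2}\mathbf{1}=w$.
\item Positivity follows from Perron--Frobenius on each connected component $K$ of $\mathcal{G}_A$. The eigenvalue $1$ of each irreducible block is simple, so $\ker(\hat{A}-I)$ is spanned by the disjointly supported vectors $u_K=\tilde{D}_A^{1/2}\mathbf{1}_K$.
\item Therefore $w=\sum_K\bigl(\langle u_K,\tilde{D}_B^{1/2}\mathbf{1}\rangle/\|u_K\|^2\bigr)u_K$, and every coefficient is positive.
\end{itemize}

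When $\mathcal{G}_A$ is connected, this reduces to $w\propto\tilde{D}_A^{1/2}\mathbf{1}$, which is exactly your construction. So it would also suffice to add connectivity as a hypothesis; if both graphs are connected, Lemma~\ref{lemma: app prop degrees} is true by Perron--Frobenius. For the paper's actual use, $\hat{B}=\hat{A}^q$, the vector $\tilde{D}_A^{1/2}\mathbf{1}$ is trivially a common fixed vector, so nothing downstream is affected.
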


\begin{proof}
    We solve this from an algebraic point of view. The three conditions that $\hat{C}$ needs to meet are (1) symmetry, (2) non-negativity, and (3) the degree relationship given in Eq.~\ref{eq: app norm adj matrix}, based on constructing a valid degree matrix $\tilde{D}_C$ associated to it.

    The first condition is solved by the symmetry of each of the matrices $\hat{A}$ and $\hat{B}$, as well as the commutation relation:
    \begin{equation}
        \hat{C}^T=\left(\hat{A}\hat{B}\right)^T = \hat{B}^T \hat{A}^T = \hat{B} \hat{A} = \hat{A}\hat{B} = \hat{C}.
    \end{equation}

    The second condition means that all of the elements $\hat{C}_{ij}=\sum_k\hat{A}_{ik}\hat{B}_{kj}$ must be positive, which is true because all elements in $\hat{A}$ and $\hat{B}$ are also positive.

    Finally, to address the third condition, we start constructing the desired graph $\mathcal{G}_C$. As we know from Lemma \ref{lemma: app prop degrees}, the degree matrices of graphs $\mathcal{G}_A$ and $\mathcal{G}_B$ are proportional; let them be related as $\tilde{D}_B = b\tilde{D}_A$, where $\tilde{D}_i$ is the degree matrix of graph $\mathcal{G}_i$ and $a$ is a proportionality constant. 
    
    Because $\hat{C}=\hat{A}\hat{B}$ commutes with both, if we want $\mathcal{G}_C$ to have $\hat{C}$ as the normalized adjacency matrix with self-loops, we need to further define the degree matrix $\tilde{D}_C\equiv c\tilde{D}_A$.

    Finally, we check the relation in Eq.~\ref{eq: app norm adj matrix} for $\hat{C}$ and $\tilde{D}_C$:
    \begin{align}
        \tilde{D}_C^{\frac{1}{2}}\hat{C}\tilde{D}_C^{\frac{1}{2}} =& \tilde{D}_C^{\frac{1}{2}} \hat{A}\hat{B}\tilde{D}_C^{\frac{1}{2}} \nonumber \\
        =& \tilde{D}_C^{\frac{1}{2}} \tilde{D}_A^{-\frac{1}{2}} \tilde{A} \tilde{D}_A^{-\frac{1}{2}} \tilde{D}_B^{-\frac{1}{2}} \hat{B} \tilde{D}_B^{-\frac{1}{2}} \tilde{D}_C^{\frac{1}{2}} \nonumber \\
        =& \frac{c}{b}\tilde{D}_A^{\frac{1}{2}} \tilde{D}_A^{-\frac{1}{2}} \tilde{A} \tilde{D}_A^{-\frac{1}{2}} \tilde{D}_A^{-\frac{1}{2}} \hat{B} \tilde{D}_A^{-\frac{1}{2}} \tilde{D}_A^{\frac{1}{2}} \nonumber \\
        =& \frac{c}{b}\tilde{A}\tilde{D}_A \tilde{B} \equiv \tilde{C}.
    \end{align}
    After simplifications, we establish that the weights of the desired graph should be 
    \begin{equation}
        \tilde{C}_{ij} = \frac{c}{b}\sum_k\frac{\tilde{A}_{ik}\tilde{B}_{kj}}{(\tilde{D}_B)_{kk}}.
    \end{equation}
    Assuming $b$ is given, $c$ is still a choice in our definition. Therefore, not one, but a whole family of graphs $\{\mathcal{G}_C^c\}_{c>0}$ is defined such that $\hat{A}\hat{B}$ is the (mutual) normalized adjacency matrix with self-loops. 

    To this end, we need to check that the relation $(\tilde{D}_C)_{ii}=\sum_j\tilde{C}_{ij}$ holds:
    \begin{equation}
        \sum_j \tilde{C}_{ij} = \frac{c}{b}\sum_k\frac{\tilde{A}_{ik}\sum_j\tilde{B}_{kj}}{(\tilde{D}_B)_{kk}} = c\sum_k\frac{\tilde{A}_{ik}}{(\tilde{D}_A)_{kk}} = c(\tilde{D}_A)_{ii} \underbrace{\sum_k\frac{\tilde{A}_{ik}}{(\tilde{D}_A)_{ii}(\tilde{D}_A)_{kk}}}_{\sum_k \hat{A}_{ik}=1} = (\tilde{D}_C)_{ii}.
    \end{equation}
    Note that this check is equivalent to checking that Lemma \ref{lemma: app eigenvector} holds for $\mathcal{G}_C$.
\end{proof}

For $k$-regular graphs we can also give the following corollary:

\begin{corollary}\label{cor: app commuting k-regular graphs}
    Following Proposition \ref{prop: app adj matr prod is adj matr}, if graphs $\mathcal{G}_A$ and $\mathcal{G}_B$ are weighted $k$-regular, then $\mathcal{G}_C$ can be chosen to be $k$-regular.
\end{corollary}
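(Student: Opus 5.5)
The plan is to use the free scaling constant $c$ left open in the proof of Proposition~\ref{prop: app adj matr prod is adj matr}, and to choose it so that every node of $\mathcal{G}_C$ has weighted degree $k$. Here "weighted $k$-regular" means $(\tilde{D}_A)_{ii}=k$ for all $i$, i.e.\ $\tilde{D}_A=kI_N$, and likewise $\tilde{D}_B=kI_N$, where the self-loop weights are included in the degree as in the paper's convention.

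First I check consistency with Lemma~\ref{lemma: app prop degrees}. The proportionality constant is $b=1$, since $\tilde{D}_B=\tilde{D}_A=kI_N$. In this case $\hat{A}=\tilde{A}/k$ and $\hat{B}=\tilde{B}/k$, and the vector $\bold{1}$ is a common eigenvector, as guaranteed by Lemma~\ref{lemma: app eigenvector}.

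Next I substitute into the weight formula from the proposition:
\begin{equation*}
\tilde{C}_{ij}=\frac{c}{b}\sum_m \frac{\tilde{A}_{im}\tilde{B}_{mj}}{(\tilde{D}_B)_{mm}}=\frac{c}{k}\,(\tilde{A}\tilde{B})_{ij}.
\end{equation*}
Its row sums are
\begin{equation*}
\sum_j\tilde{C}_{ij}=\frac{c}{k}\sum_m\tilde{A}_{im}\,k=c\,k.
\end{equation*}
This agrees with the proposition's definition $\tilde{D}_C=c\tilde{D}_A=ck\,I_N$. Every node therefore has the same weighted degree $ck$, so $\mathcal{G}_C$ is weighted regular for any $c>0$.

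Choosing $c=1$ gives degree exactly $k$, with $\tilde{C}=\tilde{A}\tilde{B}/k$. I then confirm that the normalization reproduces $\hat{C}$:
\begin{equation*}
\tilde{D}_C^{-1/2}\tilde{C}\tilde{D}_C^{-1/2}=\frac{\tilde{A}\tilde{B}}{k^2}=\hat{A}\hat{B}.
\end{equation*}
Symmetry and nonnegativity of $\tilde{C}$ are inherited from the proposition.

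The only delicate point is interpretive. The diagonal entries $\tilde{C}_{ii}=\frac1k\sum_m\tilde{A}_{im}\tilde{B}_{mi}>0$ serve as the self-loop weights, and they must be counted in the degree, consistent with how weighted $k$-regularity is defined for $\mathcal{G}_A$ and $\mathcal{G}_B$. I would state this convention explicitly. With that convention, the choice $c=1$ completes the argument.
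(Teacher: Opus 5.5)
Your proposal is correct and follows the same route as the paper: both degree matrices are the same multiple of the identity, so $b=1$, the constructed $\tilde D_C=c\tilde D_A$ is again a multiple of the identity, and choosing the free constant $c=1$ makes $\mathcal{G}_C$ regular with the same degree. Your explicit checks of the row sums and of $\tilde D_C^{-1/2}\tilde C\tilde D_C^{-1/2}=\hat A\hat B$ only fill in details the paper leaves to the proposition, and you keep a single convention $\tilde D=kI_N$ throughout, whereas the paper's proof switches between $k$ and $k+1$.
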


\begin{proof}
    For a regular graph, all nodes share the same degree value $k+1$ (including the self-loops). As such, the degree matrices are $\tilde{D}_A=\tilde{D}_B=k\mathds{1}$ (so $b=1$). As seen in the previous proof, $\tilde{D}_C=c\tilde{D}_A$, which makes the new degree matrix be
    \begin{equation}
        \tilde{D}_C=c(k+1)\mathds{1}.
    \end{equation}
    Therefore, to make the new graph $k$-regular we set $c=1$.
\end{proof}

\subsection{Convex sums of normalized adjacency matrices}

In this section we further extend the properties of commuting normalized adjacency matrices with self-loops by also showing that taking the convex sum of two such matrices leads to a new, valid normalized adjacency matrix with self-loops.

\begin{proposition}\label{prop: app convex sum 2 terms}
    For two graphs $\mathcal{G}_A$ and $\mathcal{G}_B$ with commuting normalized adjacency matrices with self-loops $\hat{A}$ and $\hat{B}$, there exists a graph $\mathcal{G}_C$ such that its normalized adjacency matrix with self-loops is a convex sum $\hat{C}=\alpha \hat{A} + \beta \hat{B}$ for some non-negative numbers $\alpha$ and $\beta$ such that $\alpha+\beta=1$.
\end{proposition}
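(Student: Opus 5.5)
The plan is to mirror the proof of Proposition \ref{prop: app adj matr prod is adj matr}. I will check that $\hat{C}=\alpha\hat{A}+\beta\hat{B}$ satisfies the three defining conditions used there: (1) symmetry, (2) non-negativity, and (3) existence of a degree matrix $\tilde{D}_C$ and a weighted self-looped adjacency matrix $\tilde{C}$ with $\hat{C}=\tilde{D}_C^{-1/2}\tilde{C}\tilde{D}_C^{-1/2}$ and $(\tilde{D}_C)_{ii}=\sum_j\tilde{C}_{ij}$.

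Conditions (1) and (2) are immediate. A non-negative combination of symmetric, entrywise non-negative matrices is again symmetric and entrywise non-negative; commutation is not even needed here. The diagonal of $\hat{C}$ is $\alpha\hat{A}_{ii}+\beta\hat{B}_{ii}$. This is strictly positive whenever both inputs have self-loops and at least one of $\alpha$, $\beta$ is nonzero, which is forced by $\alpha+\beta=1$. So $\mathcal{G}_C$ genuinely has self-loops.

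For condition (3), I invoke Lemma \ref{lemma: app prop degrees}: since $\hat{A}$ and $\hat{B}$ commute, $\tilde{D}_B=b\tilde{D}_A$ for some $b>0$. Then $\tilde{D}_B^{-1/2}=b^{-1/2}\tilde{D}_A^{-1/2}$, so both matrices are conjugated by the same diagonal matrix. I will set $\tilde{D}_C\equiv c\tilde{D}_A$ with a free constant $c>0$ and define
\begin{equation}
\tilde{C}\equiv \tilde{D}_C^{1/2}\hat{C}\tilde{D}_C^{1/2} = c\left(\alpha\tilde{A}+\frac{\beta}{b}\tilde{B}\right),
\end{equation}
which has non-negative weights. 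It remains to verify the degree relation. Using $\tilde{A}\bold{1}=\tilde{D}_A\bold{1}$ and $\tilde{B}\bold{1}=\tilde{D}_B\bold{1}=b\tilde{D}_A\bold{1}$ (as in the proof of Lemma \ref{lemma: app eigenvector}), one gets $\tilde{C}\bold{1}=c(\alpha+\beta)\tilde{D}_A\bold{1}=c\tilde{D}_A\bold{1}=\tilde{D}_C\bold{1}$. This is where $\alpha+\beta=1$ is used. Equivalently, $\tilde{D}_A^{1/2}\bold{1}$ is a common eigenvalue-1 eigenvector of $\hat{A}$ and $\hat{B}$, hence of the convex combination. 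As before, this gives a one-parameter family $\{\mathcal{G}_C^c\}_{c>0}$, and in the weighted $k$-regular case ($b=1$) choosing $c=1$ keeps $\mathcal{G}_C$ weighted $k$-regular, in analogy with Corollary \ref{cor: app commuting k-regular graphs}.

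The only delicate step is the degree consistency in (3). It relies on the two degree matrices being proportional, so that a single diagonal rescaling normalizes both terms simultaneously. Without commutation, Lemma \ref{lemma: app prop degrees} fails and $\tilde{D}_A^{-1/2}(\cdot)\tilde{D}_A^{-1/2}$ and $\tilde{D}_B^{-1/2}(\cdot)\tilde{D}_B^{-1/2}$ cannot be merged. I therefore expect the main care to be in stating precisely that the proportionality constant $b$ is absorbed into the weight of the $\tilde{B}$ term. Apart from that, the argument is a direct computation.
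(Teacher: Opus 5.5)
Your proposal is correct and is essentially the paper's own proof. Symmetry and non-negativity are immediate, Lemma~\ref{lemma: app prop degrees} gives $\tilde{D}_B=b\tilde{D}_A$, one sets $\tilde{D}_C=c\tilde{D}_A$, computes $\tilde{C}=c\bigl(\alpha\tilde{A}+\tfrac{\beta}{b}\tilde{B}\bigr)$, and checks the row sums using $\alpha+\beta=1$.

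One caveat you share with the paper: Lemma~\ref{lemma: app prop degrees} tacitly needs connected graphs. For instance, $\hat{A}=I$, coming from a graph with only self-loops, commutes with every $\hat{B}$ although $\tilde{D}_A$ and $\tilde{D}_B$ need not be proportional (this is harmless in the weighted $k$-regular setting where the paper applies it). The version that holds in general is your own eigenvector remark: the orthogonal projection $v$ of $\tilde{D}_B^{1/2}\mathbf{1}$ onto the eigenvalue-$1$ eigenspace of $\hat{A}$ is strictly positive and satisfies $\hat{A}v=\hat{B}v=v$ (the projector commutes with $\hat{B}$), so $\tilde{D}_C=\mathrm{diag}(v)^2$ works.
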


\begin{proof}
    As done in the previous Proposition, we confirm that a linear combination of two symmetric, non-negative matrices gives a symmetric, non-negative matrix as long as the coefficient $\alpha$ and $\beta$ are non-neegative.

    By employing Lemma \ref{lemma: app prop degrees}, because $\hat{A}$ and $\hat{B}$, the graphs' degree matrices are proportional, $\tilde{D}_B=b\tilde{D}_A$. If we want to construct a graph $\mathcal{G}_C$ whose normalized adjacency matrix with self-loops is $\hat{C}$, then this would also commute with $\hat{A}$ and $\hat{B}$, and so a necessary condition is that the degree matrix of $\mathcal{G}_C$ is proportional to that of $\mathcal{G}_A$, as $\tilde{D}_C=c\tilde{D}_A$ for some positive number $c$ which we set.

    Then, to complete the construction of $\mathcal{G}_C$, we calculate the weights given by the unnormalized adjacency matrix $\tilde{C}$ as follows:
    \begin{align}
        \tilde{C} =& \tilde{D}_C^{\frac{1}{2}} \hat{C} \tilde{D}_C^{\frac{1}{2}} \nonumber \\
        =& c\tilde{D}_A^{\frac{1}{2}}\left(\alpha \hat{A} + \beta \hat{B}\right)\tilde{D}_A^{\frac{1}{2}} \nonumber \\
        =& c\tilde{D}_A^{\frac{1}{2}}\left(\alpha 
        \tilde{D}_A^{-\frac{1}{2}}\tilde{A}\tilde{D}_A^{-\frac{1}{2}}
        + \beta 
        \tilde{D}_B^{-\frac{1}{2}}\tilde{B}\tilde{D}_B^{-\frac{1}{2}}
        \right)\tilde{D}_A^{\frac{1}{2}} \nonumber \\
        =& c\left(\alpha \tilde{A} + \frac{\beta}{b}\tilde{B}\right).
    \end{align}
    As done above, a last check that $(\tilde{D}_C)_{ii}=\sum_j\tilde{C}_{ij}$:
    \begin{equation}
        \sum_j \tilde{C}_{ij} = c\left(\alpha \sum_j\tilde{A}_{ij} + \frac{\beta}{b}\sum_j\tilde{B}_{ij}\right) = c\left(\alpha + \beta\right)(\tilde{D}_A)_{ii} = (\alpha+\beta)(\tilde{D}_C)_{ii} = (\tilde{D}_C)_{ii},
    \end{equation}
    which is only valid because $\alpha+\beta=1$. Here, everything except $c$ is given and $c$ can be set to be any positive number.
\end{proof}

This can be extended to a convex sum over $n$ normalized adjacency matrices with self-loops:

\begin{corollary}\label{cor: app app convex sum n terms}
    For graphs $\mathcal{G}_{A_p}$ with normalized adjacency matrices with self-loops $\hat{A}_p$ for $p$ ranging from 1 to $n$, the convex sum $\hat{C}=\sum_p\alpha_p\hat{A}_p$ is a valid adjacency matrix with self loops of some graph $\mathcal{G}_C$, for some $\alpha_p$ such that $\sum_p\alpha_p=1$. 
\end{corollary}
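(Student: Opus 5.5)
The natural route is induction on the number of terms $n$, with Proposition~\ref{prop: app convex sum 2 terms} as the inductive step. The hypothesis I will carry along, implicit in the statement, is that the matrices $\hat{A}_p$ pairwise commute and the $\alpha_p$ are non-negative. This is exactly what Proposition~\ref{prop: app convex sum 2 terms} needs.

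The base case $n=1$ is trivial: $\hat{C}=\hat{A}_1$ and $\mathcal{G}_C=\mathcal{G}_{A_1}$. The case $n=2$ is Proposition~\ref{prop: app convex sum 2 terms} itself.

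For the inductive step, take $n\geq 3$. If $\alpha_n=1$, all other coefficients vanish and there is nothing to prove. Otherwise set $s=1-\alpha_n=\sum_{p<n}\alpha_p>0$ and write
\begin{equation}
\hat{C}=s\,\hat{C}'+\alpha_n\hat{A}_n,\qquad \hat{C}'=\sum_{p=1}^{n-1}\frac{\alpha_p}{s}\hat{A}_p .
\end{equation}
The coefficients $\alpha_p/s$ are non-negative and sum to one. By the induction hypothesis, $\hat{C}'$ is the normalized adjacency matrix with self-loops of some graph $\mathcal{G}_{C'}$.

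To apply Proposition~\ref{prop: app convex sum 2 terms} to the pair $(\mathcal{G}_{C'},\mathcal{G}_{A_n})$ with weights $(s,\alpha_n)$, I need $\hat{C}'$ to commute with $\hat{A}_n$. This follows from linearity, since $\hat{A}_n$ commutes with every $\hat{A}_p$ and hence with any linear combination of them. The proposition then yields $\mathcal{G}_C$ with $\hat{C}$ as its normalized adjacency matrix with self-loops.

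The only step needing care is confirming that the induction hypothesis produces a genuine graph object and not just a symmetric non-negative matrix. That is, $\mathcal{G}_{C'}$ must come with a degree matrix $\tilde{D}_{C'}$ satisfying $(\tilde{D}_{C'})_{ii}=\sum_j\tilde{C}'_{ij}$, so that Lemma~\ref{lemma: app eigenvector} and Lemma~\ref{lemma: app prop degrees} apply to it inside Proposition~\ref{prop: app convex sum 2 terms}. The proof of that proposition builds exactly such a $\tilde{D}_C=c\tilde{D}_A$ and checks the degree identity, so the output of each step is valid input for the next.

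As a sanity check, one can also give a direct construction. By Lemma~\ref{lemma: app prop degrees}, all $\tilde{D}_{A_p}=b_p\tilde{D}_{A_1}$. Set $\tilde{D}_C=c\tilde{D}_{A_1}$ and $\tilde{C}=c\sum_p(\alpha_p/b_p)\tilde{A}_p$. Its row sums are $c\sum_p\alpha_p(\tilde{D}_{A_1})_{ii}=(\tilde{D}_C)_{ii}$, and symmetry and non-negativity hold termwise.
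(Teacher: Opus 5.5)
Your proof is correct and follows the same route as the paper: induction on the number of terms, with Proposition~\ref{prop: app convex sum 2 terms} supplying the inductive step. Your version is tighter than the paper's in three ways: you state the pairwise-commutation hypothesis that the Proposition requires and the corollary omits, you check by linearity that $\hat{C}'$ commutes with $\hat{A}_n$, and you make explicit the rescaling by $s=1-\alpha_n$ that the paper's relabeling $\alpha\alpha_p\equiv\alpha_p'$ leaves implicit.
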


\begin{proof}
    By induction, the two-term case is done in the Proposition above. Let us assume it holds for $n$ such graphs:
    \begin{equation}
        \hat{A} \equiv \sum_{p=1}^n\alpha_p\hat{A}_p, \quad\text{where}\quad \sum_{p=1}^n\alpha_p=1.
    \end{equation}
    In order to add a new graph, $\mathcal{G}_{A_{n+1}}$ with $\hat{A}_{n+1}$, we apply Proposition \ref{prop: app convex sum 2 terms} again: For some $\alpha$ and $\beta$, the following is a valid normalized adjacency matrix with self-loops:
    \begin{equation}
        \hat{C}=\alpha\sum_{p=1}^n\alpha_p\hat{A}_p  + \beta\hat{A}_{n+1}, \quad\text{where}\quad \alpha\sum_{p=1}^n\alpha_p+\beta=1.
    \end{equation}
    By relabeling $\alpha\alpha_p \equiv \alpha_p'$ and $\beta\equiv\alpha_{n+1}'$, the condition is met for the $n+1$ graphs as well.
\end{proof}

Finally, a corollary on $k$-regularity:

\begin{corollary}\label{cor: app convex sum k regular}
    Following Proposition \ref{prop: app adj matr prod is adj matr}, if graphs $\mathcal{G}_A$ and $\mathcal{G}_B$ are weighted $k$-regular, then $\mathcal{G}_C$ can be chosen to be $k$-regular.
\end{corollary}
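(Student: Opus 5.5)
The plan is to reuse the construction in the proof of Proposition \ref{prop: app convex sum 2 terms} and track the degree matrix explicitly, as was done for products in Corollary \ref{cor: app commuting k-regular graphs}. The statement says "Following Proposition \ref{prop: app adj matr prod is adj matr}", but the setting is the convex sum, so I will work with the graph $\mathcal{G}_C$ from Proposition \ref{prop: app convex sum 2 terms}. The argument for the product case is identical, so nothing is lost either way.

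\textbf{Step 1 (degree matrices).} Weighted $k$-regularity means every node of $\mathcal{G}_A$ and of $\mathcal{G}_B$ has weighted degree $k+1$, counting the self-loop. Hence $\tilde{D}_A=\tilde{D}_B=(k+1)\mathds{1}$, and the proportionality constant from Lemma \ref{lemma: app prop degrees} is $b=1$. In particular $\hat{A}=\tilde{A}/(k+1)$ and $\hat{B}=\tilde{B}/(k+1)$. Note also that regular normalized adjacency matrices need not commute. Proposition \ref{prop: app convex sum 2 terms} assumes commutation, but in the regular case its construction never uses it, because the degree matrices are already proportional (they are equal).

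\textbf{Step 2 (the construction).} The proof of Proposition \ref{prop: app convex sum 2 terms} gives $\tilde{D}_C=c\tilde{D}_A=c(k+1)\mathds{1}$ and
\begin{equation*}
\tilde{C}=c\left(\alpha\tilde{A}+\tfrac{\beta}{b}\tilde{B}\right)=c(\alpha\tilde{A}+\beta\tilde{B}).
\end{equation*}
I will then check the degree relation directly: $\sum_j\tilde{C}_{ij}=c(\alpha+\beta)(k+1)=c(k+1)$, which matches $\tilde{D}_C$. Choosing $c=1$ makes every weighted degree equal to $k+1$, so $\mathcal{G}_C$ is weighted $k$-regular.

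\textbf{Step 3 (consistency).} As a final check, I will recompute $\tilde{D}_C^{-1/2}\tilde{C}\tilde{D}_C^{-1/2}=(\alpha\tilde{A}+\beta\tilde{B})/(k+1)=\alpha\hat{A}+\beta\hat{B}$. This confirms that the normalized adjacency matrix of the constructed graph is indeed $\hat{C}$. Nonnegativity and symmetry of $\tilde{C}$ carry over because $\alpha,\beta\ge 0$.

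\textbf{Extension and main obstacle.} By the induction in Corollary \ref{cor: app app convex sum n terms}, the result extends to convex sums of $n$ matrices, keeping $c=1$ at each step. The only real obstacle is bookkeeping. The paper is inconsistent about whether the degree of a $k$-regular graph is $k$ or $k+1$; Corollary \ref{cor: app commuting k-regular graphs} writes $k\mathds{1}$ but then uses $k+1$. I will fix the convention that the weighted degree including the self-loop is $k+1$ throughout, so that choosing $c=1$ is well defined.
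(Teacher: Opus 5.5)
Your proposal is correct and is essentially the paper's proof. With $\tilde{D}_A=\tilde{D}_B=(k+1)\mathds{1}$, the construction of Proposition~\ref{prop: app convex sum 2 terms} gives $\tilde{D}_C=c(k+1)\mathds{1}$, and setting the free parameter $c=1$ makes $\mathcal{G}_C$ weighted $k$-regular. Your row-sum and normalization checks are sound additions that do not change the route, as is your observation that commutation is not needed for convex sums of regular graphs (it is still needed for symmetry of $\hat{A}\hat{B}$ in the product case).
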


\begin{proof}
    Similar to the proof of Corollary \ref{cor: app commuting k-regular graphs}. Because $\tilde{D}_C=c\tilde{D}_A=c(k+1)\mathds{1}$, then the free parameter $c$ can be set to 1.
\end{proof}

\section{Detailed trainability calculations}\label{app:Detailed trainability calculations}

Here we describe in more detail the motivation behind the cost function as well as derivations for the results in Section \ref{sec: Trainability and Gradient Variance Bounds}. Furthermore, we work through a simplified version of the large-graph limit and give an example through the use of the IQP ansatz.

\paragraph{Why QSGC?} 

Here, we focus on the first order QSGC because it lies at the core of all other models. The generalization for any $p$-th order QSGC as well as QLGC models is done in Appendix \ref{app: Generalizing the trainability results across different QGCNs}. 

\paragraph{Redefinition of input, ouput, and target states:} For the following gradient analysis, we do not need to include the Hadamard test qubit into the calculations; for that reason, we discard it in our states. Specifically, we rewrite:
\begin{equation}
    \bigg(\Tr_H(\ket{\psi_X}) \ , \ \Tr_H(\ket{\psi_{out}}) \ , \  \Tr_H(\ket{\psi_Y})\bigg) \ \ \equiv \ \ \bigg(\ket{\psi'_X} \ , \ \ket{\psi'_{out}}  , \ \ket{\psi'_Y}\bigg),
\end{equation}
where the $\Tr_H$ traces out the qubit responsible for the (modified) Hadamard test. \\

To this end, we consider the \textit{fully supervised} case for the trainability calculations for simplicity and comment about how the results change for the semi-supervised case.

\subsection{Motivation of the cost function}

A natural measure of the overlap between two states (the output and the target) is their square distance, which in this case is
\begin{equation}
    \mathcal{L}_\text{distance} := \| \ket{\psi'_{out}} - \ket{\psi'_Y} \|^2.
\end{equation}

By expanding this definition, we reach

\begin{align*}
    \mathcal{L}_\text{distance} =& \braket{\psi'_{out}|\psi'_{out}} - \braket{\psi'_{out}|\psi'_{Y}} - \braket{\psi'_{Y}|\psi'_{out}} + \braket{\psi'_{Y}|\psi'_{Y}} \\
    =& 2\left(1-\Re\braket{\psi'_{out}|\psi'_{Y}}\right) \\
    =& 2\left(1+\mathcal{L}_{QGCN}\right).
\end{align*}

Therefore, the constant term as well as the prefactor of 2 can be dropped as they do not contribute to the dependence of the gradient variance on the input size. For future reference, we write the cost function $\mathcal{L}_{QGCN}=(\mathcal{L}_1+\mathcal{L}_1^*)/2$, with $\mathcal{L}_1=-\braket{\psi'_{out}|\psi'_Y}$.

\subsection{Notation and explicit form of $\mathcal{L}_1$}\label{app: Notation and explicit form of L_1}

We give the notations for the index and data register qubit numbers $n_i\equiv\log_2(N_i)$, $n_k\equiv\log_2(N_k)$, so $N_i$ and $N_k$ are the dimensions of the respective Hilbert spaces. This means that the number of nodes $N$ and the dimension of the feature vectors $C$ are bounded by $N_i/2 < N \leq N_i$ and $N_k/2 < C \leq N_k$. Therefore, their orders match, $N_i=O(N)$ and $N_k=O(C)$.

Let us now calculate $\mathcal{L}_1$ explicitly.

The normalized input and target states are
\begin{align}
    \ket{\psi'_X} &= \frac{1}{\sqrt{N}}\sum_{i=1}^{N}\ket{i}\ket{\bold{x}_i}, \\
    \ket{\psi'_Y} &= \frac{1}{\sqrt{N}}\sum_{j=1}^{N}\ket{j}\ket{\bold{y}_j}.
\end{align}

Here, $i,j$ label the nodes of the graph. Next is the output state:
\begin{equation}
    \ket{\psi'_{out}} = \left(\hat{A}\otimes U(\thetaB)\right)\ket{\psi'_X},
\end{equation}
where $\hat{A}$ is the normalized adjacency matrix with self loops as defined in Appendix \ref{app: Graph preliminaries}. This is not a unitary gate; it is implemented through block encoding in a larger circuit\footnote{We note that this can be done as $\hat{A}$ is a subunitary, in the sense that its operator norm is $\|\hat{A}\|_\text{op}\leq 1$.}, but can be used as such in our calculation. 

This leads to the inner product:
\begin{align}
    \mathcal{L}_1 &= -\frac{1}{N} \sum_{i,j=1}^{N}\bra{j}\hat{A}\ket{i}\bra{\bold{y}_j}U(\thetaB)\ket{\bold{x}_i} \\
    &= -\frac{1}{N}\sum_{i,j=1}^{N}\hat{A}_{ij}\bra{\bold{y}_j}U(\thetaB)\ket{\bold{x}_i}. \label{eq: app explicit cost}
\end{align}

Here, the ansatz $U(\thetaB)$ is generic; however, with minimal loss of generality, we can assume that the ansatz takes the following form\footnote{We note that we could add a fixed gate $W_{D+1}$ at the end of the circuit, as $U(\mathbf{\thetaB}) := W_{D+1}\cdot\prod_{\eta=D}^1 \exp{(-i\theta_\eta V_\eta)} \cdot W_\eta$, as is usual in many ansätze. The results are derived identically if one keep this gate into account. \label{foot: extra gate}}:
\begin{equation}\label{eq: app ansatz}
    U(\mathbf{\thetaB}) := \prod_{\eta=D}^1 \exp{(-i\theta_\eta V_\eta)} \cdot W_\eta.
\end{equation}
The parameters $\thetaB=(\theta_1,\theta_2,\dots,\theta_D)$ are trainable, $D$ is thus the dimension of $\thetaB$. This is not guaranteed to be equal to the depth of the ansatz, as usually some of the gates containing these parameters can be performed simultaneously. The number $D$ is usually related to the size of the data register $n_k$ only. Finally, the product ranges from $D$ to 1 (descending) as the order of the parametrized gates is from right to left in order to be applied in ascending order to the state $\ket{\bold{x}_i}$.

The unitary operators $V_\eta$ and $W_\eta$ are not trainable, but ansatz-dependent. $V_k$ is also Hermitian and squares to the identity, which leads to a useful formula:
\begin{equation}\label{eq:helper exponential expansion}
    \exp{(-i\theta_\eta V_\eta)} = \cos\theta_\eta \mathds{1} - i\sin\theta_\eta V_\eta.
\end{equation}
This can be seen by taking the Taylor expansion of the LHS and separating by even and odd terms.

\subsection{Derivation of the gradient variance}\label{app: Derivation of the gradient variance}

To derive the gradient variance for the Quantum Simple Graph Convolution, $\mathcal{L}_{QSGC}$, we start by decomposing it in terms of $\mathcal{L}_1$, as
\begin{equation}\label{eq: app var sum of terms}
    \Var\left(\partial_\mu \mathcal{L}_{QSGC}\right) = \frac{1}{4}\left(\Var(\partial_\mu \mathcal{L}_1) + \Var(\partial_\mu \mathcal{L}_1^*) + 2\Cov(\partial_\mu \mathcal{L}_1, \partial_\mu \mathcal{L}_1^*)\right).
\end{equation}
Here, $\partial_\mu\equiv \partial/\partial\theta_\mu$ is the $\mu$-th direction of the gradient. Computing the builing block of Eq.~\ref{eq: app var sum of terms}:
\begin{align}
    \partial_\mu \mathcal{L}_1 =& -\frac{\partial}{\partial\theta_\mu}\frac{1}{N}\sum_{i,j=1}^{N}\hat{A}_{ij}\bra{\bold{y}_j}U(\thetaB)\ket{\bold{x}_i} \\
    =& -\frac{1}{N}\sum_{i,j=1}^{N}\hat{A}_{ij}\bra{\bold{y}_j}\partial_\mu U(\thetaB)\ket{\bold{x}_i}.
\end{align}

For the ansatz, we have
\begin{align}
    \partial_\mu U(\thetaB) =& \partial_\mu\prod_{\eta=D}^1 \exp{(-i\theta_\eta V_\eta)} \cdot W_\eta \\
    =& \underbrace{\left(\prod_{\eta=D}^{\mu+1} \exp{(-i\theta_\eta V_\eta)} \cdot W_\eta\right)}_{U_L}
    \partial_\mu \exp{(-i\theta_\mu V_\mu)} \cdot W_\mu
    \underbrace{\left(\prod_{\eta=\mu-1}^1 \exp{(-i\theta_\eta V_\eta)} \cdot W_\eta\right)}_{U_R} \\
    =& U_L \left( -iV_\mu \exp{(-i\theta_\mu V_\mu)} \cdot W_\mu\right) U_R. \label{eq:derivative of U}
\end{align}

\paragraph{Expected value is zero.} By taking the expected value of $\partial_\mu \mathcal{L}_1$:
\begin{align}
    \mathbb{E}_{\thetaB}[\partial_\mu \mathcal{L}_1] =& \frac{1}{(2\pi)^D}
    \int \dd\thetaB
    (\partial_\mu \mathcal{L}_1) \nonumber \\
    =& \frac{1}{(2\pi)^D}
    \int \dd\thetaB' \int \dd\theta_\mu
    \frac{1}{N}\sum_{i,j=1}^{N}
    \hat{A}_{ij} 
    \bra{\bold{y}_j}\partial_\mu U(\thetaB)\ket{\bold{x}_i} \nonumber \\
    =& \frac{1}{(2\pi)^DN}\sum_{i,j=1}^{N} \hat{A}_{ij}\int \dd \thetaB'
    \bra{\bold{y}_j} U(\thetaB',\theta_\mu=\pi)-U(\thetaB',\theta_\mu=-\pi)\ket{\bold{x}_i}.
\end{align}
Here, all weights $\theta_
\eta$ range from $-\pi$ to $\pi$. The full integral over $\thetaB$ was separated into an integral over $\theta_\mu$ and the integral over all other parameters, denoted $\thetaB'$. We notice that thanks to Eq.~\ref{eq:helper exponential expansion}, $\exp{(-i(\pm\pi) V_\mu)}=\mathds{1}$ and thus the two terms in the braket above cancel out. By a similar calculation, this extends to the whole cost gradient, so $\mathbb{E}_{\thetaB}[\mathcal{L}_{QSGC}]=0$. \\

Now we can calculate the gradient variance considering the range of all parameters to be $[-\pi,\pi]\ni\theta_\eta$. When integrating over the parameters, this range is assumed.
\begin{align}
    \Var(\partial_\mu \mathcal{L}_1) =& \frac{1}{(2\pi)^D}\int \dd\thetaB \left(\partial_\mu \mathcal{L}_1\right)^2 \\
    =& \frac{1}{(2\pi)^D} \int \dd\thetaB' \int\dd\theta_\mu \frac{1}{N^2}\sum_{i,j=1}^{N}\hat{A}_{ij}\bra{\bold{y}_j}\partial_\mu U(\thetaB)\ket{\bold{x}_i}
    \sum_{k,l=1}^{N}\hat{A}_{kl}\bra{\bold{y}_l}\partial_\mu U(\thetaB)\ket{\bold{x}_k} \\
    =& \frac{1}{(2\pi)^DN^2}\sum_{i,j,k,l=1}^{N} \hat{A}_{ij}\hat{A}_{kl}\int \dd \thetaB' \int \dd \theta_\mu
    \bra{\bold{y}_j}\partial_\mu U(\thetaB)\ket{\bold{x}_i}
    \bra{\bold{y}_l}\partial_\mu U(\thetaB)\ket{\bold{x}_k}.
\end{align}

The same separation into $\theta_\mu$ and $\thetaB'$ was applied as for the expected value calculation.

Let us start with the term $(i,j,k,l)$ and integrate it over $\theta_\mu$, using Eqs.~\ref{eq:helper exponential expansion} and \ref{eq:derivative of U}. We denote this by $I_\mu$:
\begin{align}
    I_\mu =& \int \dd \theta_\mu
    \bra{\bold{y}_j}\partial_\mu U(\thetaB)\ket{\bold{x}_i}
    \bra{\bold{y}_l}\partial_\mu U(\thetaB)\ket{\bold{x}_k} \\
    =& \int \dd \theta_\mu
    \bra{\bold{y}_j}U_L \left( -iV_\mu \exp{(-i\theta_\mu V_\mu)} \cdot W_\mu\right) U_R\ket{\bold{x}_i}
    \bra{\bold{y}_l}U_L \left( -iV_\mu \exp{(-i\theta_\mu V_\mu)} \cdot W_\mu\right) U_R\ket{\bold{x}_k} \\
    =& -\int \dd \theta_\mu 
    \bra{\bold{y}_j}U_L V_\mu \left(\cos\theta_\mu \mathds{1} - i\sin\theta_\mu V_\mu \right) W_\mu U_R\ket{\bold{x}_i} \nonumber\\ &\phantom{..\int \dd \theta_\mu}\cdot
    \bra{\bold{y}_l}U_L V_\mu \left(\cos\theta_\mu \mathds{1} - i\sin\theta_\mu V_\mu \right) W_\mu U_R\ket{\bold{x}_k} \label{eq:line decompose exp}\\
    =& -\int\dd\theta_\mu (\cos\theta_\mu)^2
    \bra{\bold{y}_j}U_L V_\mu W_\mu U_R\ket{\bold{x}_i} 
    \bra{\bold{y}_l}U_L V_\mu W_\mu U_R\ket{\bold{x}_k} \nonumber\\
    &+ \int\dd\theta_\mu (\sin\theta_\mu)^2 
    \bra{\bold{y}_j}U_L V_\mu^2 W_\mu U_R\ket{\bold{x}_i}
    \bra{\bold{y}_l}U_L V_\mu^2 W_\mu U_R\ket{\bold{x}_k} \nonumber\\
    &-i\int\dd\theta_\mu\cos\theta_\mu\sin\theta_\mu 
    \bra{\bold{y}_j}U_L V_\mu W_\mu U_R\ket{\bold{x}_i}
    \bra{\bold{y}_l}U_L V_\mu^2 W_\mu U_R\ket{\bold{x}_k} \nonumber\\
    &-i\int\dd\theta_\mu\cos\theta_\mu\sin\theta_\mu
    \bra{\bold{y}_j}U_L V_\mu^2 W_\mu U_R\ket{\bold{x}_i}
    \bra{\bold{y}_l}U_L V_\mu W_\mu U_R\ket{\bold{x}_k}.
\end{align}

The second equality comes from Eq.~\ref{eq:helper exponential expansion}, the third from Eq.~\ref{eq:derivative of U}, which we then separate by decomposing the product of brakets. One can now check that the first two trigonometric integrals are equal to $\pi$ and the last two are 0. Considering also that $V_\mu^2=\mathds{1}$, we reach the final relation
\begin{equation}
    I_\mu = \pi\left( \bra{\bold{y}_j}U_L W_\mu U_R\ket{\bold{x}_i} 
    \bra{\bold{y}_l}U_L W_\mu U_R\ket{\bold{x}_k} - 
    \bra{\bold{y}_j}U_L V_\mu W_\mu U_R\ket{\bold{x}_i}
    \bra{\bold{y}_l}U_L V_\mu W_\mu U_R\ket{\bold{x}_k}\right).
\end{equation}

Fortunately, after this tedious calculation, the other integrals follow. The only difference is that in this one extra $V_\mu$ factors were present, which lead to a sign cancellation only as the resulting terms were swapped. Therefore, each integral (over e.g.~$\theta_\lambda$) takes the current terms and splits each into two: one product of brakets that do not contain the $V_\lambda$ minus one one product of brakets that do. By counting all of these terms with a $D$-bit counter $\Vec{\alpha}$, we obtain the desired result for the variance of $\partial_\mu \mathcal{L}_1$:
\begin{equation}\label{eq:app variance of L1 gradient}
    \Var(\partial_\mu \mathcal{L}_1) = \frac{1}{2^D N^2}\sum_{i,j,k,l=1}^{N}\hat{A}_{ij}\hat{A}_{kl}\sum_{\Vec{\alpha}\in\{0,1\}^D} (-1)^{ w(\Vec{\alpha})} 
    \bra{\bold{y}_j} \prod_{\eta=D}^1 V_\eta^{\alpha_\eta} W_\eta \ket{\bold{x}_i}
    \bra{\bold{y}_l} \prod_{\eta=D}^1 V_\eta^{\alpha_\eta} W_\eta \ket{\bold{x}_k}.
\end{equation}

The string $\Vec{\alpha}$ tracks the branch we took to follow this term as we computed the integrals, and the number of minuses factored comes from the numbers of 1s present in the string, thus the Hamming weight $ w(\Vec{\alpha})$. Reducing it to a simpler form:
\begin{equation}\label{eq:app var L1}
    \Var(\partial_\mu \mathcal{L}_1) = \frac{1}{2^D N^2}\sum_{\Vec{\alpha}\in\{0,1\}^D} (-1)^{ w(\Vec{\alpha})} 
    \left( \sum_{i,j=1}^{N}\hat{A}_{ij}
    \bra{\bold{y}_j} \prod_{\eta=D}^1 V_\eta^{\alpha_\eta} W_\eta \ket{\bold{x}_i}
    \right)^2.
\end{equation}

We can do a similar calculation for $\Var(\partial_\mu \mathcal{L}_1^*)$, but first we note that this is not trivially the complex conjugate of Eq.~\ref{eq:app variance of L1 gradient}.
The derivative of $\mathcal{L}_1^*$ is 
\begin{equation}
    \partial_\mu \mathcal{L}_1^* = \partial_\mu \left[\frac{1}{N}\sum_{i,j=1}^{N}\bra{\bold{x}_i}\prod_{\eta=1}^D W_\eta^\dagger \exp(+i\theta_\eta V_\eta) \ket{\bold{y}_j}\right],
\end{equation}
thus instead of extracting a $-i V_\mu$ factor from the exponential as done previously, now a plus sign appears. Therefore, 
\begin{equation}
    \partial_\mu \mathcal{L}_1^* = - (\partial_\mu \mathcal{L}_1)^*.
\end{equation}

When computing the variance, the minus sign above is squared and vanishes, hence 
\begin{equation}\label{eq:app var L1 conj}
    \Var(\partial_\mu \mathcal{L}_1^*) = 
    \frac{1}{2^D N^2}\sum_{\Vec{\alpha}\in\{0,1\}^D} (-1)^{ w(\Vec{\alpha})} 
    \left( \sum_{i,j=1}^{N}\hat{A}_{ij}
    \bra{\bold{x}_i} \prod_{\eta=1}^D W_\eta^\dagger V_\eta^{\alpha_\eta} \ket{\bold{y}_j}
    \right)^2.
\end{equation}

Finally, the calculation for the covariance is very similar, except the decomposition in line \ref{eq:line decompose exp}. Here, one of the brakets is complex conjugated, as
\begin{equation}
    \bra{\bold{y}_j}U_L V_\mu \left(\cos\theta_\mu \mathds{1} - i\sin\theta_\mu V_\mu \right) W_\mu U_R\ket{\bold{x}_i}
    \bra{\bold{x}_k}U_L W_\mu^\dagger V_\mu \left(\cos\theta_\mu \mathds{1} + i\sin\theta_\mu V_\mu \right)  U_R\ket{\bold{y}_l},
\end{equation}
and the plus means that when separating the terms, both the $\cos^2$ and the $\sin^2$ have a positive sign, hence we drop the $(-1)^{ w(\Vec{\alpha})}$ in the answer for the covariance:
\begin{equation}\label{eq:app cov L1 L1 conj}
    \Cov(\partial_\mu \mathcal{L}_1, \partial_\mu \mathcal{L}_1^*) = \frac{1}{2^DN^2} \sum_{\Vec{\alpha}\in\{0,1\}^D} \left|\sum_{i,j=1}^{N} \hat{A}_{ij}\bra{\bold{y}_j} \prod_{\eta=D}^1 V_\eta^{\alpha_\eta} W_\eta \ket{\bold{x}_i}
    \right|^2.
\end{equation}

Adding up these results as given in Eq.~\ref{eq: app var sum of terms} leads to the full cost gradient variance given in Eq.~\ref{eq: variance}:
\begin{equation}\label{eq: app variance}
    \Var(\partial_\mu \mathcal{L}_{QSGC}) = 
    \frac{1}{2^{D+1} N^2}\sum_{\Vec{\alpha}\in\{0,1\}^D} 
    \left[(-1)^{ w(\Vec{\alpha})}\Re\left(Z_{\Vec{\alpha}}^2\right) + \left|Z_{\Vec{\alpha}}^2\right|\right],
\end{equation}
where $\Vec{\alpha}$ sums over all $D$-bit strings and the argument 
\begin{equation}
    Z_{\Vec{\alpha}} = \sum_{i,j=1}^{N}\hat{A}_{ij} \bra{\bold{y}_j} \underbrace{\prod_{\eta=D}^1 V_\eta^{\alpha_\eta} W_\eta}_{\equiv D_{\Vec{\alpha}}} \ket{\bold{x}_i}.
\end{equation}

\subsection{Results for regular graphs}\label{app: Results for regular graphs}

For a $k$-regular graph, each node has a degree of $k+1$ (including self-connections). By applying Eq.~\ref{eq: app norm adj matrix}, $\hat{A}_{ij} = 1/\sqrt{\deg(i)\deg(j)}=1/(k+1)$ if nodes $i$ and $j$ are connected by an edge, and 0 otherwise. Here we noted the degree of node $i$ by $\deg(i)$. It is easier to work with the separate terms derived above, namely Eqs.~\ref{eq:app var L1} and \ref{eq:app cov L1 L1 conj}. The former equation becomes
\begin{equation}
    \Var(\partial_\mu \mathcal{L}_1) = 
    \frac{1}{2^DN^2 (k+1)^2} 
    \sum_{\Vec{\alpha}\in\{0,1\}^D} 
    \left(
    \sum_{<i,j>} 
    \bra{\bold{y}_j} D_{\Vec{\alpha}}\ket{\bold{x}_i}
    \right)^2,
\end{equation}
whilst the latter is simply
\begin{equation}
    \Cov(\partial_\mu \mathcal{L}_1, \partial_\mu \mathcal{L}_1^*) = 
    \frac{1}{2^DN^2 (k+1)^2} 
    \sum_{\Vec{\alpha}\in\{0,1\}^D} 
    \Biggl|
    \sum_{<i,j>} 
    \bra{\bold{y}_j} D_{\Vec{\alpha}}\ket{\bold{x}_i}
    \Biggr|^2.
\end{equation}
By $<i,j>$ we mean that nodes $i$ and $j$ must be connected (this includes the self loops $<i,i>$). For further reference we use the $D_{\Vec{\alpha}}$ for the ansatz terms.

\subsection{Expected values over the input and target states}\label{app: Expected values over the input and target states}

The expected value of the cost gradient variance is done by considering the distributions of input and target feature vectors respectively, and averaging each node feature over these distributions. 

\paragraph{Feature set-up.} Firstly, we take a look at the encoding of the input and target states, respectively $\ket{\bold{x}_i}$ and $\ket{\bold{y}_i}$, for node $i$. Throughout our experiments, the target states are one-hot encoded onto the first $C$ states of some basis (usually, the computational basis); let us call this basis
$\{\ket{\bold{e}_a}\}_{a\in[1,N_k]}$. For simplicity, we consider the calculation for $C=2^{n_k}\equiv N_k$.

The input states are multi-hot encoded, meaning that each state $\ket{\bold{x}_i}$ is equal to some equal superposition of a subset of basis states, based on the encoded input bit string, call it $\vec{\gamma}$.\footnote{Notice in Eq.~\ref{eq:app target input} that because of the equal superposition of basis states, the normalisation factor depends on the Hamming weight of the bit string $\Vec{\gamma}$.} We write this as
\begin{equation}\label{eq:app target input}
    \begin{split}
        \ket{\bold{y}_i} =& \ket{\bold{e}_a}, \\
        \ket{\bold{x}_i} =& \frac{1}{\sqrt{ w(\vec{\gamma})}}\sum_{b=1}^{N_k}\gamma_b\ket{\bold{e}_b}.
    \end{split}
\end{equation}
The total number of possible target states is $N_k$, equal to the that of basis states. The number of possible input states is that of all non-zero $N_k$-bit strings, which is $2^{N_k}-1$.

To this end, the input and target states are considered independent random variables\footnote{In practice, the input and target will have hidden correlations; otherwise, the model will not learn any useful information. However, for our purposes, we take them as independent for the sake of estimating the expected value.}, and are independent across the nodes of the graph. Therefore, the expected value operator $\mathbb{E}[\cdot]$ acts on any function $f$ of the states $\ket{\bold{x}_i}$ and $\ket{\bold{y}_i}$ as 
\begin{equation}\label{eq:app target input average}
    \begin{split}
        \mathbb{E}[f(\ket{\bold{y}_i})] =& 
        \frac{1}{N_k}\sum_{a=1}^{N_k}f(\ket{\bold{e}_a}), \\
        \mathbb{E}[f(\ket{\bold{x}_i})] =& 
        \frac{1}{2^{N_k}-1}\sum_{\substack{\Vec{\gamma}\in\{0,1\}^{N_k} \\ \Vec{\gamma}\neq\Vec{0}}}f\left(\frac{1}{\sqrt{ w(\vec{\gamma})}}\sum_{b=1}^{N_k}\gamma_b
        \ket{\bold{e}_b}\right).
    \end{split}
\end{equation}
Note that we use states, rather than density operators. This is because the distribution actually runs over decisions for the (classical) inputs and targets, that are then implemented onto the QGNN. This is a classical sampling process where each of the runs has a determined input and target, rather than running any of them with a probability and ``forgetting'' which decision was taken. 

\paragraph{Semi-supervision.} In the semi-supervised case, we only label $M$ out of the $N$ nodes. In our calculation, this is equivalent to allowing our target states $\ket{\bold{y}_j}$ to collapse to zero with a (classical) probability $p_s=(N-M)/N$. Although this state is not realizable physically, we can use it as a mathematical trick to generalize our results. In this case, in our expected value over targets we now have a convex sum between the relation above and zero, as
\begin{equation}
    \mathbb{E}'[f(\ket{\bold{y}_i})] = 
        (1-p_s)\cdot\frac{1} {N_k}\sum_{a=1}^{N_k}f(\ket{\bold{e}_a}) + p_s\cdot 0  = \frac{M}{NN_k}\sum_{a=1}^{N_k}f(\ket{\bold{e}_a}).
\end{equation}
As expected, this scales linearly with the fraction of labeled nodes. The fewer these nodes are, the harder it is to train. Therefore, we consider that a significant fraction of the nodes is labeled, such that $M/N \sim O(1)$. \\

Returning to the expected variance calculation, in order to find the expected value of the variance, $\mathbb{E}\left[\Var(\partial_\mu \mathcal{L}_{QSGC})\right]$, we will average the three separate terms using Eq.~\ref{eq: app var sum of terms} first:
\begin{align}
    \mathbb{E}\left[\Var(\partial_\mu \mathcal{L}_1)\right] &= 
    \frac{1}{2^DN^2 (k+1)^2} 
    \sum_{\Vec{\alpha}\in\{0,1\}^D}
    (-1)^{w(\Vec{\alpha})}
    \mathbb{E}\left[
    \left(
    \sum_{<i,j>} 
    \bra{\bold{y}_j}D_{\Vec{\alpha}}\ket{\bold{x}_i}
    \right)^2\right], \label{eq: app avg var L1}\\
    \mathbb{E}\left[\Var(\partial_\mu \mathcal{L}_1^*)\right] &= \mathbb{E}\left[\Var(\partial_\mu \mathcal{L}_1)\right]^*, \\
    \mathbb{E}\left[\Cov(\partial_\mu \mathcal{L}_1,\partial_\mu \mathcal{L}_1^*)\right] &= \frac{1}{2^DN^2 (k+1)^2} 
    \sum_{\Vec{\alpha}\in\{0,1\}^D} 
    \mathbb{E}\left[
    \left|
    \sum_{<i,j>} 
    \bra{\bold{y}_j}D_{\Vec{\alpha}}\ket{\bold{x}_i}
    \right|^2\right]. \label{eq: app avg cov L1 L1 conj}
\end{align}

By expanding the square terms from Eqs.~\ref{eq: app avg var L1} and \ref{eq: app avg cov L1 L1 conj},
\begin{equation}\label{eq:app relevant averages}
    \mathbb{E}\left[
    \sum_{\substack{{<i,j>} \\ <k,l>}} 
    \bra{\bold{y}_j}D_{\Vec{\alpha}}\ket{\bold{x}_i}
    \bra{\bold{y}_l}D_{\Vec{\alpha}}\ket{\bold{x}_k}
    \right]\equiv S_{\Vec{\alpha}}^v, \quad \text{and} \quad
    \mathbb{E}\left[
    \sum_{\substack{{<i,j>} \\ <k,l>}} 
    \bra{\bold{y}_j}D_{\Vec{\alpha}}\ket{\bold{x}_i}
    \bra{\bold{x}_k}D_{\Vec{\alpha}}^\dagger \ket{\bold{y}_l}
    \right]\equiv S_{\Vec{\alpha}}^c,
\end{equation}
we observe that the averages are taken over pairs of nodes $(i,j)$ and $(k,l)$. As such, one needs to take into account whether the (directed) edges are disconnected, or have one mutual node, or both, as each of these cases requires a different distribution over the input and target random variables. For simplicity, we denote the expressions by $S_{\Vec{\alpha}}^v$ and $S_{\Vec{\alpha}}^c$ (which will be used for the variance and covariance terms respectively).

\subsubsection{Dependence of the variance with the graph size $N$}\label{app: Dependence of the variance with the graph size N_i}

When considering the distribution over $\ket{\bold{x}_i}$, $\ket{\bold{y}_j}$, $\ket{\bold{x}_k}$, and $\ket{\bold{y}_l}$, we need to separate the following cases based on whether $i=k$ or $j=l$ or both, or neither. 

\vspace{1em}

\begin{table}[h!]
\begin{center}
\caption{Counting the number of possible combinations of pairs of edges based on a constraint (in the first column). The values are denoted by $R_{||}$, $R_\land$, $R_\lor$, and $R_{|}$, respectively.}
\vspace{0.5em}
\begin{tabular}{||c c c c||} 
    \hline
    Constraint & Independent variables & Number of cases & Denoted as  \\ [0.5ex] 
    \hline\hline
    $i\neq k$, $j\neq l$ & $\ket{\bold{x}_i}$, $\ket{\bold{x}_k}$, $\ket{\bold{y}_j}$, $\ket{\bold{y}_l}$ & $N(N-1)(k+1)^2-Nk(k+1)$ & $R_{||}$ \\ 
    \hline
    $i= k$, $j\neq l$ & $\ket{\bold{x}_i}$, $\ket{\bold{y}_j}$, $\ket{\bold{y}_l}$ & $Nk(k+1)$ & $R_{\land}$ \\
    \hline
    $i\neq k$, $j= l$ & $\ket{\bold{x}_i}$, $\ket{\bold{x}_k}$, $\ket{\bold{y}_j}$ & $Nk(k+1)$ & $R_\lor$ \\
    \hline
    $i= k$, $j= l$ & $\ket{\bold{x}_i}$, $\ket{\bold{y}_j}$ & $N(k+1)$ & $R_|$ \\ [1ex] 
    \hline
\end{tabular}
\label{tab: app R coefficients}
\end{center}
\end{table}

\vspace{1em}
In Table \ref{tab: app R coefficients}, we count all the pairs of edges categorized by the constraint on whether input or target feature vectors coincide or not and denote them using the $R$-notation given in the last column\footnote{The indices correspond to whether the edges have zero common nodes (``$||$''), or the $\ket{\bold{x}}$ or $\ket{\bold{y}}$ in common (``$\land$'' or ``$\lor$''), or both (``$|$'').}. Rewriting the expected values from Eq.~\ref{eq:app relevant averages}:
\begin{equation}\label{eq: app Sav term}
    \begin{split}
        S_{\Vec{\alpha}}^v = & 
        R_{||} \mathbb{E}\left[
        \bra{\bold{y}_j}D_{\Vec{\alpha}}\ket{\bold{x}_i}
        \bra{\bold{y}_l}D_{\Vec{\alpha}}\ket{\bold{x}_k}
        \right] + R_{\land} \mathbb{E}\left[
        \bra{\bold{y}_j}D_{\Vec{\alpha}}\ket{\bold{x}_i}
        \bra{\bold{y}_l}D_{\Vec{\alpha}}\ket{\bold{x}_i}
        \right] \\
        &+ R_{\lor} \mathbb{E}\left[
        \bra{\bold{y}_j}D_{\Vec{\alpha}}\ket{\bold{x}_i}
        \bra{\bold{y}_j}D_{\Vec{\alpha}}\ket{\bold{x}_k}
        \right] +R_{|} \mathbb{E}\left[
        \bra{\bold{y}_j}D_{\Vec{\alpha}}\ket{\bold{x}_i}
        \bra{\bold{y}_j}D_{\Vec{\alpha}}\ket{\bold{x}_i}
        \right],
    \end{split}
\end{equation}
and
\begin{equation}\label{eq: app sac term}
    \begin{split}
        S_{\Vec{\alpha}}^c = & 
        R_{||} \mathbb{E}\left[
        \bra{\bold{y}_j}D_{\Vec{\alpha}}\ket{\bold{x}_i}
        \bra{\bold{x}_k}D_{\Vec{\alpha}}\ket{\bold{y}_l}
        \right] + R_{\land} \mathbb{E}\left[
        \bra{\bold{y}_j}D_{\Vec{\alpha}}\ket{\bold{x}_i}
        \bra{\bold{x}_i}D_{\Vec{\alpha}}\ket{\bold{y}_l}
        \right] \\
        &+ R_{\lor} \mathbb{E}\left[
        \bra{\bold{y}_j}D_{\Vec{\alpha}}\ket{\bold{x}_i}
        \bra{\bold{x}_k}D_{\Vec{\alpha}}\ket{\bold{y}_j}
        \right] +R_{|} \mathbb{E}\left[
        \bra{\bold{y}_j}D_{\Vec{\alpha}}\ket{\bold{x}_i}
        \bra{\bold{x}_i}D_{\Vec{\alpha}}\ket{\bold{y}_j}
        \right].
    \end{split}
\end{equation}

\paragraph{Simple asymptotic analysis.} In order to calculate the asymptotic behavior of the gradient variance with respect to the graph size (quantified by $N$ and $k$), one can notice that the expected values of different node pairs are only dependent on the size of the distributions, thus on $N_k$, and the terms $D_{\Vec{\alpha}}$ derived from the ansatz. Whilst our analysis consideres $N$ and $N_k$ as independent parameters, all standard ansätze used depend strictly on the data register size, hence also on the logarithm of $N_k$. It is thus safe to assume that all of the edge pair expected values are equal to $O_\text{gr}(1)$ in both the node number $N$ and the average connectivity $k$.\footnote{Here we use the ``graph'' subscript $O_\text{gr}$ (and $\Theta_\text{gr}$ etc.) to restrict the result to large $N$ or $k$ asymptotes, for fixed $N_k$.} As such, both of the summations above have the order of the leading term, $S_{\Vec{\alpha}}^v, S_{\Vec{\alpha}}^c = O_\text{gr}(R_{||}(N,k))$. This implies that, when keeping $N_k$ constant,
\begin{align}
    \mathbb{E}[\Var(\partial_\mu \mathcal{L}_{QSGC})] =& \frac{1}{4}(\mathbb{E}[\Var(\partial_\mu L)] + \mathbb{E}[\Var(\partial_\mu L)^*] + 2\mathbb{E}[\Cov(\partial_\mu \mathcal{L}_1, \partial_\mu \mathcal{L}_1^*)]) \nonumber \\
    =& \frac{1}{2^{D+2}N^2 (k+1)^2} \sum_{\Vec{\alpha}\in\{0,1\}^D}
    \left[ (-1)^{
     w(\Vec{\alpha})}((S_{\Vec{\alpha}}^v) + (S_{\Vec{\alpha}}^v)^*) + 2(S_{\Vec{\alpha}}^c)\right] \nonumber\\
     =& \frac{1}{2^{D+1}N^2 (k+1)^2} \sum_{\Vec{\alpha}\in\{0,1\}^D}
    \left[ (-1)^{
     w(\Vec{\alpha})}\Re(S_{\Vec{\alpha}}^v) + S_{\Vec{\alpha}}^c\right] \label{eq: app EVar in terms of Sav and Sac} \\
     = & \frac{1}{2^{D+1}N^2 (k+1)^2} \sum_{\Vec{\alpha}\in\{0,1\}^D}
     \left[ (-1)^{
     w(\Vec{\alpha})}O_\text{gr}(R_{||}) + O_\text{gr}(R_{||})\right] \nonumber \\
     =& \frac{1}{2^{D+1}N^2 (k+1)^2} O_\text{gr}(N^2 (k+1)^2) \nonumber \\
     =& O_\text{gr}(1).
\end{align}
Therefore, to leading order, the normalization of the adjacency matrix exactly cancels the summation of the different terms. Unless both the expected values $\mathbb{E}\left[\bra{\bold{y}_j}D_{\Vec{\alpha}}\ket{\bold{x}_i}\bra{\bold{y}_l}D_{\Vec{\alpha}}\ket{\bold{x}_k}\right]$ and $\mathbb{E}\left[\bra{\bold{y}_j}D_{\Vec{\alpha}}\ket{\bold{x}_i}\bra{\bold{x}_k}D_{\Vec{\alpha}}^\dagger\ket{\bold{y}_l}\right]$ vanish\footnote{We will see later that this requires the ansatz itself to be very restrictive.}, the bound is strict as function of the graph size and connectivity, $\mathbb{E}[\Var(\partial_\mu \mathcal{L}_{QSGC})] = \Theta_\text{gr}(1)$.\\

We remind the reader that this is only an evaluation of the average variance over all possible inputs and targets. One can also find approximate (asymptotic) bounds of the full variance by also considering the spread $\Delta_\text{i,t}[\Var(\partial_\mu L)]$. One option is to approximate it further with a standard deviation, taking the variance $\sqrt{\Var_\text{i,t}(\Var(\partial_\mu \mathcal{L}_{QSGC}))}$, but this may be tricky as it assumes a symmetric spread which may not be true. We leave this task for future work.

\subsection{Calculation of $S_{\Vec{\alpha}}^v$ and $S_{\Vec{\alpha}}^c$ and dependence on $N_k$}\label{app: Dependence of the variance with the number of features}

Here, we calculate the explicit forms of the terms $S_{\Vec{\alpha}}^v$ and $S_{\Vec{\alpha}}^c$, by computing each of the terms separately. We begin with the ``$||$'' term in Eq.~\ref{eq: app Sav term}.

\begin{align*}
    \mathbb{E} &\left[
    \bra{\bold{y}_j}D_{\Vec{\alpha}}\ket{\bold{x}_i}
    \bra{\bold{y}_l}D_{\Vec{\alpha}}\ket{\bold{x}_k}
    \right] = \\
    &= \frac{1}{N_k^2(2^{N_k}-1)^2}
    \left(\sum_{a=1}^{N_k}\bra{e_a}\right)
    D_{\Vec{\alpha}}
    \left(\sum_{\Vec{\gamma}\neq 0}\frac{1}{\sqrt{w(\Vec{\gamma})}}\sum_{b=1}^{N_k}\gamma_b\ket{e_b}\right)
    \left(\sum_{c=1}^{N_k}\bra{e_c}\right)
    D_{\Vec{\alpha}}
    \left(\sum_{\Vec{\delta}\neq 0}\frac{1}{\sqrt{w(\Vec{\delta})}}\sum_{d=1}^{N_k}\delta_d\ket{e_d}\right).
\end{align*}
Here, we used expressions given in Eq.~\ref{eq:app target input average}, where $a$, $b$, $c$, and $d$ label the basis indices, and $\Vec{\gamma}$ and $\Vec{\delta}$ the bit string labels ranging from 1 to $2^{N_k}$. Rearranging the summations, the expression becomes
\begin{equation}
    \frac{1}{N_k^2(2^{N_k}-1)^2}
    \left[
    \sum_{a,b,c,d} \bra{e_a}D_{\Vec{\alpha}}\ket{e_b}
    \bra{e_c}D_{\Vec{\alpha}}\ket{e_d}
    \underbrace{\left(
    \sum_{\Vec{\gamma} \neq 0} \frac{\gamma_b}
    {\sqrt{w(\Vec{\gamma})}}
    \right)}_{H'_1}
    \underbrace{\left(
    \sum_{\Vec{\delta} \neq 0} \frac{\delta_d}
    {\sqrt{w(\Vec{\delta})}}
    \right)}_{H'_1}
    \right].
\end{equation}

The last two summations can be done independently (using Lemma \ref{lemma: app H numbers} in Appendix \ref{app: Helper lemmas}). For now, we simply denote the sums by $H'_1$, and will plug them in at the end. The expression for the desired term becomes \begin{equation}\label{eq: app first term Sav}
    \mathbb{E} 
    \left[
    \bra{\bold{y}_j}D_{\Vec{\alpha}}\ket{\bold{x}_i}
    \bra{\bold{y}_l}D_{\Vec{\alpha}}\ket{\bold{x}_k}
    \right] =
    \frac{(H'_1)^2}{N_k^2(2^{N_k}-1)^2}
    \left[
    \sum_{a,b,c,d} \bra{e_a}D_{\Vec{\alpha}}\ket{e_b}
    \bra{e_c}D_{\Vec{\alpha}}\ket{e_d}
    \right].
\end{equation}

Next, by considering the following relation
\begin{equation}\label{eq: app full superposition}
    \frac{1}{\sqrt{N_k}}\sum_{a=1}^{N_k}\ket{e_a} = \ket{\Vec{1}} =: \ket{+}^{\otimes n_k} = H^{\otimes n_k} \ket{0}^{\otimes n_k},
\end{equation}
the expression in Eq.~\ref{eq: app first term Sav} reduces to 
\begin{equation}
    \mathbb{E} 
    \left[
    \bra{\bold{y}_j}D_{\Vec{\alpha}}\ket{\bold{x}_i}
    \bra{\bold{y}_l}D_{\Vec{\alpha}}\ket{\bold{x}_k}
    \right] =
    \frac{(H'_1)^2}{(2^{N_k}-1)^2}\left(\bra{\Vec{1}}D_{\Vec{\alpha}}\ket{\Vec{1}}\right)^2.
\end{equation}

For simplicity, we continue with the ``$\lor$'' term in Eq.~\ref{eq: app Sav term},
\begin{align}
    \mathbb{E}\left[
    \bra{\bold{y}_j}D_{\Vec{\alpha}}\ket{\bold{x}_i}
    \bra{\bold{y}_j}D_{\Vec{\alpha}}\ket{\bold{x}_k}
    \right] =&
    \frac{1}{N_k(2^{N_k}-1)^2}\sum_{a,b,d}\bra{e_a}D_{\Vec{\alpha}}\ket{e_b}\bra{e_a}D_{\Vec{\alpha}}\ket{e_d}
    \sum_{\Vec{\gamma}\neq 0}\frac{\gamma_b}{\sqrt{w(\Vec{\gamma})}}
    \sum_{\Vec{\delta}\neq 0}\frac{\delta_d}{\sqrt{w(\Vec{\delta})}} \nonumber\\
    =& \frac{(H'_1)^2}{N_k(2^{N_k}-1)^2}\sum_{a,b,d}[D_{\Vec{\alpha}}]_{ab}[D_{\Vec{\alpha}}]_{ad} \nonumber\\
    =& \frac{(H'_1)^2}{(2^{N_k}-1)^2} \bra{\Vec{1}}D_{\Vec{\alpha}}^TD_{\Vec{\alpha}}\ket{\Vec{1}}, \label{eq: app term 2 in Sav}
\end{align}
where in the last two lines we wrote the matrix elements of $D_{\Vec{\alpha}}$ explicitly and then contracted using the state $\ket{\Vec{1}}$.

The ``$\land$'' term is calculated as
\begin{equation}
    \mathbb{E}\left[
    \bra{\bold{y}_j}D_{\Vec{\alpha}}\ket{\bold{x}_i}
    \bra{\bold{y}_l}D_{\Vec{\alpha}}\ket{\bold{x}_i}
    \right] =
    \frac{1}{N_k^2(2^{N_k}-1)} \sum_{a,b,c,d}\bra{e_a}D_{\Vec{\alpha}}\ket{e_b}\bra{e_c}D_{\Vec{\alpha}}\ket{e_d} \sum_{\Vec{\gamma}}\frac{\gamma_b\gamma_d}{w(\Vec{\gamma})}.
\end{equation}

Here, the summation over $\Vec{\gamma}$ branches into two cases, as can be seen in Eq.~\ref{eq:app H2 lemma} in Lemma \ref{lemma: app H numbers}. We name these values $H_1$ and $H_2'$, respectively for the cases where $b=d$ and $b\neq d$. The sum then reduces to
\begin{equation}
    \frac{1}{N_k^2(2^{N_k}-1)}
    \left[
    H_2' \sum_{\substack{a,b,c,d \\ b\neq d}} \bra{e_a}D_{\Vec{\alpha}}\ket{e_b}\bra{e_c}D_{\Vec{\alpha}}\ket{e_d}
    + H_1 \underbrace{\sum_{a,b,c}\bra{e_a}D_{\Vec{\alpha}}\ket{e_b}\bra{e_c}D_{\Vec{\alpha}}\ket{e_b}}_{=N_k\bra{\Vec{1}}D_{\Vec{\alpha}}D_{\Vec{\alpha}}^T\ket{\Vec{1}}}
    \right],
\end{equation}
where the last sum was simplified in a similar manner to the previous term (Eq.~\ref{eq: app term 2 in Sav}). We can then artificially add and subtract a copy of the second sum into the first one, to get the following expression:
\begin{align}
    \mathbb{E}\left[
    \bra{\bold{y}_j}D_{\Vec{\alpha}}\ket{\bold{x}_i}
    \bra{\bold{y}_l}D_{\Vec{\alpha}}\ket{\bold{x}_i}
    \right] =& \frac{1}{N_k^2(2^{N_k}-1)}
    \Bigl[
    H_2' \left(N_k^2(\bra{\Vec{1}}D_{\Vec{\alpha}}\ket{\Vec{1}})^2 - N_k\bra{\Vec{1}}D_{\Vec{\alpha}}D_{\Vec{\alpha}}^T\ket{\Vec{1}}\right) \nonumber \\
    &+ H_1 N_k\bra{\Vec{1}}D_{\Vec{\alpha}}D_{\Vec{\alpha}}^T\ket{\Vec{1}} 
    \Bigr] \nonumber \\
    =& \frac{H_2'}{2^{N_k}-1} \left[ 
    (\bra{\Vec{1}}D_{\Vec{\alpha}}\ket{\Vec{1}})^2 - \frac{1}{N_k}\bra{\Vec{1}}D_{\Vec{\alpha}}D_{\Vec{\alpha}}^T\ket{\Vec{1}} \right] \nonumber \\
    &+ \frac{H_1}{N_k(2^{N_k}-1)}\bra{\Vec{1}}D_{\Vec{\alpha}}D_{\Vec{\alpha}}^T\ket{\Vec{1}} \label{eq:app third term in Sav}.
\end{align}
To get this expression we wrote the summations over $a$, $b$, $c$, and $d$ the same way as for the previous terms.

The final, ``$|$'' term in Eq.~\ref{eq: app Sav term} undergoes the same branching over two cases as done above; the only difference is that now, the bra vectors are entangled, rather than independent:
\begin{align}
    \mathbb{E}\left[
    \bra{\bold{y}_j}D_{\Vec{\alpha}}\ket{\bold{x}_i}
    \bra{\bold{y}_j}D_{\Vec{\alpha}}\ket{\bold{x}_i}
    \right] =& 
    \frac{1}{N_k(2^{N_k}-1)} 
    \sum_{a,b,d} \bra{e_a}D_{\Vec{\alpha}}\ket{e_b}\bra{e_a}D_{\Vec{\alpha}}\ket{e_d}
    \sum_{\Vec{\gamma}\neq0}\frac{\gamma_b\gamma_d}{w(\Vec{\gamma})} \nonumber \\
    =& \frac{H_2'}{2^{N_k}-1}
    \left[ \bra{\Vec{1}}D_{\Vec{\alpha}}^TD_{\Vec{\alpha}}\ket{\Vec{1}} - \frac{1}{N_k}\Tr\left(D_{\Vec{\alpha}}D_{\Vec{\alpha}}^T\right)
    \right] \nonumber \\
    & + 
    \frac{H_1}{N_k(2^{N_k}-1)}
    \Tr\left(D_{\Vec{\alpha}}D_{\Vec{\alpha}}^T\right), \label{eq:app fourth term in Sav}
\end{align}
where the traces arise from the summation of the form $\sum_{a,b}(\bra{e_a}D_{\Vec{\alpha}}\ket{e_b})^2$.

To this end, we can now add up the terms to reconstruct $S_{\Vec{\alpha}}^v$ using Eq.~\ref{eq: app Sav term}, as well as the earlier calculations for each term.
\begin{align}
    S_{\Vec{\alpha}}^v = & 
    \frac{R_{||}(H'_1)^2}{(2^{N_k}-1)^2} \left(\bra{\Vec{1}}D_{\Vec{\alpha}}\ket{\Vec{1}}\right)^2 \nonumber\\
    & + \frac{R_{\lor}(H'_1)^2}{(2^{N_k}-1)^2} \bra{\Vec{1}}D_{\Vec{\alpha}}^TD_{\Vec{\alpha}}\ket{\Vec{1}} \nonumber\\
    & + \frac{R_{\land}H_2'}{2^{N_k}-1} \left[ 
    (\bra{\Vec{1}}D_{\Vec{\alpha}}\ket{\Vec{1}})^2 - \frac{1}{N_k}\bra{\Vec{1}}D_{\Vec{\alpha}}D_{\Vec{\alpha}}^T\ket{\Vec{1}} \right] + \frac{R_{\land}H_1}{N_k(2^{N_k}-1)}\bra{\Vec{1}}D_{\Vec{\alpha}}D_{\Vec{\alpha}}^T\ket{\Vec{1}} \nonumber \\
    & + \frac{R_{|}H_2'}{2^{N_k}-1}
    \left[ \bra{\Vec{1}}D_{\Vec{\alpha}}^TD_{\Vec{\alpha}}\ket{\Vec{1}} - \frac{1}{N_k}\Tr\left(D_{\Vec{\alpha}}D_{\Vec{\alpha}}^T\right)
    \right] + 
    \frac{R_{|}H_1}{N_k(2^{N_k}-1)}
    \Tr\left(D_{\Vec{\alpha}}D_{\Vec{\alpha}}^T\right),
\end{align}
which after rearranging terms, becomes
\begin{align}
    S_{\Vec{\alpha}}^v = &
    \left(\bra{\Vec{1}}D_{\Vec{\alpha}}\ket{\Vec{1}}\right)^2
    \left[
    \frac{R_{||}(H'_1)^2}{(2^{N_k}-1)^2} + \frac{R_{\land}H_2'}{2^{N_k}-1}
    \right] \nonumber \\
    & + \bra{\Vec{1}}D_{\Vec{\alpha}}^TD_{\Vec{\alpha}}\ket{\Vec{1}}
    \left[
    \frac{R_{\lor}(H'_1)^2}{(2^{N_k}-1)^2} + \frac{R_{|}H_2'}{2^{N_k}-1}
    \right] \nonumber \\
    & + \bra{\Vec{1}}D_{\Vec{\alpha}}D_{\Vec{\alpha}}^T\ket{\Vec{1}}
    \left[ 
    \frac{R_{\land}(H_1-H_2')}{N_k(2^{N_k}-1)}
    \right] \nonumber \\
    & +
    \Tr\left(D_{\Vec{\alpha}}D_{\Vec{\alpha}}^T\right)
    \left[
    \frac{R_{|}(H_1-H_2')}{N_k(2^{N_k}-1)}
    \right]. \label{eq: app Sav final form}
\end{align}

Using this rearranged formula, we gain a clearer view of the contributions to $S_{\Vec{\alpha}}$: There are four different terms whose prefactors depend solely on the ansatz (which can in turn depend on $N_k$), but usually not on $N$. The factors in square brackets are then different terms of the form $R_i\cdot f(N_k)$, where $R_i$ depends strictly on $N$ and the rest depends on $N_k$. 

The calculation for $S_{\Vec{\alpha}}^c$ defined in Eq.~\ref{eq: app sac term} is almost identical, except the second bracket in every term is flipped. Therefore, we can quickly write the final formula for this by performing the following changes:
\begin{equation}\label{eq: app Sav map to Sac}
    \begin{split}
    \left(\bra{\Vec{1}}D_{\Vec{\alpha}}\ket{\Vec{1}}\right)^2 = \bra{\Vec{1}}D_{\Vec{\alpha}}\ket{\Vec{1}}\bra{\Vec{1}}D_{\Vec{\alpha}}\ket{\Vec{1}}
    \quad\mapsto &\quad  \bra{\Vec{1}}D_{\Vec{\alpha}}\ket{\Vec{1}}\bra{\Vec{1}}D_{\Vec{\alpha}}^\dagger\ket{\Vec{1}} = 
    \left|\bra{\Vec{1}}D_{\Vec{\alpha}}\ket{\Vec{1}}\right|^2, \\
    \bra{\Vec{1}}D_{\Vec{\alpha}}^TD_{\Vec{\alpha}}\ket{\Vec{1}} = \sum_a \bra{e_a}D_{\Vec{\alpha}}\ket{\Vec{1}}\bra{e_a}D_{\Vec{\alpha}}\ket{\Vec{1}}
    \quad\mapsto &\quad 
    \sum_a \bra{e_a}D_{\Vec{\alpha}}\ket{\Vec{1}}\bra{\Vec{1}}D_{\Vec{\alpha}}^\dagger\ket{e_a} = 1, \\
    \bra{\Vec{1}}D_{\Vec{\alpha}}D_{\Vec{\alpha}}^T\ket{\Vec{1}} = \sum_a \bra{\Vec{1}}D_{\Vec{\alpha}}\ket{e_a}\bra{\Vec{1}}D_{\Vec{\alpha}}\ket{e_a}
    \quad\mapsto &\quad 
    \sum_a \bra{\Vec{1}}D_{\Vec{\alpha}}\ket{e_a}\bra{e_a}D_{\Vec{\alpha}}^\dagger\ket{\Vec{1}} = 1, \\
    \Tr(D_{\Vec{\alpha}}D_{\Vec{\alpha}}^T) = 
    \sum_{a,b} \bra{e_a}D_{\Vec{\alpha}}\ket{e_b}\bra{e_a}D_{\Vec{\alpha}}^\dagger\ket{e_b}
    \quad\mapsto &\quad 
    \sum_{a,b} \bra{e_a}D_{\Vec{\alpha}}\ket{e_b}\bra{e_b}D_{\Vec{\alpha}}^\dagger\ket{e_a} = N_k.
\end{split}
\end{equation}

Using this map, the expression for $S_{\Vec{\alpha}}^c$ reduces to:
\begin{align}\label{eq: app Sac final form}
    S_{\Vec{\alpha}}^c = &
    \left|\bra{\Vec{1}}D_{\Vec{\alpha}}\ket{\Vec{1}}\right|^2
    \left[
    \frac{R_{||}(H'_1)^2}{(2^{N_k}-1)^2} + \frac{R_{\land}H_2'}{2^{N_k}-1}
    \right] \nonumber \\ & +
    \left[
    \frac{R_{\lor}(H'_1)^2}{(2^{N_k}-1)^2} +
    \frac{R_{\land}(H_1-H_2')}{N_k(2^{N_k}-1)} +
    \frac{R_{|}(H_1)}{(2^{N_k}-1)}
    \right].
\end{align}


Here we note the following bounds:
\begin{align}
    \left|\left(\bra{\Vec{1}}D_{\Vec{\alpha}}\ket{\Vec{1}}\right)^2\right| \leq & \left|\bra{\Vec{1}}D_{\Vec{\alpha}}\ket{\Vec{1}}\right|^2 
    \leq 1, \label{eq: app bound 1}\\
    \left|\bra{\Vec{1}}D_{\Vec{\alpha}}^TD_{\Vec{\alpha}}\ket{\Vec{1}}\right|
    \leq & 1, \label{eq: app bound 2}\\
    \left|\bra{\Vec{1}}D_{\Vec{\alpha}}D_{\Vec{\alpha}}^T\ket{\Vec{1}}\right|
    \leq & 1, \label{eq: app bound 3}\\
    \left|\Tr\left(D_{\Vec{\alpha}}D_{\Vec{\alpha}}^T\right)\right|
    \leq & N_k. \label{eq: app bound 4}
\end{align}
These bounds respectively match the lines in the map \ref{eq: app Sav map to Sac} we used above to derive $S_{\Vec{\alpha}}^c$ from $S_{\Vec{\alpha}}^v$. Therefore, by taking the modulus in Eq.~\ref{eq: app Sav final form}, we can further obtain 
\begin{align}
    |S_{\Vec{\alpha}}^v| \leq &
    \left|\left(\bra{\Vec{1}}D_{\Vec{\alpha}}\ket{\Vec{1}}\right)^2\right|
    \left[
    \frac{R_{||}(H'_1)^2}{(2^{N_k}-1)^2} + \frac{R_{\land}H_2'}{2^{N_k}-1}
    \right] + \left|\bra{\Vec{1}}D_{\Vec{\alpha}}^TD_{\Vec{\alpha}}\ket{\Vec{1}}\right|
    \left[
    \frac{R_{\lor}(H'_1)^2}{(2^{N_k}-1)^2} + \frac{R_{|}H_2'}{2^{N_k}-1}
    \right] \nonumber \\
    & + \left|\bra{\Vec{1}}D_{\Vec{\alpha}}D_{\Vec{\alpha}}^T\ket{\Vec{1}}\right|
    \left[ 
    \frac{R_{\land}(H_1-H_2')}{N_k(2^{N_k}-1)}
    \right] +
    \left|\Tr\left(D_{\Vec{\alpha}}D_{\Vec{\alpha}}^T\right)\right|
    \left[
    \frac{R_{|}(H_1-H_2')}{N_k(2^{N_k}-1)}
    \right] \nonumber \\
    \leq & S_{\Vec{\alpha}}^c. \label{eq: app Sav less than Sac}
\end{align}
This will become relevant for the final calculation of the asymptotic behavior of $\mathbb{E}[\Var(\partial_\mu \mathcal{L}_{QSGC})]$.
\newline

To this end, we can simplify the results in Eqs.~\ref{eq: app Sav final form} and \ref{eq: app Sac final form} by approximating to leading order in $N_k$. For this, we use the results in Lemma \ref{lemma: app H approx}:
\begin{align}
    S_{\Vec{\alpha}}^v = &
    \left(\bra{\Vec{1}}D_{\Vec{\alpha}}\ket{\Vec{1}}\right)^2
    \left[
    \frac{R_{||}(H'_1)^2}{(2^{N_k}-1)^2} + \frac{R_{\land}H_2'}{2^{N_k}-1}
    \right] + \bra{\Vec{1}}D_{\Vec{\alpha}}^TD_{\Vec{\alpha}}\ket{\Vec{1}}
    \left[
    \frac{R_{\lor}(H'_1)^2}{(2^{N_k}-1)^2} + \frac{R_{|}H_2'}{2^{N_k}-1}
    \right] \nonumber \\
    & + \bra{\Vec{1}}D_{\Vec{\alpha}}D_{\Vec{\alpha}}^T\ket{\Vec{1}}
    \left[ 
    \frac{R_{\land}(H_1-H_2')}{N_k(2^{N_k}-1)}
    \right] +
    \Tr\left(D_{\Vec{\alpha}}D_{\Vec{\alpha}}^T\right)
    \left[
    \frac{R_{|}(H_1-H_2')}{N_k(2^{N_k}-1)}
    \right] \nonumber \\
    = & \left(\bra{\Vec{1}}D_{\Vec{\alpha}}\ket{\Vec{1}}\right)^2
    (R_{||}+R_\land)\left(\frac{1}{2N_k}+O(N_k^{-2})\right)+
    \bra{\Vec{1}}D_{\Vec{\alpha}}^TD_{\Vec{\alpha}}\ket{\Vec{1}}
    (R_{\lor}+R_{|})\left(\frac{1}{2N_k}+O(N_k^{-2})\right) \nonumber \\
    & + \bra{\Vec{1}}D_{\Vec{\alpha}}D_{\Vec{\alpha}}^T\ket{\Vec{1}}
    \frac{R_\land}{N_k} \left(\frac{1}{2N_k}+O(N_k^{-2})\right) +
    \Tr\left(D_{\Vec{\alpha}}D_{\Vec{\alpha}}^T\right)
    R_{|}\left(\frac{1}{2N_k^2}+O(N_k^{-3})\right) \nonumber \\
    = & \left[
    \left(\bra{\Vec{1}}D_{\Vec{\alpha}}\ket{\Vec{1}}\right)^2
    (R_{||}+R_\land) +
    \bra{\Vec{1}}D_{\Vec{\alpha}}^TD_{\Vec{\alpha}}\ket{\Vec{1}}
    (R_{\lor}+R_{|}) +
    \bra{\Vec{1}}D_{\Vec{\alpha}}D_{\Vec{\alpha}}^T\ket{\Vec{1}}
    \frac{R_\land}{N_k} +
    \Tr\left(D_{\Vec{\alpha}}D_{\Vec{\alpha}}^T\right)
    \frac{R_{|}}{N_k}
    \right]
    \nonumber \\
    & \cdot \left(\frac{1}{2N_k}+O(N_k^{-2})\right). \label{eq: app Sav approx}
\end{align}
Here, we used Lemma \ref{lemma: app H approx} to plug in the approximations for the coefficients of the $R$-terms, and then grouped them into a single one by adding a factor $N_k^{-1}$ to the trace.

Similarly, for $S_{\Vec{\alpha}}^c$, using the map given above in Eq.~\ref{eq: app Sav map to Sac}, we get 
\begin{equation}\label{eq: app Sac approx}
    S_{\Vec{\alpha}}^c = 
    \left[
    \left|\bra{\Vec{1}}D_{\Vec{\alpha}}\ket{\Vec{1}}\right|^2
    (R_{||}+R_\land)+
    1\cdot
    (R_{\lor}+\frac{R_\land}{N_k}+2R_{|})
    \right]
    \left(\frac{1}{2N_k}+O(N_k^{-2})\right).
\end{equation}
Using the second inequality in \ref{eq: app bound 1}, We know that in the asymptotic limit of $N_k$ the first term in the square brackets above is $O(1)$, and that $R_\lor+2R_{|}=\Theta(1)$, so we can infer
\begin{equation}
    S_{\Vec{\alpha}}^c = \Theta(1)\left(\frac{1}{2N_k}+O(N_k^{-2})\right) = \Theta(N_k^{-1}).
\end{equation}
For $S_{\Vec{\alpha}}^v$, the bound cannot be tightened more without adding constraints to the ansatz.

Assembling the expected variance using Eq.~\ref{eq: app EVar in terms of Sav and Sac}:
\begin{align}
    \mathbb{E}[\Var(\partial_\mu \mathcal{L}_{QSGC})] =&
    \frac{1}{2^{D+1}N^2 (k+1)^2} 
    \sum_{\Vec{\alpha}\in\{0,1\}^D}
    \left[ (-1)^{w(\Vec{\alpha})}
    \Re(S_{\Vec{\alpha}}^v) 
    + 
    S_{\Vec{\alpha}}^c\right] \nonumber \\
    =& \frac{1}{2^{D+1}N^2 (k+1)^2} 
    \left[
    \sum_{\Vec{\alpha}\in\{0,1\}^D}
    (-1)^{w(\Vec{\alpha})}
    \Re(S_{\Vec{\alpha}}^v)
    +
    2^D\Theta(N_k^{-1})
    \right] \nonumber \\
    =& 
    \frac{1}{2^{D+1}N^2 (k+1)^2}\sum_{\Vec{\alpha}\in\{0,1\}^D}
    (-1)^{w(\Vec{\alpha})}
    \Re(S_{\Vec{\alpha}}^v)
    +\Theta(N_k^{-1}). \label{eq: app EVar alternating sum problem}
\end{align}
If the remaining summation in the RHS is negative, it could cancel out the leading term in $\Theta(N_k^{-1})$. However, one can employ a trick to constrain the ansatz slightly so that the sum is guaranteed to be positive. We prove this in Lemma \ref{lemma: app alternating sum}. Therefore, the leading order of the variance is indeed $\Theta(N_k^{-1})$.

\subsection{Bounds for the expected variance}\label{app: Bounds for the expected variance}

Finally, we compute upper and lower bounds for the expected variance $\mathbb{E}[\Var(\partial_\mu \mathcal{L}_{QSGC})]$.

\subsubsection{Upper bound}\label{app: Upper bound}

Given the bound \ref{eq: app Sav less than Sac}, we can calculate an upper bound of the expected variance as such:
\begin{align}
    \mathbb{E}[\Var(\partial_\mu \mathcal{L}_{QSGC})] =&
    \frac{1}{2^{D+1}N^2 (k+1)^2} 
    \sum_{\Vec{\alpha}\in\{0,1\}^D}
    \underbrace{\left[ (-1)^{w(\Vec{\alpha})}
    \Re(S_{\Vec{\alpha}}^v) 
    + S_{\Vec{\alpha}}^c\right]}_{\geq 0} \nonumber \\
    =&
    \frac{1}{2^{D+1}N^2 (k+1)^2} 
    \sum_{\Vec{\alpha}\in\{0,1\}^D}
    \Bigl| (-1)^{w(\Vec{\alpha})}
    \Re(S_{\Vec{\alpha}}^v) + 
    \underbrace{S_{\Vec{\alpha}}^c}_{\geq0}\Bigr| \nonumber \\
    \leq & 
    \frac{1}{2^{D+1}N^2 (k+1)^2} 
    \sum_{\Vec{\alpha}\in\{0,1\}^D}
    \biggl[\underbrace{\left| (-1)^{w(\Vec{\alpha})}
    \Re(S_{\Vec{\alpha}}^v)\right|}_{\leq S_{\Vec{\alpha}}^c} + 
    S_{\Vec{\alpha}}^c\biggr] \nonumber \\
    \leq& \frac{1}{2^DN^2 (k+1)^2}
    \sum_{\Vec{\alpha}\in\{0,1\}^D} S_{\Vec{\alpha}}^c.
\end{align}

Next, we can apply bound \ref{eq: app bound 1} on $S_{\Vec{\alpha}}^c$ to 
\begin{equation}
    S_{\Vec{\alpha}}^c \leq 
    \left[
    (R_{||}+R_\land)+
    (R_{\lor}+\frac{R_\land}{N_k}+2R_{|})
    \right]
    \left(\frac{1}{2N_k}+O(N_k^{-2})\right)\equiv S.
\end{equation}
After neglecting the $R_\land/N_k$ term (that contributes as $O(N_k^{-2})$ and plugging in the values from Table \ref{tab: app R coefficients}, we get
\begin{equation} \label{eq: app maximum Sac}
    S = \left[
    N^2(k+1)^2+N(k+1)
    \right]
    \left(\frac{1}{2N_k}+O(N_k^{-2})\right).
\end{equation}
Therefore, the expected variance is upper bounded by 
\begin{align}
    \mathbb{E}[\Var(\partial_\mu \mathcal{L}_{QSGC})] \leq&
    \frac{1}{2^DN^2 (k+1)^2}2^D
    S \nonumber \\
    =& \left[
    1+\frac{1}{N(k+1)}
    \right]
    \left(\frac{1}{2N_k}+O(N_k^{-2})\right) = \Theta(N_k^{-1}) \equiv \Theta(C^{-1}). \label{eq: app upper bound}
\end{align}
Finding whether this bound is tight is a work in progress, as one needs to find the right ansatz that maximizes the expected variance.

\subsubsection{Lower bound}\label{app: Lower bound}

Using Lemma \ref{lemma: app alternating sum}, we can always force the sum over $\Re(S_{\Vec{\alpha}}^v)$ to be nonnegative. Therefore, given the additional constraint in the lemma, a lower bound for the expected variance is the following:
\begin{align}
    \mathbb{E}[\Var(\partial_\mu \mathcal{L}_{QSGC})] =&
    \frac{1}{2^{D+1}N^2 (k+1)^2} 
    \Biggl[
    \underbrace{\sum_{\Vec{\alpha}\in\{0,1\}^D}
    (-1)^{w(\Vec{\alpha})}
    \Re(S_{\Vec{\alpha}}^v)}_{\geq 0}
    +
    \sum_{\Vec{\alpha}\in\{0,1\}^D}S_{\Vec{\alpha}}^c
    \Biggr] \nonumber \\
    \geq& \frac{1}{2^{D+1}N^2 (k+1)^2}
    \sum_{\Vec{\alpha}\in\{0,1\}^D}S_{\Vec{\alpha}}^c.
\end{align}
To give a lower bound for $S_{\Vec{\alpha}}^c$, the term $\left|\bra{\Vec{1}}D_{\Vec{\alpha}}\ket{\Vec{1}}\right|^2$ can be taken to 0:
\begin{align}
    S_{\Vec{\alpha}}^c = &
    \biggl[
    \underbrace{\left|\bra{\Vec{1}}D_{\Vec{\alpha}}\ket{\Vec{1}}\right|^2}_{\geq 0}
    (R_{||}+R_\land)+
    1\cdot
    (R_{\lor}+\underbrace{\frac{R_\land}{N_k}}_{=O(N_k^{-1})} +2R_{|})
    \biggr]
    \left(\frac{1}{2N_k}+O(N_k^{-2})\right) \nonumber \\
    \geq & (R_\lor+2R_{|})
    \left(\frac{1}{2N_k}+O(N_k^{-2})\right) \nonumber \\
    =& [N(k+1)^2+N(k+1)]\left(\frac{1}{2N_k}+O(N_k^{-2})\right),
\end{align}
which means the expected variance is ultimately lower-bounded by
\begin{equation}
    \mathbb{E}[\Var(\partial_\mu \mathcal{L}_{QSGC})] \geq 
    \frac{1}{4}\left[\frac{1}{N}+\frac{1}{N(k+1)}\right]\left(\frac{1}{N_k}+O(N_k^{-2})\right)=\Theta(N^{-1}N_k^{-1}) \equiv \Theta(N^{-1}C^{-1}).
\end{equation}

We emphasize on the fact that this bound can only be attained if the term $\left|\bra{\Vec{1}}D_{\Vec{\alpha}}\ket{\Vec{1}}\right|^2=0$, which is a significant constraint to the ansatz.

\subsection{Note on the curse of dimensionality}\label{app: Note on the curse of dimensionality}

The variance bounds given above are computed by considering the strict bounds on the leading term, namely, $$0\leq\left|\bra{\Vec{1}}D_{\Vec{\alpha}}\ket{\Vec{1}}\right|^2\leq1.$$ We note here that in reality, for most VQCs\footnote{By this we mean that the set of generating gates $V_\eta$ and the fixed gates $W_\eta$ are arbitrary.}, attaining either of these ends is highly unlikely. However, because of the vanishing overlap between two arbitrary states (the \textbf{curse of dimensionality})\cite{Cerezo_Larocca_Garcia-Martin_Diaz_Braccia_Fontana_Rudolph_Bermejo_Ijaz_Thanasilp_etal._2025}, the distribution of possible moduli $\left|\bra{\Vec{1}}D_{\Vec{\alpha}}\ket{\Vec{1}}\right|^2$ strongly peaks around $N_k^{-1}$. We prove this with Lemma \ref{lemma: app curse of dim}, in Appendix \ref{app: Helper lemmas}.

Therefore, most (arbitrarily chosen) ansätze lead to
\begin{align}
    S_{\Vec{\alpha}}^c = &
    \biggl[
    \underbrace{\left|\bra{\Vec{1}}D_{\Vec{\alpha}}\ket{\Vec{1}}\right|^2}_{\sim N_k^{-1}}
    (R_{||}+R_\land)+
    R_{\lor}+\frac{R_\land}{N_k} +2R_{|}
    \biggr]
    \left(\frac{1}{2N_k}+O(N_k^{-2})\right) \nonumber \\
    \sim & \left[R_\lor+2R_{|}+\frac{R_{||}+2R_\land}{N_k}\right]\left(\frac{1}{2N_k}+O(N_k^{-2})\right) \nonumber \\
    =& \Theta(Nk^2N_k^{-1})+O(N^2k^2N_k^{-2}).
\end{align}

Plugging this in Eq.~\ref{eq: app EVar in terms of Sav and Sac}, we see that apart from the $S_{\Vec{\alpha}}^v$ alternating sum (which as seen above, can be bounded between 0 and the $S_{\Vec{\alpha}}^c$ sum), most ansätze will have an additional $O(N_k^{-2})$ term\footnote{By expanding the $H$ terms in Lemma \ref{lemma: app H approx} to next order, one can check that the leading term is present, so we upgrade the Big-$O$ notation to a stricter $\Theta(N_k^{-2})$.}:
\begin{align}
    \mathbb{E}[\Var(\partial_\mu \mathcal{L}_{QSGC})] \sim& \Theta(N^{-1}N_k^{-1}) + \Theta(N_k^{-2}) \nonumber \\
    =& \Theta(N^{-1}C^{-1})+\Theta(C^{-2}).
\end{align}

This implies on the one hand that most ansätze encounter the flattening with graph size, but at a much harsher tradeoff with the $N_k$ scaling.

\subsection{Example: IQP ansatz}\label{app: Example: IQP ansatz}

In this section we show the application of the results above on a worked example: the IQP ansatz. This takes the form:
\begin{equation}
    U(\theta)=H^{\otimes n_k}\exp\left(
    \sum_{j=1}^{n_k} i\theta_j Z_j + \sum_{j<k} i\theta_{jk} Z_jZ_k
    \right)H^{\otimes n_k}.
\end{equation}
In terms of our general ansatz formula, each $\theta_j$ and $\theta_{jk}$ is a parameter, so the depth is
\begin{equation}
    D=n_k+\frac{n_k(n_k-1)}{2} = \frac{n_k(n_k+1)}{2}.
\end{equation}
The $W_\eta$ gates are:
\begin{equation}
    W_\eta = \begin{cases}
        \mathds{1} \quad \text{for} \quad \eta \in\{1,D+1\}, \\
        H^{\otimes n_k} \quad \text{otherwise,}
    \end{cases}\\
\end{equation}
and the $V_\eta$ are:
\begin{equation}
    V_\eta = \begin{cases}
        Z_j, \quad\text{for} \quad 1\leq \eta=j \leq n_k, \\
        Z_jZ_k, \quad \text{for} \quad n_k\leq \eta = j\cdot n_k + k \leq D.
    \end{cases}
\end{equation}
Then, when computing the expected variance we need $D_{\Vec{\alpha}}$, which takes the following form:
\begin{equation}
    D_{\Vec{\alpha}}  = H^{\otimes n_k} \cdot \prod_{\eta}V_\eta^{\alpha_\eta}\cdot H^{\otimes n_k},
\end{equation}
where the product ends up being a tensor product of identities and $Z$ gates. Note that all gates in $D_{\Vec{\alpha}}$ are real, square to identity, and the $Z$ gates commute, so $D_{\Vec{\alpha}}=D_{\Vec{\alpha}}^T=D_{\Vec{\alpha}}^\dagger$. Therefore, the following simplifications occur:
\begin{align}
    \bra{\Vec{1}}D_{\Vec{\alpha}}\ket{\Vec{1}} =& \bra{\Vec{1}}H^{\otimes n_k} \cdot \prod_{\eta}V_\eta^{\alpha_\eta}\cdot H^{\otimes n_k}\ket{\Vec{1}} = \bra{0}\prod_{\eta}V_\eta^{\alpha_\eta}\ket{0} = 1 \\
    \bra{\Vec{1}}D_{\Vec{\alpha}}^TD_{\Vec{\alpha}}\ket{\Vec{1}} =& \bra{\Vec{1}}D_{\Vec{\alpha}}D_{\Vec{\alpha}}^T\ket{\Vec{1}} = \braket{\Vec{1}|\Vec{1}} = 1 \\
    \Tr(D_{\Vec{\alpha}}D_{\Vec{\alpha}}^T) =&
    \Tr(\mathds{1})=N_k.
\end{align}
Hence one can observe that the bounds from Eqs.~\ref{eq: app bound 1}-\ref{eq: app bound 4} are reached. To this end, the bound in Eq.~\ref{eq: app Sav less than Sac} is also attained, 
\begin{equation}
    S_{\Vec{\alpha}}^v=S_{\Vec{\alpha}}^c=
    \left[
    (R_{||}+R_\land)+
    (R_{\lor}+\frac{R_\land}{N_k}+2R_{|})
    \right]
    \left(\frac{1}{2N_k}+O(N_k^{-2})\right)= S.
\end{equation}
Using Eq.~\ref{eq: app maximum Sac}, the full expected variance is
\begin{align}
    \mathbb{E}[\Var(\partial_\mu \mathcal{L}_{QSGC})] =&
    \frac{1}{2^{D+1}N^2 (k+1)^2} 
    \sum_{\Vec{\alpha}\in\{0,1\}^D}
    \left[ (-1)^{w(\Vec{\alpha})}
    \Re(S_{\Vec{\alpha}}^v) 
    + 
    S_{\Vec{\alpha}}^c\right] \nonumber \\
    =& \frac{1}{2^{D+1}N^2 (k+1)^2}S
    \underbrace{\sum_{\Vec{\alpha}\in\{0,1\}^D}\left[(-1)^{w(\Vec{\alpha})}+1\right]}_{2^D} \nonumber \\
    =& \left[
    1+\frac{1}{N(k+1)}
    \right]\left(\frac{1}{4N_k}+O(N_k^{-2})\right).
\end{align}
Therefore, to leading order, this expected variance does flatten in the limit of large $N$ and $k$, and decays as $N_k^{-1}$.

We note that whilst $S_{\Vec{\alpha}}^v$ and $S_{\Vec{\alpha}}^c$ themselves have been maximised using this ansatz, the summation over $\Vec{\alpha}$ has not. This is because the $S_{\Vec{\alpha}}^v$ contribution cancels out. In fact, this expected variance is exactly half the maximum possible. This also means that this ansatz has an optimal asymptotic form, up to a factor of 2.

\subsection{What is $C < N_k$?}\label{What is $C < N_k$?}

The expected variance calculation in Sec.~\ref{app: Dependence of the variance with the number of features} reduces nicely the moment we obtain the full superposition in Eq.~\ref{eq: app full superposition}, together with the corresponding $H$ numbers ($H_1$, $H_1'$, and $H_2$). They all depend heavily on the assumption that $C=N_k=2^{n_k}$. This is usually not the case, and for this reason we need to address possible extensions of our modeling of the trainability.

\paragraph{Padding.} In this case, a standard amplitude encoding scheme will assume a padding of the dimension-$C$ feature vectors (and labels) with $N_k-C$ zeros, so that it fits register $Reg(k)$. This changes the estimators in Eq.~\ref{eq:app target input average} to 
\begin{equation}\label{eq:app new target input average}
    \begin{split}
        \mathbb{E}[f(\ket{\bold{y}_i})] =& 
        \frac{1}{C}\sum_{a=1}^{C}f(\ket{\bold{e}_a}), \\
        \mathbb{E}[f(\ket{\bold{x}_i})] =& 
        \frac{1}{2^{C}-1}\sum_{\substack{\Vec{\gamma}\in\{0,1\}^{C} \\ \Vec{\gamma}\neq\Vec{0}}}f\left(\frac{1}{\sqrt{ w(\vec{\gamma})}}\sum_{b=1}^{C}\gamma_b
        \ket{\bold{e}_b}\right).
    \end{split}
\end{equation}

In our previous calculation, this effectively cancels out all terms that lie in the $\{\ket{\bold{e}_b}\}_{b\in(C, N_k]}$. One can see that this is equivalent to mapping 
\begin{equation}
    D_{\Vec{\alpha}}\mapsto D'_{\Vec{\alpha}} \equiv \Pi\cdot D_{\Vec{\alpha}} \cdot \Pi,
\end{equation}
where $\Pi$ projects onto the feature space. As this new matrix is not unitary, we cannot infer the same results on the bounds and general behavior of the expected variance.

\paragraph{Random permutations.} A solution that recovers the previous results approximately is to take the padded feature vectors and encoded labels and randomly permute their components (classically) before loading them on the qantum circuit using the amplitude encoding.\footnote{We assume that this process is included in the classical compilation performed with an efficient QRAM, which we must assume anyway in Sec.~\ref{sec: Note on Complexity and Dequantization}.} This means that the classical estimators now become
\begin{equation}\label{eq:app permuted target input average}
    \begin{split}
        \mathbb{E}[f(\ket{\bold{y}_i})] =& 
        \frac{1}{N_k}\sum_{a=1}^{N_k}f(\ket{\bold{e}_a}), \\
        \mathbb{E}[f(\ket{\bold{x}_i})] =& 
        \frac{1}{V(N_k,C)-1}\sum_{\substack{\Vec{\gamma}\in\{0,1\}^{N_k} \\ w(\Vec{\gamma})\in [1,C]}}f\left(\frac{1}{\sqrt{ w(\vec{\gamma})}}\sum_{b=1}^{N_k}\gamma_b
        \ket{\bold{e}_b}\right),
    \end{split}
\end{equation}
where $V(n,w)$ is the number of $n$-bit strings with Hamming weight up to $w$. Due to the uniformity of the expected input and target states for all nodes, this random permutation recovers all of the previous results up to an overhead $O(1)$. Specifically, the only difference is recalculating the $H$-terms. In Lemma \ref{lemma: app reduced H terms}, we show that because $C>N_k/2$, these terms are comparable to our initial $H_1$, $H_2$, and $H_1'$. For the same reason, $2^{N_k-1}\approx V(N_k,N_k/2)\leq V(N_k,C)\leq 2^{N_k}$, so $V(N_k,C)=\Theta(2^{N_k})$. Therefore, applying this random permutation recovers the expected variance results. \\

Although permuting the components seems like an ad-hoc solution, so is the specific use of the first $C$ basis vectors out of the full Hilbert space --- in the sense that a generic ansatz will not be trained specifically for those basis components. Furthermore, if one does want to recover the training for the standard choice of amplitude encoding, it is enough to encode the inverse permutation in the circuit. Formally speaking, if we precompile a permutation $\sigma\in S_{N_k}$ for the amplitude encoding process, for any ansatz $D_{\Vec{\alpha}}$, there exists another ansatz $D'_{\Vec{\alpha}}$ that permutes the elements back into position and is trained as originally planned. This new ansatz takes the form 
\begin{equation}
    D'_{\Vec{\alpha}} \equiv W(\sigma) \cdot D_{\Vec{\alpha}} \cdot W(\sigma^{-1}),
\end{equation}
where $W(\sigma)$ and $W(\sigma^{-1})$ temporarily permute the feature vector components to first $C$ axes, apply the VQC, and then permute them back. These gates can be realized as presented in \cite{nielsen_quantum_2010, herbert2024almostoptimalcomputationalbasisstate}. As expected, this change of ansatz does not change the expected variance calculated above, as that is averaged over all permutations.

Roughly speaking, this calculation tells us that having to encode the data into a strictly larger Hilbert space will be expected to have the same trainability, possibly requiring an additional shuffle of the input and target state components in the pre-compilation step --- which is a degree of freedom we can exploit.



\section{Generalizing the trainability results across different QGCNs}\label{app: Generalizing the trainability results across different QGCNs}

In this section, we show that by maintaining the two assumptions from Sections \ref{app: Results for regular graphs} and \ref{app: Expected values over the input and target states}, namely the $k$-regularity and averaging across different inputs and targets, we can easily generalize our first order QSGC result to any QSGC or even QLGC architecture. For that, we first need to make a stronger generalization of the results above to \textit{graphs with weighted edges}. 


\subsection{Graphs with weighted edges}\label{app: Graphs with weighted edges}

By generalizing the architecture to weighted graphs, we allow the matrix $\hat{A}_{ij}$ to contain custom edge weights by inheriting them from the unnormalized adjacency matrix $\tilde{A}_{ij}$; see Appendix \ref{app: Weighted graphs} for details.

Therefore the results on the bias and variance of the cost function gradient stay the same for arbitrary choices of the matrix $\hat{A}$. We recall the formulas:
\begin{align}
    \mathbb{E}_{\thetaB}[\partial_\mu \mathcal{L}_1] =& \frac{1}{(2\pi)^DN}\sum_{i,j=1}^{N} \hat{A}_{ij}\int \dd \thetaB'
    \bra{\bold{y}_j} U(\thetaB',\theta_\mu=\pi)-U(\thetaB',\theta_\mu=-\pi)\ket{\bold{x}_i} = 0; \\
    \Var_{\thetaB}[\partial_\mu \mathcal{L}_1] =& \frac{1}{2^D N^2}\sum_{i,j,k,l=1}^{N}\hat{A}_{ij}\hat{A}_{kl}\sum_{\Vec{\alpha}\in\{0,1\}^D} (-1)^{ w(\Vec{\alpha})} 
    \bra{\bold{y}_j} D_{\Vec{\alpha}} \ket{\bold{x}_i}
    \bra{\bold{y}_l} D_{\Vec{\alpha}}\ket{\bold{x}_k}, \label{eq: app recall variance}\\
    \Cov_{\thetaB}[\partial_\mu \mathcal{L}_1, \partial_\mu \mathcal{L}_1^*] =& \frac{1}{2^D N^2}\sum_{i,j,k,l=1}^{N}\hat{A}_{ij}\hat{A}_{kl}\sum_{\Vec{\alpha}\in\{0,1\}^D}
    \bra{\bold{y}_j} D_{\Vec{\alpha}}\ket{\bold{x}_i}
    \bra{\bold{x}_k} D_{\Vec{\alpha}}^\dagger \ket{\bold{y}_l}.
\end{align}
We see that for any choice of $\hat{A}$, the bias is zero, thus maintaining the results for the variance and covariance terms. For these terms, we first take the expected value over the input and target \textit{before} making the regular graph assumption:
\begin{align*}
    \mathbb{E}_\text{i,t}[\Var[\partial_\mu \mathcal{L}_1]] =& \frac{1}{2^D N^2}\sum_{i,j,k,l=1}^{N}\hat{A}_{ij}\hat{A}_{kl}\sum_{\Vec{\alpha}\in\{0,1\}^D} (-1)^{ w(\Vec{\alpha})} 
    \mathbb{E}_\text{i,t}[\bra{\bold{y}_j} D_{\Vec{\alpha}} \ket{\bold{x}_i}
    \bra{\bold{y}_l} D_{\Vec{\alpha}}\ket{\bold{x}_k}], \\
    \mathbb{E}_\text{i,t}[\Cov[\partial_\mu \mathcal{L}_1, \partial_\mu \mathcal{L}_1^*]] =& \frac{1}{2^D N^2}\sum_{i,j,k,l=1}^{N}\hat{A}_{ij}\hat{A}_{kl}\sum_{\Vec{\alpha}\in\{0,1\}^D}
    \mathbb{E}_\text{i,t}[\bra{\bold{y}_j} D_{\Vec{\alpha}}\ket{\bold{x}_i}
    \bra{\bold{x}_k} D_{\Vec{\alpha}}^\dagger \ket{\bold{y}_l}].
\end{align*}
Here, we observe the same splitting of the sum as in Appendix \ref{app: Expected values over the input and target states}, depending on whether $i=k$ and/or $j=l$. The $R$ values calculated above can thus be generalized as:
\begin{align}
    \hat{R}_{||} :=& \sum_{\substack{i \neq k \\ j \neq l}} \hat{A}_{ij}\hat{A}_{kl} = \sum_{i,j,k,l} \hat{A}_{ij}\hat{A}_{kl} - \hat{R}_{\land} - \hat{R}_{\lor} - \hat{R}_{|}, \label{eq: app R hat ||}\\
    \hat{R}_{\land} :=&  \sum_{i,j\neq l} \hat{A}_{ij}\hat{A}_{il} = \sum_{i,j,l} \hat{A}_{ij}\hat{A}_{il} - \hat{R}_{|}, \label{eq: app R hat land}\\
    \hat{R}_{\lor} :=&  \sum_{i \neq k, j} \hat{A}_{ij}\hat{A}_{kj} = \sum_{i,k,l} \hat{A}_{ij}\hat{A}_{kj} - \hat{R}_{|}, \label{eq: app R hat lor}\\
    \hat{R}_{|} :=& \sum_{i,j}\hat{A}_{ij}\hat{A}_{ij} = \Tr\left(\hat{A}^2\right). \label{eq: app R hat |}
\end{align}
In the regular graph case (Appendices \ref{app: Results for regular graphs} and \ref{app: Expected values over the input and target states}) the $\hat{R}$ notation is simply the $R$ notation divided by $(k+1)^2$.

\paragraph{Weighted regular graphs.} For the case when the graph is weighted $k$-regular, meaning that $\tilde{D}_{ii}=(k+1)\mathds{1}$, we use Lemma \ref{lemma: app R hat bounds} to bound the $\hat{R}$ values above. In terms of orders, the $\hat{R}$ terms maintain the orders from the unweighted graph case, although they depend heavily on $\hat{R}_{|}$:
\begin{align}
    \hat{R}_{||} =& N(N-2)+\hat{R}_{|} = \Theta(N^2), \\
    \hat{R}_\land = \hat{R}_\lor =& N - \hat{R}_{|} = O(N), \\
    \hat{R}_{|} =& O(N).
\end{align}

Therefore, we can guarantee that to leading order, the trainability results from Appendix \ref{app:Detailed trainability calculations} hold for the case of weighted graphs.

\paragraph{Upper bound.} If $\left|\braket{\Vec{1}|D_{\Vec{\alpha}}|\Vec{1}}\right|^2=1$, then repeating the calculation in Appendix \ref{app: Upper bound} leads to $S_{\Vec{\alpha}}^c \leq S$ where
\begin{align}
    S =& \left[
    (R_{||}+R_\land)+
    (R_{\lor}+\frac{R_\land}{N_k}+2R_{|})
    \right]
    \left(\frac{1}{2N_k}+O(N_k^{-2})\right) \nonumber \\
    =& (k+1)^2\left[
        \left(\sum_{i,j}\hat{A}_{ij}\right)^2 + \hat{R}_{|} +\frac{1}{N_k}\left(\sum_{i,j,k}\hat{A}_{ij}\hat{A}_{kj}-\hat{R}_{|}\right)
    \right]
    \left(\frac{1}{2N_k}+O(N_k^{-2})\right) \nonumber \\
    =& \Theta(N^2 k^2 N_k^{-1}) \quad\quad \bigg( = \Theta(N^2 k^2 N_k^{-1})\bigg).
\end{align}
As derived for the unweighted case, this leads to 
\begin{equation}\label{eq: app upper bound general}
    \mathbb{E}[\Var(\partial_\mu \mathcal{L}_{QSGC})]_\text{upper} =
    \Theta(C^{-1}).
\end{equation}

\paragraph{Lower bound.} If $\left|\braket{\Vec{1}|D_{\Vec{\alpha}}|\Vec{1}}\right|^2=1$, then repeating the calculation in Appendix \ref{app: Lower bound} leads to 
\begin{align}
    S_{\Vec{\alpha}}^c \geq& 
    (R_\lor+2R_{|})
    \left(\frac{1}{2N_k}+O(N_k^{-2})\right) \nonumber \\
    =& (k+1)^2 \left(\sum_{i,j,k}\hat{A}_{ij}\hat{A}_{kj}+\hat{R}_{|}\right) \left(\frac{1}{2N_k}+O(N_k^{-2})\right) \nonumber \\
    =& \Theta(N k^2 N_k^{-1}),
\end{align}
which leads to the same lower bound
\begin{equation}\label{eq: app lower bound general}
    \mathbb{E}[\Var(\partial_\mu \mathcal{L}_{QSGC})]_\text{lower} =
    \Theta(N^{-1}C^{-1}).
\end{equation}

\paragraph{Expected dependence.} To this end, by adding the factor given by the curse of dimensionality, $\left|\braket{\Vec{1}|D_{\Vec{\alpha}}|\Vec{1}}\right|^2\sim N_k^{-1}$, we obtain the same result as in Appendix \ref{app: Note on the curse of dimensionality},
\begin{align}
    \mathbb{E}[\Var(\partial_\mu \mathcal{L}_{QSGC})]_\text{expected} \sim& \mathbb{E}[\Var(\partial_\mu \mathcal{L}_{QSGC})]_\text{upper} / N_k + \mathbb{E}[\Var(\partial_\mu \mathcal{L}_{QSGC})]_\text{lower} \nonumber \\
    =& \Theta(C^{-2} + N^{-1} C^{-1}). \label{eq: app expected trend general}
\end{align}

\subsection{Results for $p$-th order QSGCs}\label{app: Results for QSGCs with K layers}

The results above provide a significant extension to the first order QSGC. 
Taking a second order QSGC, for example, is equivalent to updating all of our results with $\hat{A} \mapsto \hat{A}^2$. However, as proven in Proposition \ref{prop: app adj matr prod is adj matr}, multiplying two commuting normalized adjacency matrices with self-loops gives a new normalized adjacency matrix with self-loops. Therefore, because $\hat{A}$ commutes with itself, then $\hat{A}^2$ is a valid normalized adjacency matrix with self-loops of a new graph.\footnote{This works regardless of whether $\hat{A}$ has weighted edges or not. In the unweighted case, we just define trivial weights $w_{ij}=1$ on all edges of the graphs.} Thus, our results should hold by reassigning $\hat{A}^2\equiv \hat{B}$ for some new graph $\mathcal{G}_B$. Moreover, thanks to Corrolary \ref{cor: app commuting k-regular graphs}, if the initial graph is $k$-regular, then graph $\mathcal{G}_B$ is also $k$-regular.

This can be continued. As $\hat{A}$ and $\hat{B}=\hat{A}^2$ commute, multiplying them gives a new graph with normalized adjacency matrix with self-loops $\hat{C}=\hat{A}^3$, which is also $k$-regular. Continuing this process, we confirm that our results hold for any QSGC model with an arbitrary propagation order, $p$. Explicitly, for a general, $p$-th order QSGC, the upper bound in Eq.~\ref{eq: app upper bound general}, lower bound in Eq.~\ref{eq: app lower bound general}, and expected dependence for arbitrary ansätze in Eq.~\ref{eq: app expected trend general} hold.

\subsection{Results for QLGCs}\label{app: Results for QLGCs}

As mentioned in Section \ref{sec: Linear Graph Convolutional Network}, we replace the polynomial $P(L)$ when implementing the \textit{Quantum Linear Graph Convolution}  with a polynomial of $\hat{A}$, as the two polynomials' coefficients are equally trainable. In our variance calculations, unsurprisingly, we only have to replace
\begin{equation}
    \hat{A} \mapsto P(\hat{A}) = \sum_{q=1}^p  a_q \hat{A}^q.
\end{equation}
Here, in order to implement the polynomial on the address register $Reg(i)$, its spectral norm must be subunitary, and so we choose the range of coefficients\footnote{We write the coefficients as $a$ rather than $\alpha$ to avoid confusion with the $D$-bit counter $\Vec{\alpha}$.} $ a_q\in[0,1]$, with at least one being non-zero and their sum must be one, $\sum_q a_q=1$. This is essentially a linear superposition of $q$-th order QSGC models.

Employing Corollary \ref{cor: app app convex sum n terms}, the convex sum $P(\hat{A})$ of commuting matrices $\hat{A}^q$ is a valid normalized adjacency matrix with self-loops of some new graph. This will reduce most of the QLGC trainability calculation to the first order QSGC, as seen above. However, here we note that the coefficients $a_q$ are also trainable parameters.

\subsubsection{New gradient variance calculation}

Following the result in Eq.~\ref{eq: app explicit cost}, the new cost function is given by
\begin{align}
    \mathcal{L}_{QLGC} =& -\frac{1}{N}\sum_{i,j}P(\hat{A})_{ij}\Re\left(\bra{\bold{y}_j}U(\thetaB)\ket{\bold{x}_i}\right) \nonumber \\
    =& \sum_q a_q\left(-\frac{1}{N}\sum_{i,j}(\hat{A}^q)_{ij}\Re\left(\bra{\bold{y}_j}U(\thetaB)\ket{\bold{x}_i}\right)\right) \nonumber \\
    \equiv & \sum_q a_q \mathcal{L}_{q}.
\end{align}
Here we note $\mathcal{L}_q$ the cost function of an $q$-th order QSGC.

The new set of parameters is $\theta_\eta\in[-\pi,\pi)$, along with $ a_q\in[0,1)$, with the additional condition $\sum_q a_q=1$. We already know that the volume for the $\theta_\eta$ parameters is $(2\pi)^D$, where $D$ is the number of weights in the VQC. Let the volume $\int\dd \Vec{a}\equiv \mathcal{N}_a$.\footnote{This is the hypersurface section of constant sum $\sum_q a_q=1$ through the unit hypercube, which is $\mathcal{N}_a = \sqrt{p}/(p-1)!$, but for our purposes the exact value will not be relevant.}

Therefore, there are two gradients that need to be analyzed: $\partial_\mu \mathcal{L}_{QLGC}$ and $\partial_r \mathcal{L}_{QLGC}\equiv \partial \mathcal{L}_{QLGC} / \partial a_r$.

\paragraph{Gradient $\partial_\mu \mathcal{L}_{QLGC}$.} The cost gradient bias in this case is
\begin{equation}
    \mathbb{E}_{\Vec{a}}\mathbb{E}_{\thetaB}[\partial_\mu \mathcal{L}_{QLGC}] = \mathbb{E}_{\Vec{a}} \sum_q a_q \mathbb{E}_{\thetaB}[\partial_\mu \mathcal{L}_{q}].
\end{equation}
The terms in the sum are simply the biases of the QSGC models, which are zero. Therefore, the overall bias is also zero.

The variance is
\begin{equation}
    \Var_{\Vec{a};\thetaB}[\partial_\mu \mathcal{L}_{QLGC}] = \mathbb{E}_{\Vec{a}}\mathbb{E}_{\thetaB}[(\partial_\mu \mathcal{L}_{QLGC})^2] =
    \mathbb{E}_{\Vec{a}} \sum_{q,q'} a_q a_{q'} \Cov_{\thetaB}(\partial_\mu \mathcal{L}_{q},\partial_\mu \mathcal{L}_{q'}).
\end{equation}
We see that the terms 
\begin{equation}
    \Cov_{\thetaB}(\partial_\mu \mathcal{L}_{q},\partial_\mu \mathcal{L}_{q'}) =
    -\frac{1}{N^2}\sum_{i,j,k,l}(\hat{A}^q)_{ij}(\hat{A}^{q'})_{kl}\Re\left(\bra{\bold{y}_j}U(\thetaB)\ket{\bold{x}_i}\right)\Re\left(\bra{\bold{y}_l}U(\thetaB)\ket{\bold{x}_k}\right)
\end{equation}
are a slightly more generalized version of the variance in Eq.~\ref{eq: app recall variance}, where instead of counting terms $\hat{A}_{ij}\hat{A}_{kl}$, we use terms $\hat{B}_{ij}\hat{C}_{kl}$, where the commuting normalized adjacency matrices with self-loops $\hat{B}=\hat{A}^q$ and $\hat{C}=\hat{A}^{q'}$. By adapting Eqs.~\ref{eq: app R hat ||}, \ref{eq: app R hat land}, \ref{eq: app R hat lor}, and \ref{eq: app R hat |} with this change, we see that, in fact, the $k$-regular approximations are exactly the same. By rerunning the same set of arguments, unsurprisingly we can check that the terms $\Cov_{\thetaB}(\partial_\mu \mathcal{L}_{q},\partial_\mu \mathcal{L}_{q'})$ have the same dependence as any QSGC, which is solely on the ansatz and the system size. We call this dependence $\mathcal{D}(N,C)$ for simplicity.

Therefore, the overall variance simplifies to
\begin{equation}
    \Var_{\Vec{a};\thetaB}[\partial_\mu \mathcal{L}_{QLGC}] = \mathbb{E}_{\Vec{a}}\sum_{q,q'} a_q a_q'\mathcal{D}(N,C) = \mathbb{E}_{\Vec{a}} \underbrace{\left(\sum_q a_q\right)^2}_{1}\mathcal{D}(N,C) = \mathcal{D}(N,C).
\end{equation}

\paragraph{Gradient $\partial_r \mathcal{L}_{QLGC}$.}

The gradient in this case takes the form 
\begin{equation}
    \partial_r \mathcal{L}_{QLGC} = \partial_r\left(\sum_q a_q \mathcal{L}_q\right) = \mathcal{L}_q.
\end{equation}
This does not depend on any $a_q$ and so the bias and variance simply reduce to those of the $q$-th order QSGC cost.
\begin{align}
    \mathbb{E}_{\Vec{a}}\mathbb{E}_{\thetaB}[\partial_r \mathcal{L}_{QLGC}] =& \mathbb{E}_{\thetaB}[\mathcal{L}_q], \\
    \Var_{\Vec{a};\thetaB}[\partial_r \mathcal{L}_{QLGC}] =& \Var_{\thetaB}[\mathcal{L}_q].
\end{align}

How different is calculating the bias and variance of the QSGC cost function instead of its gradient? Because of our choice of sequential ansatz $U(\thetaB)=\prod_\eta(\exp(-i\theta_\eta V_\eta)\cdot W_\eta)$, it turns out the results are exactly the same. This is thanks to the \textit{paramaeter-shift rule}: because $V_\eta^2=\mathds{1}$, taking the derivative 
\begin{equation}
    \partial_\mu \exp(-i\theta_\mu V_\mu) = -i V_\mu \exp(-i\theta_\mu V_\mu) = \exp(-i(\theta_\mu+\pi/2) V_\mu).
\end{equation}
Therefore, by relabeling $\thetaB \mapsto \thetaB'$ where all components are the same except $\theta_\mu \mapsto \theta_\mu+\pi/2$, the overall gradient $\partial_\mu U(\thetaB) \equiv U(\thetaB')$. In all of the integrals from the variance derivation, the exact endpoints of the angle domains were irrelevant; it only matters that they integrate over a full sine period. Hence, shifting $\theta_\mu \mapsto \theta_\mu+\pi/2$ does not affect the results. 

To this end, the bias of $\partial_r \mathcal{L}_{QLGC}$ is also zero, and the variance is $\mathcal{D}(N,C).$ We conclude that all QSGCs and QLGCs are equally trainable, leading to the exact same gradient cost variance for a given system size and ansatz.

\section{Complexity and dequantization expanded}\label{app: Complexity and dequantization expanded}

In this section, we show how one can benefit from the space-time complexity tradeoff for all QGCN models, and how they can be dequantized for graphs giving rise to low rank adjacency matrices $\hat{A}$. We start with the QSGC architecture.

\subsection{Classical and Quantum SGC Complexities}

We begin this by importing the complexity results from the QSGC's original paper \cite{liao_graph_2024}. For the classical SGC (CSGC) model, the general complexity bounds are
\begin{align}
    S_{CSGC} =& O(|E|+NC+C^2), \\
    T_{CSGC} =& O((p|E|C+NC^2)\log(1/\varepsilon')),
\end{align}
where $|E|=Nk/2$ is the number of edges, $N$ that of nodes, $k$ the average degree, and $C$ the number of features per node.  $p$ is the propagation order, which translates to how many times $\hat{A}$ is applied. $\varepsilon'$ is a precision parameter and is considered fixed, so we will absorb it in the Big-$O$ notation as an overhead. We mention that the logarithmic dependence makes this overhead have a small impact to the practical complexity.

The QSGC model attains the following complexities in the number of qubits and circuit depth:
\begin{align}
    n_{QSGC} =& O(\log(NC)+n_{anc}+n_{anc}'), \\
    D_{QSGC} =& \tilde{O}\left(NC\log(1/\varepsilon_1) \frac{\log(n_{anc})}{n_{anc}} + pN\log(N) \cdot s\log(s)\log(1/\varepsilon_2) \frac{\log(n'_{anc})}{n'_{anc}}\right).
\end{align}
Here, $n_{anc}$ and $\varepsilon_1$ are respectively the required number of ancillary qubits and precision of the amplitude encoding of the input and target states. Similarly, $n'_{anc}$ and $\varepsilon_2$ are those required for the block encoding of the adjacency matrix $\hat{A}$. The propagation order $p$ is taken to be independent on the graph size so can be ignored. The algorithm implementing the sparse block encoding assumes that $\hat{A}$ has sparsity $s$. This is to say, every column in $\hat{A}$ has at most $s$ non-zero values. Because this includes the self-loops, that is equivalent to saying $s-1$ corresponds to the maximum degree in the graph. The Big-$\tilde{O}$ notation excludes the doubly logarithmic factors as they scale very slowly. We note that the numbers of ancillary qubits $n_{anc}$ and $n'_{anc}$ can be adjusted, and lie in the following ranges:
\begin{align}
    \Omega(\log(NC))\leq & n_{anc} \leq O(NC), \label{eq: app nanc bounds}\\
    \Omega(\log(N))\leq & n'_{anc}\leq O(N\log(N) \cdot s\log(s)). \label{eq: app nancprime bounds}
\end{align}
The lower bound of $n_{anc}$ indicates that the logarithmic term in the total qubit number $n_{QSGC}$ can be omitted. We use this in Eq.~\ref{eq: app space QSGC}. Finally, the precisions can be ignored as done in the classical case.

\paragraph{Gradient calculation.} Firstly, the \textit{parameter-shift rule} \cite{Mitarai_Negoro_Kitagawa_Fujii_2018, Banchi_Branford_Waghela_2025} states that in order to compute the cost function gradient for a gradient descent method, one needs to shift in turn each of the weights $\theta_\eta$ of the ansatz $U(\thetaB)$. This adds a contribution of $O(2D)$ queries, where $D$ is the number of weights.

\paragraph{Shot noise.}  As mentioned in \cite{liao_graph_2024}, is of order $\Omega(\log(1/\delta)/\varepsilon^2)$. Here, $\varepsilon$ is the additive error to the gradient, and the probability of failure $\delta$ is fixed. In order to resolve points on the loss landscape so that the gradient descent can be performed, this error needs to lie below the standard deviation $\sigma_{\partial_\mu L}\sim\Var_{\thetaB} [\partial_\mu \mathcal{L}_{QSGC}]$ \cite{Sweke_Wilde_Meyer_Schuld_Faehrmann_Meynard-Piganeau_Eisert_2020, Gu_Lowe_Dub_Coles_Arrasmith_2021}. If we choose an arbitrary ansatz, the expected variance is of order $\Theta(N^{-1}C^{-1}+C^{-2})$. Then, the number of shots must be at least
\begin{equation}
    \text{\#shots} = \Omega\left(\frac{1}{\varepsilon^2}\right) \geq \Omega\left(\frac{1}{\sigma^2}\right) = \Omega\left(\frac{1}{N^{-1}C^{-1}+C^{-2}}\right) = \Omega\left(\frac{C^2}{1+C/N}\right)=\Omega(C^2),
\end{equation}
where $C/N\ll 1$ by choice.

\paragraph{Input problem.} As mentioned in Section \ref{sec: Note on Complexity and Dequantization}, in order to get the full space and time complexities, one needs to add to the general circuit complexities the full classical compilation overhead. For the QSGC, this requires that all classical data of the input, target, and adjacency matrix $\hat{A}$ to be stored and processed classically in order to generate the transpiled circuit. This adds a term $O(|E|C+NC)$ to the space complexity anf $\Omega(|E|C+NC)$ to the time complexity:
\begin{align}
    S_{QSGC} =& O(|E|C+NC) + O(n_{QSGC}), \label{eq: app space QSGC}\\
    T_{QSGC} =& \Omega(|E|C+NC) + O(2D)O(C^2)O(D_{QSGC}). \label{eq: app time QSGC}
\end{align}
We see that the input problem compromises any significant quantum advantage in this case as the first terms in both of Eqs.~\ref{eq: app space QSGC} and \ref{eq: app time QSGC} are comparable to the classical SGC space and time complexities. There could be significant differences in the overheads, but in general the problem as it stands is not scalable. Hence, we must assume an efficient oracle or QRAM to bypass the input problem for now. Hereafter, the classical compilation complexities are ignored.\\

\subsection{Complexity analysis for fixed $C$}

For our purposes, it is enough to consider $C$ as well as all precisions and failure probabilities fixed as we want to vary the graph size $N$ and connectivity $k$. The depth $D$ is also strictly dependent on the data register and should vary only with $C$. This simplifies the complexities to
\begin{align}
    S_{CSGC} =& O(N(k+C)) = O(Nk), \label{eq: app space CSGC}\\
    T_{CSGC} =& O(N(k+C)) = O(Nk), \label{eq: app time CSGC}\\
    S_{QSGC} =& O(n_{anc}+n_{anc}'), \label{eq: app new space QSGC}\\
    T_{QSGC} =& \tilde{O}\left(N\frac{\log(n_{anc})}{n_{anc}} + N\log(N) \cdot s\log(s)\frac{\log(n'_{anc})}{n'_{anc}}\right). \label{eq: app new time QSGC} 
\end{align}
In Eqs.~\ref{eq: app space CSGC} and \ref{eq: app time CSGC}, $C=O(1)$ in $N$ and $k=\Omega(1)$ so term $Nk$ dominates. Maintaining the assumption of (almost) regular graphs, we can write $k=\Theta(s)$ given that $s-1$ is the maximum degree of any node, and $k$ is the average degree. We now show that exponential advantage can be achieved for either space or time complexities, but not both, by classifying them over the choices of $n_{anc}$ and $n_{anc}'$. We split the analysis in three cases: a balanced polynomial advantage in space and time complexities, an exponential speed-up with maximized space reduction, and an exponential reduction with maximized speed-up. As we will see below, the latter case is a bit trickier and the choice of ancillas depends on the sparsity of the adjacency matrix.

\subsubsection{Polynomial speed-up and space reduction}

For this task we choose a sub-exponential number of ancillas, $n_{anc}=O(N^\alpha)$ and $n_{anc}'=O(N^{\alpha'})$, with $\alpha, \alpha'\in(0,1)$. This obeys the bounds in Eq.~\ref{eq: app nanc bounds} and Eq.~\ref{eq: app nancprime bounds}. Then, the updated space and time complexities in Eq.~\ref{eq: app new space QSGC} and Eq.~\ref{eq: app new time QSGC} are
\begin{align}
    S_{QSGC} =& O(N^\alpha), \\
    T_{QSGC} =& \tilde{O}\left(N\frac{\alpha\log(N)}{N^\alpha} + N\log(N)\cdot s\log(s)\frac{\alpha'\log(N)}{N^{\alpha'}}\right) \nonumber \\
    =& \tilde{O}\left(N^{1-\alpha}\log(N) + N^{1-\alpha'}\log^2(N)\cdot s\log(s)\right)
\end{align}

The sparsity $s\in[1,N]$ is an integer, so $\Omega(1)\leq s \leq O(N)$. Then, for each limit, we have the following cases for the space reduction:
\begin{equation}
    \begin{cases}
        s=\Theta(1) \qimpl 
        S_{CSGC}=O(N) \qimpl S_{QSGC} = O(S_{CSGC}^{\max(\alpha,\alpha')}), \\ \\
        s=\Theta(N) \qimpl 
        S_{CSGC}=O(N^2) \qimpl S_{QSGC} = O(S_{CSGC}^{\alpha'/2}),
    \end{cases}
\end{equation}
both leading to a polynomial advantage. Strictly speaking, the space reduction is sublinear.

For the time speed-up:
\begin{equation}
    \begin{cases}
        s=\Theta(1) \qimpl 
        \begin{cases}
            T_{CSGC} = O(N), \\ \\
            T_{QSGC} = \tilde{O}(N^{1-\max(\alpha,\alpha')}\log^a(N)) \qimpl \sigma_{q/cl} = \tilde{\Omega}(N^{\max(\alpha,\alpha')} / \polylog(N)),
        \end{cases} \\ \\
        s=\Theta(N) \qimpl 
        \begin{cases}
            T_{CSGC} = O(N^2), \\ \\
            T_{QSGC} = \tilde{O}(N^{2-\alpha'}\log^3(N)) \qimpl \sigma_{q/cl} = \tilde{\Omega}(N^{\alpha'/2} / \polylog(N)).
        \end{cases}
    \end{cases}
\end{equation}
In the second line, $a$ is 1 if $\alpha>\alpha'$ and 2 otherwise. 

Therefore, in both cases (and thus for any intermediate sparsity), the advantage is \textit{polynomial}. In terms of the system size $N$, the speed-up is also sublinear. This is the optimal balance as trying to obtain better scaling for one type of complexity will negatively affect the other one. The choice of $\alpha$ depends on the type of hardware onto which this model is implemented.

\subsubsection{Exponential space reduction}

Here we focus on obtaining an exponential space reduction with optimal time speed-up. For that reason, $n_{anc}$ and $n_{anc}'$ must be of order $O(\polylog(N))$. By explicitly defining the leading orders as $n_{anc}=O(\log^l(N))$ and $n_{anc}'=O(\log^{l'}(N))$, Eq.~\ref{eq: app new time QSGC} becomes
\begin{equation}
    T_{QSGC}=\tilde{O}\left(N/\log^l(N)+N\cdot s\log(s)/\log^{l'-1}(N)\right).
\end{equation}
For $s=\Theta(1)$, the terms are comparable up to a polylogarithm, and for $\min(l, l'-1)\geq 1$, the speed-up is \textit{polylogarithmic}. In fact, thanks to the lower bound of $n_{anc}$, the condition is simply $l'\geq 2$. For $s=\Theta(N)$, the second term dominates, and given that $T_{CSGC}=Ns$, the speed-up is polylogarithmic for $l'\geq3$. 

\subsubsection{Exponential speed-up}
In this case, in order for the runtime to gain an exponential speed-up, each of the two terms in Eq.~\ref{eq: app new time QSGC} must be polylogarithmic. For that to happen, we must choose $n_{anc}=O(N/\polylog(N))$ and $n_{anc}'=O(Ns/\polylog(N))$. By setting the dominant powers as $m$ and $m'$ respectively, Eq.~\ref{eq: app new time QSGC} becomes
\begin{align}
    T_{QSGC} =& \tilde{O}\left(N\frac{\log(N/\log^m(N))}{N/\log^m(N)} + N\log(N) \cdot s\log(s)\frac{\log(Ns/\log^{m'}(N))}{Ns/\log^{m'}(N)} \right) \\
    =& \tilde{O}(\log^{m+1}(N)+\log^{m'+2}(N)\log(s)) \\
    =& \tilde{O}(\polylog(N)).
\end{align}
The space reduction is then 
\begin{align}
    S_{QSGC} = O\left(N/\log^m(N) + Ns/\log^{m'}(N)\right).
\end{align}
For $s=\Theta(1)$, the terms are comparable and for $\min(m,m')\geq 1$ the space reduction is \textit{polylogarithmic}. For $s=\Theta(N)$, the second term dominates and regardless of $m\in\mathbb{Z}$, the condition reduces to $m'\geq 1$.

\subsection{A more accurate $C$-dependence}\label{app: A more accurate $C$-dependence}

In practice, for an ever-increasing graph size, having a limited number of features leads to many nodes becoming indistinguishable, having the same features. If one wants to be able to distinguish between all nodes if necessary, $C$ should increase at least logarithmically with the number of nodes. If one also desires additional expressivity from the embedding process, an even better choice that is still trainable is $C=O(\polylog(N))\equiv O(\log^c(N))$, where $c$ is the leading order. How does this change our results?

In terms of trainability, the barren plateau is still avoided for most ansätze, as
\begin{equation}
    \mathbb{E}_\text{i,t}[\Var_{\thetaB}(\partial_\mu \mathcal{L}_{QSGC})] \sim \Theta(N^{-1}\log^{-c}(N)+\log^{-2c}(N)) = \Theta(\log^{-2c}(N)).
\end{equation}
Furthermore, the number $D$ of weights $\theta_\eta$ now can depend on $N$. In order not to overfit, $D\ll O(C)$, so in the Big-$\tilde{O}$ notation, it can be dismissed. \\

With this form of $C$, the new complexities for the classical SGC are the following:
\begin{align}
    S_{CSGC} =& O(Ns+N\log^c(N)), \\
    T_{CSGC} =& O(Ns\log^c(N) + N\log^{2c}(N)).
\end{align}

Moving to the QSGC, the space and time complexities from Eq.~\ref{eq: app space QSGC} and Eq.~\ref{eq: app time QSGC} become
\begin{align}
    S_{QSGC} =& \tilde{O}(n_{anc}+n'_{anc}), \\
    T_{QSGC} =& O(\log^{2c}(N))\tilde{O}\left( N\log^{c}(N) \frac{\log(n_{anc})}{n_{anc}} + N\log(N) \cdot s\log(s)\frac{\log(n'_{anc})}{n'_{anc}}\right) \nonumber \\
    =& \tilde{O}\left( N\log^{3c}(N) \frac{\log(n_{anc})}{n_{anc}} + N\log^{c+1}(N) \cdot s\log(s)\frac{\log(n'_{anc})}{n'_{anc}}\right).
\end{align}
Similarly, the bounds on the numbers of ancillary qubits change as
\begin{align}
    \Omega(\log(N))\leq & n_{anc} \leq O(N\log^c(N)), \label{eq: app nanc bounds Cacc}\\
    \Omega(\log(N))\leq & n'_{anc}\leq O(N\log(N) \cdot s\log(s)). \label{eq: app nancprime bounds Cacc}
\end{align}
We observe that, in fact, the bounds are broadened by this dependence.

\subsubsection{Advantage regimes}\label{app: Advantage regimes}

The regimes described in the previous section are preserved as we let $C$ depend on $N$; the only difference is that the minimum dominant powers $l$, $l'$, $m$, and $m'$ generally need to make up for the extra powers $c$ in the QSGC model, but benefit from the extra polylogarithmic burden that the CSGC acquires.

\paragraph{Polynomial speed-up and space reduction.}

Because polynomial powers do not interact with logarithmic ones, this regime is completely unchanged. That is to say, $\alpha,\alpha'\in(0,1)$ is still valid.

\paragraph{Exponential space reduction.} In this case, $n_{anc}$ needs to make up for the factor of $\log^{3c}(N)$ and $n_{anc'}$ for $\log^{c+1}(N)$. Therefore, the following updates are required:
\begin{align}
    s =& \Theta(1) \qimpl \min(l-c,l'+c-1)\geq 1; \\
    s =& \Theta(N) \qimpl l'\geq 3, \quad\text{unchanged.}
\end{align}

\paragraph{Exponential speed-up.} Similar to the case above, the only update is on the conditions of the dominant logarithmic powers:
\begin{align}
    s =& \Theta(1) \qimpl \min(m,m')\geq c+1; \\
    s =& \Theta(N) \qimpl m'\geq 1, \quad \text{unchanged.}
\end{align}

\subsection{Complexity of the QLGC and advantage}

As given in \cite{liao_graph_2024}, the classical and quantum Linear Graph Convolutional models take the following forms:
\begin{align}
    S_{CLGC} =& O(|E| + pN C + C^2), \label{eq: app space CLGC} \\
    T_{CLGC} =& O(p|E|C + N C^2), \label{eq: app time CLGC} \\
    n_{QLGC} =& O(\log(NC)+n_{anc}+n'_{anc}) \label{eq: app qubits QLGC} \\
    D_{QLGC} =& \tilde{O}\left(N C \log\left(\frac{1}{\varepsilon_1}\right) \frac{\log(n_{anc})}{n_{anc}} + pN \log N \cdot s \log s 
    \log\left(\frac{1}{\varepsilon_2}\right) \frac{\log(n'_{anc})}{n'_{anc}} + pn'_{anc} \right). \label{eq: app depth QLGC}
\end{align}
Here, the propagation order $p$ takes the role of the dominant power in $P(\hat{A})$. In addition, there are now $p$ more weights encoded into rotation gates via QSVT, so the backpropagation complexity (parameter-shift rule) extends to $O(D+p)$. As in the QSGC case, we take $p$ to be fixed and the depth $D$ to be small compared to $C$. Therefore, this factor is neglected using the Big-$\tilde{O}$ notation. As before, we also consider the precision to factor in a small, fixed overhead.

Having assumed a fixed $p$, then the four LGC complexities above are identical to the SGC ones, with a single exception: the QSVT contribution in the depth, $\tilde{O}(pn'_{anc})$. Because this itself undergoes a trade-off with the term $\propto\frac{\log(n'_{anc})}{n'_{anc}}$, regardless of our choice of ancilla numbers, there is a minimum possible depth that halts the exponential speed-up advantage.

\paragraph{Optimal QLGC time complexity.} To check the trade-off in $D_{QLGC}$, we define a function
\begin{align}
    f(x) = a\frac{\log x }{x}+x,
\end{align}
In our case, $x=n'_{anc}$ and $a=N\log N \cdot s \log s$. Its minimum is given by $f'(x)=0$, which implies that
\begin{equation}
    0 = 1+a\frac{1-\log x}{x^2} \qimpl x = \sqrt{a(\log x -1)}.
\end{equation}
By assuming that $x_0=\sqrt{a}$, we can solve this iteratively:
\begin{align}
    x_1 &=\sqrt{a(\log x_0-1)} = \sqrt{a\log a} \nonumber \\
    x_2 &=\sqrt{a(\log(a\log a))-1} =\sqrt{a\log a + a\log(\log a)} = \tilde{O}\left(\sqrt{a\log a}\right).
\end{align}
The answer quickly converges, and the minimum function satisfying this is when the two terms are equal, $f(x) = \tilde{O}\left(\sqrt{a\log a}\right)$. In this case, the QLGC depth is minimised by:
\begin{align}
    (n'_{anc})_\text{m} &= \tilde{O}\left(\sqrt{Ns\log N \log s \log (Ns)}\right) = \tilde{O}\left(\sqrt{Ns\log^2 N \log s}\right), \\
    \left(D_{QLGC}\right)_{\text{min}} &= \tilde{O}\left(
    NC\frac{\log n_{anc}}{n_{anc}} + \sqrt{Ns \log^2 N \log s}
    \right).
\end{align}
Even for $n_{anc}=O(N/\polylog N)$, this depth can reach $\tilde{O}(\sqrt{N\log^2 N})$ for sparse graphs, and $\tilde{O}(N\log^{3/2}N)$ for dense ones, which means the exponential speed-up is now unachievable, with at best a polynomial one. As this case becomes weak, the remaining advantages are: (1) exponential space suppression, and (2) mixed polynomial-polynomial space-time advantage.


\subsection{Low-rank simulability}\label{app: Low-rank simulability}

An efficient oracle is a severe assumption in general; to even out the comparison, we can strictly ignore the data loading process for both the classical and the quantum models, and strictly compare the architectures themselves. In this case, Tang's algorithm \cite{Tang_2019} that approximates close-to-low-rank matrices with an $l^2$-norm sampling algorithm becomes the new reference for the classcal GCN models. Namely, the algorithm creates a distribution that can reconstruct a rank $r$ approximation of some $(M\times N)$ matrix $A$. The approximation precision $\varepsilon$ is defined such that $\|A_\text{approx}-A\|_F\leq\varepsilon\|A\|_F$ bounds the Frobenius norm of the error between the approximation and the exact matrix. Note that $r\leq\min(M,N)$.

In order to avoid the $O(M\cdot N)$ complexity, one can never access the full matrix directly, and for this reason the algorithm requires the construction of a \textit{binary search tree data structure} that groups elements together and only require $O(\log^2(MN))$ space and access time complexities per query. After creating these "efficient classical oracles", the algorithm itself can recreate a rank $r$ approximation of the matrix efficiently. Overall, from the information in \cite{Tang_2019}, we check that the space and time complexities for this algorithm are
\begin{align}
    S_\text{init} &= S_\text{compile} + S_\text{sampling} \nonumber \\
    &= O(w\log^2(MN))+O\left(\max\left\{\frac{r^{11}}{\varepsilon^{34}},\frac{r^{8}}{\varepsilon^{32}}\right\}\right), \label{eq: app s init} \\
    T_\text{init} &= T_\text{compile} + T_\text{sampling} \nonumber \\
    &= O(w\log^2(MN)) + \tilde{O}\left(\log^2(MN)\cdot \max\left\{\frac{r^{16.5}}{\varepsilon^{51}},\frac{r^{12}}{\varepsilon^{48}}\right\}\right). \label{eq: app t init}
\end{align}
Here, $w$ is the number of nonzero entries in matrix $A$. To maintain a fair comparison with the QGCN, the compilation is omitted and the data structure is assumed to exist already. For our purposes, we consider $\varepsilon$ to be a fixed precision and so will be omitted.\footnote{We mention that in reality, this can give an immense overhead due to its large powers.} Therefore, we will consider the following relevant complexities: 
\begin{align}
    S_\text{sampling}
    &= O\left(r^{11}\right), \label{eq: app s sample} \\
    T_\text{sampling}
    &= \tilde{O}\left( \log^2(MN)\cdot r^{16.5}\right). \label{eq: app t sample}
\end{align}
Once sampled, the generated distribution allows one to access the approximated matrix $A_\text{approx}$. We will apply this to our adjacency matrix $\hat{A}$, which we want to approximate to a rank $r$ matrix.

\paragraph{Sampling process (informal).} 
Assume we can efficiently construct the tree-data structures for matrices $\hat{A}$, $X$, and $Y$. We need to approximate the product $\sum_{i,j=1}^N(Y^T)_{ai}P(\hat{A})_{ij}X_{jb}$, for some $a,b\in[C]$, in order to find the QGCN cost function.\footnote{Here, $P(\cdot)$ is a $p$-th order polynomial of a QLGC (or simply the power function of a $p$-th order QSGC).} For this, we sample columns $j_1,j_2,\dots,j_n$ for $n \ll N$. Call the distribution $\mathcal{D}_{\hat{A}}$.\footnote{This sampling takes the form $\mathcal{D}_{\hat{A}}(j)=\|\hat{A}_{*j}\|^2/\|\hat{A}\|_F^2$, so peaks for larger columns' modulus. We do not need to worry about this form in our brief sketch.} By stacking these columns and normalizing each by $1/\sqrt{n\cdot\mathcal{D}_{\hat{A}}(j_u)}$ for $u\in[n]$, we form a new matrix
\begin{equation}
    S_{iu} = \frac{\hat{A}_{ij_u}}{\sqrt{n\cdot\mathcal{D}_{\hat{A}}(i_u)}}.
\end{equation}
We can do a sample for the rows of the already trimmed adjacency matrix $S$ by sampling $n$ rows from it\footnote{We use the initial sampling $\mathcal{D}_{\hat{A}}$, which we can easily access using the data structure for $\hat{A}$.}, and obtain a matrix
\begin{equation}
    W_{vu}=\frac{\hat{A}_{i_vj_u}}{\sqrt{n^2\mathcal{D}_{\hat{A}}(j_u)\mathcal{D}_{\hat{A}}(i_v)}}.
\end{equation}
This is a trimmed and normalized version of the initial adjacency matrix. Furthermore, we compute its Singular Value Decomposition, we obtain a spectral form 
\begin{equation}
    W = U_W\cdot \Sigma_W \cdot V_W^T,
\end{equation}
where $\Sigma_W$ is truncated to size $(r\times r)$. The following output is stored\footnote{Storing the isometries and singular matrix is not strictly required but it is useful as they will be called repeatedly.}:
\begin{equation}
    \biggl\{ (j_1,j_2,\dots,j_n);(i_1,i_2,\dots,i_n); W= U_W\cdot \Sigma_W \cdot V_W^T\biggr\}.
\end{equation}
This will be considered a point in our sample which is connected to the distribution $\mathcal{D}_{\hat{A}}$. We note that this is a simplified sketch of the sampling process, which takes time $T_\text{sampling}$ to produce, requiring space $S_\text{sampling}$.

\paragraph{Dequantizing a first order QSGC cost.} For this model, the cost can be easily dequantized. As we sampled rows and columns from $\hat{A}$, we need to also trim the input and target matrices accordingly: we sample 
\begin{equation}
    X'_{ua}=X_{j_ua}/\sqrt{n\mathcal{D}_{\hat{A}}(j_u)} \quad \text{and} \quad Y'_{av}=Y_{ai_v}/\sqrt{n\mathcal{D}_{\hat{A}}(i_v)}.
\end{equation}
Then, we compute the following estimator:
\begin{align}
    \mathbb{E}_{\{j\}_u,\{i\}_v} [(Y'\cdot W\cdot X')_{ab}] &= \sum_{u,v=1}^n \mathbb{E}_{\{j\}_u,\{i\}_v} [Y'_{av}\cdot W_{vu}\cdot X'_{ub}] \nonumber \\
    &= \sum_{u,v=1}^n\sum_{j_u,i_v=1}^N \mathcal{D}_{\hat{A}}(j_u) \mathcal{D}_{\hat{A}}(i_v) \frac{Y^T_{ai_v}\hat{A}_{i_vj_u}X_{j_ub}}{n^2\mathcal{D}_{\hat{A}}(j_u) \mathcal{D}_{\hat{A}}(i_v)} \nonumber \\ 
    &= \frac{1}{n^2}\sum_{u,v=1}^n(Y^T\cdot \hat{A}\cdot X)_{ab} \nonumber \\
    &= (Y^T\cdot \hat{A}\cdot X)_{ab}.
\end{align}
Because this distribution benefits from a sharp peak around the approximated matrix, we can instead write this as an approximation of the form 
\begin{equation}
    Y^T\cdot \hat{A}\cdot X \approx Y'\cdot W\cdot X'.
\end{equation}
Therefore, the cost is approximated to
\begin{equation}
    \mathcal{L}_{QSGC} \approx \Tr(Y'\cdot W\cdot X'\cdot U(\theta)).
\end{equation}
Calling the sampled values using the data structure takes $O(n\log^2N)$ runtime. The matrix contraction and trace calculation add $O(nCr+rC^2+r^2)$ to the runtime.

\paragraph{Dequantizing all QGCNs.} How do we advance the previous calculation to one for a generic polynomial $P(\hat{A})$? Because $\hat{A}$ is symmetric, we can diagonalize it and apply the polynomial to each eigenvalue. In order to access the eigenvalues, however, we need to project it onto a tractable space, much smaller than the $N$-dimensional space it normally acts on. We note that by spectrally decomposing matrix $S\equiv U_S\cdot \Sigma_S \cdot V_S$ (defined above as the collection of $n$ columns of $\hat{A}$), we can retrieve an approximation of the $n$ most significant eigenvectors of the adjacency matrix. Namely, the following relation is approximately satisfied, 
\begin{equation}
    \hat{A} \cdot \hat{A}^T \approx S\cdot S^T \qimpl \hat{A}^2 \approx U_S\cdot \Sigma_S^2\cdot U_S^T.
\end{equation}
This is a valid spectral decomposition for $\hat{A}^2$, so its eigenvectors are given by $U_S$. Furthermore, we can write out 
\begin{equation}
    U_S = S\cdot V_S \cdot \Sigma_S^{-1} \approx S\cdot \underbrace{V_W\cdot \Sigma_W^{-1}}_{\equiv P_W}.
\end{equation}
The Then, the optimal subspace projection (preserving most information) is 
\begin{equation}
    M := U_S^T\cdot \hat{A} \cdot U_S \approx P_W^T \cdot S^T \cdot \hat{A} \cdot S \cdot P_W.
\end{equation}
Next, by diagonalizing $M\equiv Z\cdot \Lambda \cdot Z$, we can apply the polynomial on its eigenvalues, thus reaching the approximation:
\begin{equation}
    P(\Lambda) \approx Z^T\cdot P_W^T \cdot S^T\cdot P(\hat{A}) \cdot S\cdot P_W\cdot Z.
\end{equation}
We note that this is normally a heavy approximation as applying a polynomial is a highly nonlinear operation, which is then approximated by passing it through a (linear) estimator. Because of the low variance of the chosen distribution, this is a reasonable approximation.

Returning to the matrix product $Y^T\cdot P(\hat{A})\cdot X$, we estimate this as
\begin{equation}
    Y^T\cdot P(\hat{A})\cdot X \approx Y^T\cdot S \cdot P_W \cdot Z \cdot P(\Lambda) \cdot Z^T \cdot P_W^T \cdot S^T \cdot X.
\end{equation}
Inside this approximation, we further take 
\begin{equation}
    Y^T\cdot S = \mathbb{E}_{\{i\}_v}[Y'^T\cdot W], \quad \text{and} \quad S^T\cdot X = \mathbb{E}_{\{j\}_u}[W^T\cdot X'],
\end{equation}
which simplifies the approximation to
\begin{equation}
    Y^T\cdot P(\hat{A})\cdot X \approx Y'^T\cdot U_W \cdot Z \cdot P(\Lambda) \cdot Z^T \cdot U_W^T \cdot X'
\end{equation}
and thus the cost function. For this, the runtime necessary for extracting the distribution points is $O(n\log^2N)$, that for the diagonalization of $M$ is $O(r^3)$. Finding the polynomial (of fixed order $p$) of $\Lambda$ using Horner's method takes another $O(pr)=O(r)$. Finally, the full matrix contraction can be calculated in $O(nCr+rC^2+Cr^2)$. In terms of space complexity, the $W$ are stored using their spectral decomposition in $O(nr+r^2)$ space, and the trainable weights in $O(C^2)$ space, apart from the efficient data structures for $Y$, $X$, and $\hat{A}$. \\

\paragraph{Complexities of the dequantized algorithm.} Considering that $n=O(r^{5.5})$ (given in \cite{Tang_2019}), the full complexities of the dequantized QGCN are:
\begin{align}
    S_{QGCN}^\text{Deq} &= S_\text{sampling} + S_\text{compute} = O(r^{11}+nr+C^2) \nonumber \\ &= O(r^{11}+\log^{2c}(N)), \label{eq: app s deq} \\
    T_{QGCN}^\text{Deq} &= T_\text{sampling} + T_\text{compute} = O(\log^2(N)\cdot r^{16.5}+nCr+rC^2+r^2C) \nonumber \\
    &=O(\log^2(N)\cdot r^{16.5}+\log^c(N)r^{6.5}). \label{eq: app t deq}
\end{align}


\paragraph{Sublinear rank.} In order for this algorithm to gain an advantage over the standard classical GCN model, $r\ll O(N)$ is required. Thus, we choose a form $r=O(N^\beta)$, where $\beta\in[0,1)$. For the comparison with the quantum GCN models, we remind the complexities (derived from Eq.~\ref{eq: app qubits QLGC} and Eq.~\ref{eq: app depth QLGC}) for $n_{anc}=O(N^\alpha)$ and $n'_{anc}=O(N^{\alpha'})$:
\begin{align}
    S_{QGCN} &= \tilde{O}\left(N^{\max(\alpha,\alpha')}\right), \\
    T_{QGCN} &= \tilde{O}\left(N^{1-\alpha}\log^{3c+1}(N) + N^{1-\alpha'}\log^{c+2}(N) \cdot s\log(s) + \kappa N^{\alpha'}\right).
\end{align}
Here, $\kappa$ is 0 for QSGCs and 1 for QLGCs.

In Lemma \ref{lemma: rank vs sparsity}, we show that for a normalized adjacency matrix with self-loops $\hat{A}$ with sparsity $s$, the rank is lower-bounded by $r\geq N/s$. Therefore, we can impose a lower-bound on the sparsity, $s=\Omega(N/r)$. The best case scenario is $s=\Theta(N^{1-\beta})$.

The advantage of the dequantized model w.r.t.~the quantum one requires the following:
\begin{align}
    \text{space advantage:} \quad &\max(\alpha,\alpha')>11\beta, \label{eq: app space deq condition}\\ 
    \text{time advantage:} \quad &\max(1-\alpha,2-\alpha'-\beta,\alpha') > 16.5\beta. \label{eq: app time deq condition}
\end{align}
Given that  $\alpha$ and $\alpha'$ lie in the interval $(0,1)$, then from the conditions above, the maximum rank where both speed-up and space suppression are achieved against the QGCN models is $\beta_\text{max}=1/11 \approx 0.09$. This is in fact also valid if one only seeks space suppression alone, as it comes from Eq.~\ref{eq: app space deq condition}. Alternatively, seeking speed-up alone, the rank has an exponent $\beta'_\text{max}=2/17.5 \approx 0.11$.

\paragraph{Polylogarithmic rank.} In the case where $r=O(\log^\rho(N))$ for some integer $\rho$, the new classical algorithm gains both exponential space reduction and speed-up compared to the standard CSGC. Compared to the QSGC which has a necessary trade-off, this is net superior, gaining an overall exponential advantage. \\

One final note is that because of the way the adjacency matrix $\hat{A}$ is approximated, the rank $r$ is a measure of the approximated matrix, so the rank inequalities can be further loosened to the stable rank \cite{tropp2008columnsubsetselectionmatrix, Cohen_Nelson_Woodruff_2016}, such that
\begin{equation}
    r \mapsto r_\text{stable}= \frac{\|S\|_F^2}{\|S\|_2^2} = \frac{\sum_i\sigma_i^2}{\sigma_1^2}.
\end{equation}
Here, the ratio of the Frobenius norm over the spectral norm simply checks whether the singular value spectrum of the matrix is dominated by its first (few) values or if it is flat. The Frobenius norm can be calculated efficiently directly from the tree data structure of $\hat{A}$. The first singular value can then be approximated by making a preliminary low-rank sampling ($r=O(1)$). Then, the rank we use will be $r\equiv \lfloor r_\text{stable}\rfloor$.

To this end, we conclude that the QSGC architecture is classically simulable for any $r=O(N^{0.11})$ and loses the advantage entirely for $r=O(\polylog(N))$.

\section{Helper lemmas}\label{app: Helper lemmas}

In this section, we define and prove a few lemmas that will be relevant for the detailed gradient analysis given in Appendix \ref{app:Detailed trainability calculations}.

\subsection{Sums involving Hamming weights}

First, we find the explicit forms for the summations labelled as $H_1$, $H_2$, and $H'_1$ in Section \ref{app: Dependence of the variance with the number of features}, by defining the following lemma:

\begin{lemma} \label{lemma: app H numbers}
    The following summations over all nonzero bit-strings of length $N_k$ hold:
    \begin{align}
        H_1 & := \sum_{\Vec{\gamma} \neq 0} \frac{\gamma_b}
        {w(\Vec{\gamma})} =
        \frac{2^{N_k}-1}{N_k},
        \label{eq: app H1 lemma}\\
        H_2 & := \sum_{\Vec{\gamma} \neq 0} \frac{\gamma_b \gamma_c}
        {w(\Vec{\gamma})} =
        \begin{cases}
            \frac{(N_k-2)2^{N_k-1}+1}{N_k(N_k-1)} \equiv H'_2 \quad \text{if} \quad b\neq c, \\ \\
            H_1 \quad \text{otherwise,}
        \end{cases} \label{eq:app H2 lemma}\\
        H'_1 & := \sum_{\Vec{\gamma} \neq 0} \frac{\gamma_b}
        {\sqrt{w(\Vec{\gamma})}} = 2^{N_k-1}
        \left[\sqrt{\frac{2}{N_k}}\left(1-\frac{1}{8N_k}\right) + O(N_k^{-5/2})\right]. \label{eq: app H1' lemma}
    \end{align}
    where $w(\Vec{\gamma})$ denotes the Hamming weight of a nonzero bit string $\Vec{\gamma}$ of length $N_k\geq 3$ and $b$ and $c$ label the bits inside $\Vec{\gamma}$.
\end{lemma}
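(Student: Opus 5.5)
The plan is to group the bit strings by Hamming weight and use elementary binomial identities. Fix positions $b\neq c$ and write $n\equiv N_k$. The number of strings of weight $w$ with $\gamma_b=1$ is $\binom{n-1}{w-1}$. The number with $\gamma_b=\gamma_c=1$ is $\binom{n-2}{w-2}$. Hence
\begin{equation*}
H_1=\sum_{w=1}^{n}\frac{1}{w}\binom{n-1}{w-1},\qquad H'_2=\sum_{w=2}^{n}\frac{1}{w}\binom{n-2}{w-2},\qquad H'_1=\sum_{w=1}^{n}\frac{1}{\sqrt{w}}\binom{n-1}{w-1}.
\end{equation*}
The case $b=c$ of $H_2$ is trivial, because $\gamma_b^2=\gamma_b$ gives back $H_1$.

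For $H_1$ I would use the absorption identity $\frac{1}{w}\binom{n-1}{w-1}=\frac{1}{n}\binom{n}{w}$. Summing over $w\ge1$ then gives $(2^n-1)/n$ immediately.

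For $H'_2$ I would first apply $\binom{n-2}{w-2}=\frac{w-1}{n-1}\binom{n-1}{w-1}$, so that $\frac{1}{w}\binom{n-2}{w-2}=\frac{w-1}{w(n-1)}\binom{n-1}{w-1}$. Splitting $\frac{w-1}{w}=1-\frac1w$ turns the sum into $\frac{1}{n-1}\bigl(2^{n-1}-H_1\bigr)$. Simplifying $\frac{1}{n-1}\bigl(2^{n-1}-\frac{2^n-1}{n}\bigr)$ gives $\frac{(n-2)2^{n-1}+1}{n(n-1)}$, as claimed.

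For $H'_1$ there is no closed form, so I would argue probabilistically. Write $H'_1=2^{n-1}\,\mathbb{E}\bigl[(1+K)^{-1/2}\bigr]$, where $K\sim\mathrm{Bin}(n-1,\tfrac12)$, with mean $\mu=(n-1)/2$ and variance $\sigma^2=(n-1)/4$. Then $1+K=\frac{n+1}{2}+(K-\mu)$. Setting $m=\frac{n+1}{2}$ and $X=K-\mu$, I would Taylor expand
\begin{equation*}
(m+X)^{-1/2}=m^{-1/2}\Bigl(1-\tfrac{X}{2m}+\tfrac{3X^2}{8m^2}-\tfrac{5X^3}{16m^3}+\dots\Bigr).
\end{equation*}
Taking expectations uses $\mathbb{E}X=0$, $\mathbb{E}X^2=\frac{n-1}{4}$, $\mathbb{E}X^3=0$ (by symmetry) and $\mathbb{E}X^4=O(n^2)$. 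This gives $m^{-1/2}\bigl(1+\frac{3(n-1)}{32m^2}+O(n^{-2})\bigr)$. Substituting $m^{-1/2}=\sqrt{2/n}\,(1+1/n)^{-1/2}=\sqrt{2/n}\bigl(1-\frac{1}{2n}+O(n^{-2})\bigr)$ and $\frac{3(n-1)}{32m^2}=\frac{3}{8n}+O(n^{-2})$, the correction at order $1/n$ is $-\frac12+\frac38=-\frac18$. That yields $\sqrt{2/n}\bigl(1-\frac{1}{8n}\bigr)+O(n^{-5/2})$.

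The main obstacle is controlling the Taylor remainder rigorously, since the expansion fails near $K=0$, where $(1+K)^{-1/2}$ is not small relative to $m^{-1/2}$. I would handle this by splitting on the event $|X|\le m/2$. On that event the fourth-order Lagrange remainder is $O(X^4/m^{9/2})$, whose expectation is $O(n^{-5/2})$. The complementary event has probability $e^{-\Omega(n)}$ by Hoeffding's inequality. On it the integrand is bounded by $1$, and so are the polynomial truncation terms after multiplying by their $O(1)$ size, so this part contributes only an exponentially small error, which is absorbed into $O(N_k^{-5/2})$.
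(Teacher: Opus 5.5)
Your proposal is correct and follows essentially the paper's route: you group by Hamming weight and use binomial identities for $H_1$ and $H'_2$, and you Taylor-expand $(1+K)^{-1/2}$ about the mean of $K\sim\mathrm{Bin}(N_k-1,\tfrac12)$ for $H'_1$. Two details differ. For $H'_2$, you reduce to $H_1$ via $\binom{n-2}{w-2}=\frac{w-1}{n-1}\binom{n-1}{w-1}$, whereas the paper uses a direct two-term identity. For $H'_1$, you bound the remainder explicitly (a fourth-order Lagrange term on $|X|\le m/2$ plus a Hoeffding tail bound), which fills in a step the paper only asserts.
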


\begin{proof}
    For $H_1$, all strings where $\gamma_b=0$ give no contribution, so the sum reduces to 
    \begin{equation}
        H_1 = \sum_{\Vec{\gamma}'}\frac{1}{w(\Vec{\gamma}')+1},
    \end{equation}
    where $\Vec{\gamma}'$ is an $(N_k-1)$-bit string equal to $\Vec{\gamma}$ with $\gamma_b$ removed. For this sum, we notice that there are $\binom{N_k-1}{k}$ strings that have a Hamming weight $w(\Vec{\gamma})\equiv k$. Therefore, the sum can be further simplified to
    \begin{equation}\label{eq: app H1 proof}
        H_1 = \sum_{k=0}^{N_k-1}\frac{1}{k+1}\binom{N_k-1}{k}
        = \frac{1}{N_k}\sum_{k=0}^{N_k-1}\binom{N_k}{k-1}
        =\frac{2^{N_k}-1}{N_k}.
    \end{equation}
    The second simplification comes from rearranging terms in the binomial formula $\binom{N_k-1}{k}=\frac{k+1}{N_k}\binom{N_k}{k-1}$, and the final result is derived using $\sum_{k=0}^{N_k} \binom{N_k}{k}=2^{N_k}$.

    For $H_2$, there are two options. If $b=c$, then in Eq.~\ref{eq:app H2 lemma}, $\gamma_b\gamma_c = \gamma_b^2 = \gamma_b$, so $H_2=H_1$. If $b\neq c$, we perform a similar calculation as done in Eq.~\ref{eq: app H1 proof}, after eliminating both bits $\gamma_b$ and $\gamma_c$ from the string $\Vec{\gamma}$, resulting into a new string $\Vec{\gamma}''$ of length $N_k-2$:
    \begin{equation}
        H_2 = \sum_{\Vec{\gamma}''}\frac{1}{w(\Vec{\gamma}'')+2} = \sum_{k=0}^{N_k-2}\frac{1}{k+2}\binom{N_k-2}{k}.
    \end{equation}
    Next, we write the terms as
    \begin{equation}
        \frac{1}{k+2}\binom{N_k-2}{k} = \frac{1}{N_k(N_k-1)}
        \left[ N_k\binom{N_k-1}{k+1}-\binom{N_k}{k+2}\right],
    \end{equation}
    which can be verified by writing out the binomial coefficients. Finally, by adding and subtracting terms to complete the summations $\sum_{k=0}^{n}\binom{n}{k}=2^{n}$ and for $n=N_k-1$ and $N_k$ respectively, we obtain
    \begin{equation}
        H_2 = \frac{1}{N_k(N_k-1)}
        \left[N_k(2^{N_k-1}-1)-(2^{N_k}-N_k-1)\right] = \frac{(N_k-2)2^{N_k-1}+1}{N_k(N_k-1)}.
    \end{equation}

    The final expression, Eq.~\ref{eq: app H1' lemma}, can be approximated using a Taylor expansion. We start with the same simplification as done in Eq.~\ref{eq: app H1 proof}:
    \begin{equation}
        H'_1 = \sum_{k=0}^{N_k-1}\frac{1}{\sqrt{k+1}}\binom{N_k-1}{k} \equiv 2^{N_k-1} \mathbb{E}_k^{\textbf{Bi}}\left[\frac{1}{k+1}\right],
    \end{equation}
    where we have normalized by the exponential in order to reach a binomial distribution $\textbf{Bi}(N_k-1, 1/2)$ --- hence the summation becomes an expected value.

    For this binomial distribution, the mean is $\mu=\frac{N_k-1}{2}$, and the variance is $\sigma^2=\frac{N_k-1}{4}$. Thus, an expansion around the mean of any function $f(k)$,
    \begin{equation}
        f(k) = f(\mu) + f'(\mu)(k-\mu)+\frac{1}{2}f''(\mu)(k-\mu)^2+ \xi
    \end{equation}
    has an expected value
    \begin{equation}
        \mathbb{E}_k^\textbf{Bi}[f(k)] = 
        f(\mu) +
        f'(\mu)\mathbb{E}_k^\textbf{Bi}[(k-\mu)] +
        \frac{1}{2}f''(\mu)\mathbb{E}_k^\textbf{Bi}[(k-\mu)^2] +
        \mathbb{E}_k^\textbf{Bi}[\xi].
    \end{equation}
    Here, $\xi$ is the error of the Taylor expansion. The first moment vanishes (as well as all odd moments in the error, $\mathbb{E}_k^\textbf{Bi}[(k-\mu)^{2l+1}]$). Furthermore, the second moment is simply the variance; therefore 
    \begin{equation}
        \mathbb{E}_k^\textbf{Bi}[f(k)] = f(\mu) + \frac{1}{2}f''(\mu)\sigma^2 + \mathbb{E}_k^\textbf{Bi}[\xi].
    \end{equation}

    In our case, the function analyzed is $f(k)=(k+1)^{-1/2}$, so calculating the derivatives leads to 
    \begin{equation}
        \mathbb{E}_k^\textbf{Bi}[(k+1)^{-1/2}] = 
        \sqrt{\frac{2}{N_k+1}} + \frac{3\sqrt{2}}{8}\frac{N_k-1}{(N_k+1)^\frac{5}{2}}+O\left(N_k^{-\frac{5}{2}}\right),
    \end{equation}
    where the $O(N_k^{-5/2})$ term comes from the error. By also involving a large-$N_k$ approximation, we reach the desired result for $H'_1$.
\end{proof}

For the use of binomial formulas and that of the expected value of the Taylor expansion for $H_1'$, we recommend reading \cite{Li_Goldman_2020} and \cite{Casella_Berger_2002}, respectively.
We also show some approximations necessary for the final formula in the following lemma:

\begin{lemma}\label{lemma: app H approx}
    The following relations hold:
    \begin{align}
        \frac{(H'_1)^2}{(2^{N_k}-1)^2} = & \frac{1}{2N_k} + O(N_k^{-2}), \label{eq: app approx 1}\\
        \frac{H_2'}{2^{N_k}-1} = & \frac{1}{2N_k} + O(N_k^{-2}), \label{eq: app approx 2}\\
        \frac{H_1-H_2'}{N_k(2^{N_k}-1)} = & \frac{1}{2N_k^2} + O(N_k^{-3}). \label{eq: app approx 3}
    \end{align}
\end{lemma}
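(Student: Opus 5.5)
We plan to plug the closed forms and expansions of Lemma \ref{lemma: app H numbers} into each ratio and expand in large $N_k$. The only elementary facts needed are
\[
\frac{2^{N_k-1}}{2^{N_k}-1}=\frac{1}{2}\bigl(1+O(2^{-N_k})\bigr),
\qquad
\frac{1}{N_k-1}=\frac{1}{N_k}\bigl(1+N_k^{-1}+O(N_k^{-2})\bigr).
\]
Exponentially small corrections are absorbed into any polynomial error term.

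\textbf{First ratio.} Squaring the expansion of $H'_1$ gives
\[
(H'_1)^2 = 2^{2N_k-2}\Bigl[\tfrac{2}{N_k}\bigl(1-\tfrac{1}{8N_k}\bigr)^2+O(N_k^{-3})\Bigr].
\]
The cross term between $\sqrt{2/N_k}$ and $O(N_k^{-5/2})$ is $O(N_k^{-3})$. Dividing by $(2^{N_k}-1)^2 = 2^{2N_k}(1+O(2^{-N_k}))$ yields
\[
\frac{(H'_1)^2}{(2^{N_k}-1)^2} = \frac{1}{4}\cdot\frac{2}{N_k}\bigl(1+O(N_k^{-1})\bigr) = \frac{1}{2N_k}+O(N_k^{-2}),
\]
which is Eq.~\ref{eq: app approx 1}.

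\textbf{Second ratio.} We have
\[
\frac{H'_2}{2^{N_k}-1}=\frac{(N_k-2)2^{N_k-1}+1}{N_k(N_k-1)(2^{N_k}-1)}.
\]
Up to exponentially small terms, this is
\[
\frac{N_k-2}{2N_k(N_k-1)} = \frac{1}{2N_k}\Bigl(1-\frac{1}{N_k-1}\Bigr)=\frac{1}{2N_k}+O(N_k^{-2}),
\]
which gives Eq.~\ref{eq: app approx 2}.

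\textbf{Third ratio (main obstacle).} Here a cancellation occurs: $H_1/(2^{N_k}-1)=1/N_k$ exactly, while $H'_2/(2^{N_k}-1)=\frac{1}{2N_k}+O(N_k^{-2})$. Their difference therefore has leading term $\frac{1}{2N_k}$, with a correction of order $N_k^{-2}$ that must be controlled explicitly rather than hidden in an error term. Using the exact form above,
\[
\frac{H_1-H'_2}{2^{N_k}-1}=\frac{1}{N_k}-\frac{N_k-2}{2N_k(N_k-1)}+O(2^{-N_k})=\frac{2(N_k-1)-(N_k-2)}{2N_k(N_k-1)}+O(2^{-N_k})=\frac{1}{2(N_k-1)}+O(2^{-N_k}).
\]
Since $\frac{1}{2(N_k-1)}=\frac{1}{2N_k}+O(N_k^{-2})$, dividing by the remaining $N_k$ gives
\[
\frac{H_1-H'_2}{N_k(2^{N_k}-1)}=\frac{1}{2N_k^2}+O(N_k^{-3}),
\]
which is Eq.~\ref{eq: app approx 3}. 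The one point to check carefully is that the $+1$ in the numerator of $H'_2$ and the $-1$ in $2^{N_k}-1$ only contribute $O(2^{-N_k}/N_k)$ after normalization, so they cannot spoil the $N_k^{-3}$ error.
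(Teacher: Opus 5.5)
Your proposal is correct and follows the paper's proof essentially step by step: you substitute the closed forms and the $H'_1$ expansion from Lemma \ref{lemma: app H numbers}, discard the exponentially small corrections from the $\pm 1$ terms, and expand in $1/N_k$. The only difference is in the third relation. The paper just subtracts $\frac{1}{2N_k}+O(N_k^{-2})$ from the exact value $H_1/(2^{N_k}-1)=1/N_k$, which suffices because the leading terms do not cancel completely. You instead compute the slightly sharper exact combination $\frac{1}{2(N_k-1)}$, which is valid but not needed.
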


\begin{proof}
    Following the previous lemma, we use Eq.~\ref{eq: app H1' lemma} to calculate
    \begin{align}
        \frac{(H'_1)^2}{(2^{N_k}-1)^2} =& \frac{2^{2N_k-2}\left[\frac{2}{N_k}\left(1-\frac{1}{8N_k}\right)^2+O(N_k^{-3})\right]}{(2^{N_k}-1)^2} \nonumber \\
        =&\frac{1}{4}\left[\frac{2}{N_k}\left(1-\frac{1}{8N_k}\right)^2+O(N_k^{-3})\right] + O(\exp(-N_k)) \nonumber \\
        =& \frac{1}{2N_k} + O(N_k^{-2}),
    \end{align}
    where in the first line we neglected the ``1'' term adding an exponentially small error, and then reduced to $O(N_k^{-2})$.
    The following relation is computed as
    \begin{align}
        \frac{H_2'}{2^{N_k}-1} =& \frac{(N_k-2)2^{N_k-1}+1}{N_k(N_k-1)(2^{N_k}-1)} \nonumber \\
        = &\frac{N_k-2}{2N_k(N_k-1)} + O(\exp(-N_k)) \nonumber \\
        =& \frac{1}{2N_k}\cdot\frac{1-2N_k^{-1}}{1-N_k^{-1}} \nonumber \\
        =& \frac{1}{2N_k}\left(1-\frac{2}{N_k}+O(N_k^{-2})\right)\left(1+\frac{1}{N_k}+O(N_k^{-2})\right) \nonumber \\
        =& \frac{1}{2N_k}+O(N_k^{-2}).
    \end{align}
    Finally, Eq.~\ref{eq: app approx 3} is calculated as follows:
    \begin{align}
        \frac{H_1-H_2'}{N_k(2^{N_k}-1)}=& \frac{1}{N_k} \left[\frac{1}{N_k}-\left(\frac{1}{2N_k}+O(N_k^{-2})\right)\right] \nonumber \\
        =& \frac{1}{2N_k^2}+O(N_k^{-3}).
    \end{align}
    Thus the three asymptotes are derived.
\end{proof}

To this end, we show that the $H$-terms above reduced to a maximum Hamming weight $N_k/2 < C < N_k$ are of the same order as the previous ones.

\begin{lemma}\label{lemma: app reduced H terms}
    The following summations,
\begin{align}
    H_{1,\text{red}} & := \sum_{\Vec{\gamma} \in [1,C]} \frac{\gamma_b}
    {w(\Vec{\gamma})} = \Theta(H_1)\\
    H_{2,\text{red}} & := \sum_{\Vec{\gamma} \in [1,C]} \frac{\gamma_b \gamma_c}
    {w(\Vec{\gamma})} = \Theta(H_2)\\
    H'_{1,\text{red}} & := \sum_{\Vec{\gamma} \in [1,C]} \frac{\gamma_b}
    {\sqrt{w(\Vec{\gamma})}} = \Theta(H'_1).
\end{align}
\end{lemma}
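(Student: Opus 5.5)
The plan is to reduce each truncated sum to a binomial sum, exactly as in the proof of Lemma \ref{lemma: app H numbers}, and compare it with the full sum. The truncated sum runs over nonzero strings $\Vec{\gamma}\in\{0,1\}^{N_k}$ with $w(\Vec{\gamma})\le C$. Since every summand is non-negative, each truncated sum is bounded above by the corresponding full sum $H_1$, $H_2$ or $H'_1$. So only the lower bounds need work, and it suffices to show that the truncation retains a constant fraction of the mass.

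For $H_{1,\text{red}}$, fix the bit $\gamma_b=1$ and remove it, as in Eq.~\ref{eq: app H1 proof}. This gives
\begin{equation}
H_{1,\text{red}}=\sum_{k=0}^{C-1}\frac{1}{k+1}\binom{N_k-1}{k}.
\end{equation}
The analogous reductions give $H_{2,\text{red}}=\sum_{k=0}^{C-2}\frac{1}{k+2}\binom{N_k-2}{k}$ for $b\neq c$ (and $H_{2,\text{red}}=H_{1,\text{red}}$ for $b=c$), and $H'_{1,\text{red}}=\sum_{k=0}^{C-1}(k+1)^{-1/2}\binom{N_k-1}{k}$. In each case the summand is a binomial weight times a monotonically decreasing function of $k$.

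The key step uses $C>N_k/2$: the retained range $k\le C-1$ (or $k\le C-2$) contains all $k\le \lfloor (N_k-1)/2\rfloor$, that is, at least half of the binomial distribution $\textbf{Bi}(N_k-1,1/2)$ (resp. $\textbf{Bi}(N_k-2,1/2)$) by symmetry. Because the weight functions $1/(k+1)$, $1/(k+2)$ and $(k+1)^{-1/2}$ are decreasing, the lower half of the distribution carries at least as much weighted mass as the upper half. Concretely, I would pair $k$ with its mirror $N_k-1-k$: the term with the smaller $k$ dominates its partner, so the sum over $k\le (N_k-1)/2$ is at least half of the full sum. This gives $H_{\cdot,\text{red}}\ge \tfrac12 H_{\cdot}$, and combined with the upper bound it yields $\Theta$ in each of the three cases. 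For $H_2$ with $b=c$ the statement follows directly from the $H_1$ case.

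The only delicate point is parity and the boundary of the mirror pairing, in particular the middle term when $N_k-1$ is even and the shift from $C-1$ to $C-2$ in $H_{2,\text{red}}$. I would handle this by noting that $C\ge \lfloor N_k/2\rfloor+1$ (since $C>N_k/2$), so $C-2\ge \lfloor (N_k-2)/2\rfloor$, and the middle term is always retained. As a sanity check against Lemma \ref{lemma: app H approx}, I would also note that the mean $(N_k-1)/2$ lies inside the retained range, so the ratios to $2^{N_k}-1$ stay of order $N_k^{-1}$ (resp. $N_k^{-1/2}$ for $H'_1$).
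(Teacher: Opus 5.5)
Your proposal is correct and follows the paper's route. Both arguments reduce each sum to $\sum_{k\le c} f(k)\binom{n}{k}$ with $f$ decreasing, then use $C>N_k/2$ together with binomial symmetry to show that the retained part carries at least half of the full sum, which gives $\tfrac12 H_f\le H_{f,\text{red}}\le H_f$. Your explicit mirror pairing $k\leftrightarrow n-k$ and your check $C-2\ge\lfloor (N_k-2)/2\rfloor$ make rigorous the step that the paper states only tersely.
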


\begin{proof}
    After performing the same steps as in Lemma \ref{lemma: app H numbers} to simplify the $\gamma_b$ and $\gamma_c$ components, these sums become 
    \begin{align}
        H_{1,\text{red}} &= \sum_{k=0}^{C-1}\frac{1}{k+1}\binom{N_k-1}{k} \\
        H_{1,\text{red}} &= \sum_{k=0}^{C-2}\frac{1}{k+2}\binom{N_k-2}{k} \\
        H'_{1,\text{red}} &= \sum_{k=0}^{C-1}\frac{1}{\sqrt{k+1}}\binom{N_k-1}{k}.
    \end{align}
    All sums take the form
    \begin{equation}
        H_{f, \text{red}}=\sum_{k=0}^c f(k)\binom{n}{k},
    \end{equation}
    for corresponding $n$ and $c$ and strictly decreasing function $f(k)$. 
    
    We notice that, because $C > N_k/2$, all the sums contain the most significant binomial factors, namely $\binom{n}{\lfloor n/2 \rfloor}$. Together with the monotonicity of $f(k)$, for all terms we can infer that
    \begin{equation}
        H_{f, \text{red}} > \sum_{k=c+1}^n f(k)\binom{n}{k}
    \end{equation}, and so 
    \begin{equation}
        H_f = \sum_{k=0}^n f(k)\binom{n}{k} = \sum_{k=0}^c f(k)\binom{n}{k} + \sum_{k=c+1}^n f(k)\binom{n}{k} < 2 H_{f, \text{red}} < 2H_f,
    \end{equation}
    which proves the lemma.
\end{proof}

In fact, because the Binomial distribution has a sharp peak around the middle point, one can find a significantly better approximation, of the form $H_{f, \text{red}} = H_f + O(2^n/n)$, the error becoming small after normalization when calculating the expected variance.

\subsection{Constraining the alternating sum in the expected variance}\label{app: Constraining the alternating sum in the expected variance}

Here we prove that by adding a very loose constraint, we can force the summation in Eq.~\ref{eq: app EVar alternating sum problem} to always be positive. We do this in the following lemma:

\begin{lemma}\label{lemma: app alternating sum}
    For $\{S_{\Vec{\alpha}}^v\}_{\Vec{\alpha}}$ given in Eq.~\ref{eq: app Sav term} and for some ansatz $U(\theta)$ defined in Eq.~\ref{eq: app ansatz}, the map $W_\lambda\mapsto V_\lambda W_\lambda$ flips the sign of
    \begin{equation}
        \sum_{\Vec{\alpha}\in\{0,1\}^{D}}(-1)^{w(\Vec{\alpha})}\Re(S_{\Vec{\alpha}}^v) 
        \quad\longmapsto\quad 
        -\sum_{\Vec{\alpha}\in\{0,1\}^{D}}(-1)^{w(\Vec{\alpha})}\Re(S_{\Vec{\alpha}}^v).
    \end{equation}
\end{lemma}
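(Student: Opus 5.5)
The plan is to track how each operator $D_{\Vec{\alpha}}=\prod_{\eta=D}^1 V_\eta^{\alpha_\eta}W_\eta$ changes under the replacement $W_\lambda\mapsto V_\lambda W_\lambda$, and then re-index the sum over $\Vec{\alpha}$. Only the $\lambda$-th factor is affected:
\[
V_\lambda^{\alpha_\lambda}W_\lambda \mapsto V_\lambda^{\alpha_\lambda}V_\lambda W_\lambda = V_\lambda^{\alpha_\lambda\oplus 1}W_\lambda .
\]
Here we use $V_\lambda^2=\mathds{1}$ (given in Eq.~\ref{eq: app ansatz}). Write $\Vec{\alpha}^{(\lambda)}$ for the bit string $\Vec{\alpha}$ with bit $\lambda$ flipped. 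The new operator $D'_{\Vec{\alpha}}$ then equals $D_{\Vec{\alpha}^{(\lambda)}}$ exactly.

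The next step is to see what this does to $S_{\Vec{\alpha}}^v$. By Eq.~\ref{eq: app Sav term}, and equivalently the final form Eq.~\ref{eq: app Sav final form}, $S_{\Vec{\alpha}}^v$ depends on the ansatz only through $D_{\Vec{\alpha}}$. The coefficients $R_{||},R_\land,R_\lor,R_|$ and the input/target averages are untouched by the map. Hence the new quantity satisfies $S'^{\,v}_{\Vec{\alpha}}=S^v_{\Vec{\alpha}^{(\lambda)}}$, and likewise $\Re(S'^{\,v}_{\Vec{\alpha}})=\Re(S^v_{\Vec{\alpha}^{(\lambda)}})$.

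The alternating sum for the modified ansatz is therefore
\[
\sum_{\Vec{\alpha}}(-1)^{w(\Vec{\alpha})}\Re\big(S^v_{\Vec{\alpha}^{(\lambda)}}\big).
\]
Substitute $\Vec{\beta}=\Vec{\alpha}^{(\lambda)}$. This is a bijection of $\{0,1\}^D$ onto itself, and it satisfies $w(\Vec{\alpha})=w(\Vec{\beta})\pm1$, so $(-1)^{w(\Vec{\alpha})}=-(-1)^{w(\Vec{\beta})}$. The sum thus becomes $-\sum_{\Vec{\beta}}(-1)^{w(\Vec{\beta})}\Re(S^v_{\Vec{\beta}})$, which is the claimed sign flip.

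There is one point I would check carefully, and I expect it to be the only real obstacle. The factors of the ansatz, and hence of $D_{\Vec{\alpha}}$, must genuinely depend on the modified $W_\lambda$ only through the product $V_\lambda^{\alpha_\lambda}W_\lambda$. This requires the derivation in Appendix \ref{app: Derivation of the gradient variance} to remain valid for the new fixed gate $V_\lambda W_\lambda$. It does, because that derivation allows arbitrary unitary $W_\eta$ and $V_\lambda W_\lambda$ is unitary.

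The parameter index $\mu$ of the gradient plays no role in the final formula Eq.~\ref{eq: app variance}. So it does not matter whether $\lambda=\mu$ or not.

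Finally, I would observe that $S^c_{\Vec{\alpha}}$ is reindexed in the same way. Since its sum carries no sign, the sum $\sum_{\Vec{\alpha}}S^c_{\Vec{\alpha}}$ is invariant. Only the alternating part changes sign, which is what Appendix \ref{app: Lower bound} needs.
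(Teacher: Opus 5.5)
Your proposal is correct and follows essentially the same route as the paper: the substitution $W_\lambda\mapsto V_\lambda W_\lambda$ turns $D_{\Vec{\alpha}}$ into $D_{\Vec{\alpha}'}$, where $\Vec{\alpha}'$ is $\Vec{\alpha}$ with bit $\lambda$ flipped (using $V_\lambda^2=\mathds{1}$), and reindexing the sum through this bijection pulls a factor $-1$ out of $(-1)^{w(\Vec{\alpha})}$. Your additional checks are consistent with the paper's remarks and make the argument slightly more explicit; they are the validity of the variance formula for the new unitary fixed gate, the irrelevance of whether $\lambda=\mu$, and the invariance of $\sum_{\Vec{\alpha}}S^c_{\Vec{\alpha}}$.
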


\begin{proof}
    When mapping $W_\lambda \mapsto V_\lambda W_\lambda \equiv\tilde{W}_\lambda$, the unitary matrix $D_{\Vec{\alpha}}$ transforms as
    \begin{align}
        D_{\Vec{\alpha}} &= \prod_{\eta=D}^1V_\eta^{\alpha_\eta} W_\eta \nonumber \\
        &= V_D^{\alpha_D}W_D\dots V_{\lambda+1}^{\alpha_{\lambda+1}}W_{\lambda+1}\cdot V_\lambda^{\alpha_\lambda} W_\lambda \cdot V_{\lambda-1}^{\alpha_{\lambda-1}}W_{\lambda-1} \dots  V_1^{\alpha_1}W_1 \nonumber \\
        &\mapsto V_D^{\alpha_D}W_D\dots V_{\lambda+1}^{\alpha_{\lambda+1}}W_{\lambda+1}\cdot V_\lambda^{\alpha_\lambda} \tilde{W}_\lambda \cdot V_{\lambda-1}^{\alpha_{\lambda-1}}W_{\lambda-1} \dots  V_1^{\alpha_1}W_1 \nonumber \\
        &= V_D^{\alpha_D}W_D\dots V_{\lambda+1}^{\alpha_{\lambda+1}}W_{\lambda+1}\cdot V_\lambda^{\alpha_\lambda+1} W_\lambda \cdot V_{\lambda-1}^{\alpha_{\lambda-1}}W_{\lambda-1} \dots  V_1^{\alpha_1}W_1 \nonumber \\
        &\equiv D_{\Vec{\alpha}'},
    \end{align}
    where $\Vec{\alpha}'$ has all bits equal to those of $\Vec{\alpha}$, except for $\alpha_\lambda':=\alpha_\lambda \oplus 1$.
    Therefore,
    \begin{equation}
        \Re(S_{\Vec{\alpha}}^v)\mapsto \Re(S_{\Vec{\alpha}'}^v),
    \end{equation}
    and the sum
    \begin{align}
        \sum_{\Vec{\alpha}\in\{0,1\}^{D}}(-1)^{w(\Vec{\alpha})}\Re(S_{\Vec{\alpha}}^v) &\mapsto
        \sum_{\Vec{\alpha}\in\{0,1\}^{D}}(-1)^{w(\Vec{\alpha})}\Re(S_{\Vec{\alpha}'}^v) \nonumber \\
        &\equiv \sum_{\Vec{\alpha'}\in\{0,1\}^{D}}(-1)^{w(\Vec{\alpha}')\oplus 1}\Re(S_{\Vec{\alpha}'}^v) \nonumber \\
        &= - \sum_{\Vec{\alpha'}\in\{0,1\}^{D}}(-1)^{w(\Vec{\alpha}')}\Re(S_{\Vec{\alpha}'}^v).
    \end{align}
    Here we first applied the map, thus obtaining $\Re(S_{\Vec{\alpha}}^v)$, then explicitly rewrote $\Vec{\alpha}$ in terms of $\Vec{\alpha}'$. Flipping one bit is a bijection, so the sum maintains its range. Finally, after extracting the additional $-1$ factor, we can relabel the summation index with $\Vec{\alpha}$, hence proving the lemma.
\end{proof}

Therefore, if our summation is negative, we can pick any of the $W_\eta$ gates (or any odd number of them, for that matter) and multiply them by its associated $V_\eta$, in order to make the sum positive. Note that, by applying the map $W_\lambda \mapsto \tilde{W}_\lambda$, the $S_{\Vec{\alpha}}^c$ summation in the variance is unchanged.

\subsection{Curse of dimensionality}

The curse of dimensionality, first mentioned by Richard Bellman in 1957 \cite{bellman1957dynamic}, is a phenomenon that occurs in many areas of computational and data science \cite{Chen_2009, Altman2018-xp, Hu_2024}. In our case, we formally define it as a vanishingly small overlap between two random states as their Hilbert space dimension increases:
\begin{lemma}\label{lemma: app curse of dim}
    For two random $n$-qubit states $\ket{\psi}$ and $\ket{\phi}$, their overlap (modulus squared) averages to
    \begin{equation}
        \mathbb{E}_{\mathcal{H}}[|\braket{\phi|\psi}|^2]=\frac{1}{2^n}.
    \end{equation}
\end{lemma}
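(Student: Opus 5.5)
The plan is to interpret ``random states'' as Haar-random: $\ket{\psi}$ and $\ket{\phi}$ are drawn independently from the unitarily invariant measure on the unit sphere of $\mathcal{H}=(\mathbb{C}^2)^{\otimes n}$, with $d\equiv 2^n$. By unitary invariance we may fix one state. For any fixed $\ket{\phi}$, choose a unitary $U$ with $U\ket{\phi}=\ket{0}$; the law of $U\ket{\psi}$ equals the law of $\ket{\psi}$. Hence
\begin{equation}
\mathbb{E}_{\psi}\big[|\braket{\phi|\psi}|^2\big]=\mathbb{E}_{\psi}\big[|\braket{0|\psi}|^2\big],
\end{equation}
and this no longer depends on $\phi$. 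The outer average over $\phi$ is then trivial.

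To evaluate the remaining quantity, I would write $|\braket{0|\psi}|^2=\bra{0}\,\rho\,\ket{0}$ with $\rho=\ket{\psi}\bra{\psi}$, and compute the first moment $\mathbb{E}[\rho]$. This operator commutes with every unitary $V$, since $V\mathbb{E}[\rho]V^\dagger=\mathbb{E}[V\rho V^\dagger]=\mathbb{E}[\rho]$ by invariance. By Schur's lemma (the defining representation of $U(d)$ is irreducible), $\mathbb{E}[\rho]=c\,\mathds{1}$. Taking the trace gives $1=\Tr\,\mathbb{E}[\rho]=cd$, so $\mathbb{E}[\rho]=\mathds{1}/d$ and therefore $\mathbb{E}[|\braket{0|\psi}|^2]=1/d=2^{-n}$.

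An elementary alternative avoids representation theory. By symmetry under permutations of the computational basis, all $d$ values $\mathbb{E}|\braket{e_a|\psi}|^2$ are equal. They sum to $\mathbb{E}\braket{\psi|\psi}=1$, so each equals $1/d$.

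No step is a real obstacle; the only delicate point is the modelling assumption. The lemma needs a unitarily invariant (first-design) distribution, or at least one invariant under the unitary mapping $\phi$ to a basis vector. I would state this explicitly. I would also remark that, in the application to $\bra{\Vec{1}}D_{\Vec{\alpha}}\ket{\Vec{1}}$ in Appendix~\ref{app: Note on the curse of dimensionality}, one is heuristically treating $D_{\Vec{\alpha}}\ket{\Vec{1}}$ as such a random state. Under that treatment the heuristic gives $|\bra{\Vec{1}}D_{\Vec{\alpha}}\ket{\Vec{1}}|^2\sim N_k^{-1}$, which is the scaling used there.
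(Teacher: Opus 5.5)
Your proof is correct. Your ``elementary alternative'', together with your reduction step, is essentially the paper's own argument. The paper rotates $\ket{\psi}$ (rather than $\ket{\phi}$) to $\ket{0}$ via $U_\psi^\dagger$, asserts that the rotated state $\ket{\phi'}$ is still random, and then uses isotropy to equate all $\mathbb{E}|c_i|^2$. Normalization $\sum_i |c_i|^2=1$ then gives $2^{-n}$.

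Your main route differs in its key step. Computing the first moment $\mathbb{E}[\ket{\psi}\bra{\psi}]=\mathds{1}/d$ via Schur's lemma makes the reduction unnecessary, since $\mathbb{E}_\psi|\braket{\phi|\psi}|^2=\bra{\phi}\mathbb{E}[\rho]\ket{\phi}=1/d$ for every fixed $\ket{\phi}$. It also shows the conclusion holds whenever one of the two states is drawn from a state $1$-design, independently of the other, while the other is arbitrary. Even a uniformly random computational basis state would suffice.

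The paper's route avoids representation theory but needs more from the distribution. It uses unitary invariance to make $\ket{\phi'}$ random and basis-permutation symmetry to equate the $|c_i|^2$. It also leaves implicit the independence of $\ket{\phi}$ and $\ket{\psi}$, which you state explicitly.

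Your caveat about the modelling assumption behind applying the lemma to $\bra{\Vec{1}}D_{\Vec{\alpha}}\ket{\Vec{1}}$ is apt. Note also that the lemma controls only the mean overlap, not the concentration (``strongly peaks'') claimed where it is invoked.
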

\begin{proof}
    By mapping state $\ket{\psi}\equiv U_\psi\ket{0}$ to $\ket{0}$ with some gate $U_\psi^\dagger$, the state $\ket{\phi}$ becomes $\ket{\phi'}=U_\psi^\dagger\ket{\phi}$. As such, the inner product becomes
    \begin{equation}
        \braket{\phi | \psi} = \braket{\phi | U_\psi U_\psi^\dagger|\psi} = \braket{\phi' | 0},
    \end{equation}
    where $\ket{\phi'}$ is random. By writing the amplitudes of $\ket{\phi'}\equiv\sum_i c_i\ket{i}$ in the standard basis, we have $|\braket{\phi'|0}|^2=|c_0|^2$.
    Finally, the state considered is randomly selected from an isotropic distribution, so 
    \begin{align}
        \mathbb{E}_\mathcal{H}[|c_0|^2] &= \mathbb{E}_\mathcal{H}[|c_i|^2] \nonumber \\
        &= \frac{1}{2^n}\sum_{i=1}^{2^n}\mathbb{E}_\mathcal{H}[|c_i|^2] \nonumber \\
        &= \frac{1}{2^n}\mathbb{E}_\mathcal{H} \left[\sum_{i=1}^{2^n}|c_i|^2\right] \nonumber \\
        &=\frac{1}{2^n},
    \end{align}
    which completes the proof.
\end{proof}

\subsection{Calculating the $\hat{R}$ values}

Here we provide a derivation for the $\hat{R}$ values used in Appendix \ref{app: Graphs with weighted edges}:

\begin{lemma}\label{lemma: app R hat bounds}
    For the $\hat{R}_{||}$, $\hat{R}_{\land}$, $\hat{R}_{\lor}$, and $\hat{R}_{|}$ defined in Eqs.~\ref{eq: app R hat ||}, \ref{eq: app R hat land}, \ref{eq: app R hat lor}, and \ref{eq: app R hat |} respectively, the bounds for weighted $k$-regular graphs are the following:
    \begin{align}
        (N-1)^2 \leq & \hat{R}_{||} \leq N(N-1) \\
        0 \leq & \hat{R}_{\land} \leq N-1  \\
        0 \leq & \hat{R}_{\lor} \leq N-1  \\
        1 \leq & \hat{R}_{|} \leq N.
    \end{align}
    We recall that $N$ is the number of nodes.
\end{lemma}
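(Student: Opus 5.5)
The plan is to turn all four quantities into explicit functions of the single number $\hat{R}_{|}=\Tr(\hat{A}^2)$, using only the stochastic structure of $\hat{A}$, and then bound $\hat{R}_{|}$ from both sides. First, for a weighted $k$-regular graph we have $\tilde{D}=(k+1)\mathds{1}$. Hence $\hat{A}=\tilde{A}/(k+1)$, and by the definition of the degree matrix every row satisfies $\sum_j\hat{A}_{ij}=1$. Since $\hat{A}$ is symmetric, every column sums to one as well. Thus $\hat{A}$ is a symmetric, non-negative, doubly stochastic matrix. In particular $0\leq\hat{A}_{ij}\leq 1$.

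Second, we evaluate the unrestricted sums appearing in Eqs.~\ref{eq: app R hat ||}--\ref{eq: app R hat lor}:
\begin{equation*}
\sum_{i,j,k,l}\hat{A}_{ij}\hat{A}_{kl}=\Big(\sum_{i,j}\hat{A}_{ij}\Big)^2=N^2,\qquad \sum_{i,j,l}\hat{A}_{ij}\hat{A}_{il}=\sum_i\Big(\sum_j\hat{A}_{ij}\Big)^2=N.
\end{equation*}
By the same computation with column sums, the $\lor$-type sum also equals $N$. Substituting these into the definitions gives
\begin{equation*}
\hat{R}_\land=\hat{R}_\lor=N-\hat{R}_{|},\qquad \hat{R}_{||}=N^2-2(N-\hat{R}_{|})-\hat{R}_{|}=N(N-2)+\hat{R}_{|}.
\end{equation*}
Every claimed bound therefore follows from the two-sided bound $1\leq\hat{R}_{|}\leq N$. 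Indeed, that bound gives $0\leq\hat{R}_\land=\hat{R}_\lor\leq N-1$ and $(N-1)^2\leq\hat{R}_{||}\leq N(N-1)$.

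Third, we bound $\hat{R}_{|}=\sum_{i,j}\hat{A}_{ij}^2$. For the upper bound, $0\leq\hat{A}_{ij}\leq1$ gives $\hat{A}_{ij}^2\leq\hat{A}_{ij}$, so $\hat{R}_{|}\leq\sum_{i,j}\hat{A}_{ij}=N$. For the lower bound, $\hat{A}$ is real symmetric, so $\Tr(\hat{A}^2)=\sum_m\lambda_m^2$ over its real eigenvalues. By Lemma~\ref{lemma: app eigenvector}, $\lambda=1$ is one of them, with eigenvector $\tilde{D}^{1/2}\mathbf{1}$. Therefore $\hat{R}_{|}\geq1$.

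The argument is essentially bookkeeping. The only place needing care is the second step: one must check that the paper's subtractive definitions of $\hat{R}_{||}$, $\hat{R}_\land$ and $\hat{R}_\lor$ exclude the diagonal cases consistently, so that the identities above hold exactly rather than up to double counting. The lower bound on $\hat{R}_{|}$ is the one non-obvious ingredient, and it is supplied directly by Lemma~\ref{lemma: app eigenvector}.
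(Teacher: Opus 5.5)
Your proof is correct. The reduction step is the same as the paper's: row stochasticity of $\hat{A}=\tilde{A}/(k+1)$, plus column stochasticity by symmetry, gives $\hat{R}_\land=\hat{R}_\lor=N-\hat{R}_{|}$ and $\hat{R}_{||}=N(N-2)+\hat{R}_{|}$. Your upper bound $\hat{A}_{ij}^2\leq\hat{A}_{ij}$ is just a restatement of the paper's per-row ``sum of squares is at most the square of the sum.''

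The genuine difference is the lower bound on $\hat{R}_{|}$.
\begin{itemize}
\item The paper applies Cauchy--Schwarz row by row over the $s_i$ nonzero entries: $\sum_j\hat{A}_{ij}^2\geq 1/s_i\geq 1/s$. This gives the sparsity-dependent bound $\hat{R}_{|}\geq N/s$, which it then specialises to the worst case $s=N$.
\item You instead read $\hat{R}_{|}=\Tr(\hat{A}^2)=\sum_m\lambda_m^2$ spectrally and use the eigenvalue $1$ supplied by Lemma~\ref{lemma: app eigenvector}.
\end{itemize}

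Your route is shorter and needs no sparsity parameter. It also applies directly to $P(\hat{A})$ with convex coefficients, since $P(1)=1$ is then an eigenvalue, which fits the later QLGC generalization.

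The paper's route costs a line more but yields a strictly more informative estimate. Combined with the upper bound it gives $N/s\leq\hat{R}_{|}\leq N$. So $\hat{R}_{|}=\Theta(N)$ for bounded-degree graphs, and the lower bound $1$ can only be approached by dense graphs.
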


\begin{proof}
First, we recall that for a $k$-regular graph, $k+1 = \tilde{D}_{ii} = \sum_j \tilde{A}_{ij}$. Therefore, the normalized version reads $\sum_j\hat{A}_{ij}=1$.

In this case, we begin with $\hat{R}_{|}$. For the upper bound we take the square of a sum to be larger than the sum of squared terms:
\begin{equation}
    \hat{R}_{|} = \sum_{i,j}\hat{A}_{ij}^2 \leq \sum_i \left(\sum_j\hat{A}_{ij}\right)^2 = \sum_i 1 = N.
\end{equation}
The lower bound is done via the Cauchy-Schwarz inequality:
\begin{equation}
    \hat{R}_{|} = \sum_{i,j}\hat{A}_{ij}^2 = \sum_{i}\frac{1}{s_i}\left(\sum_{j'=1}^{s_i}\hat{A}_{ij}^2\right)\left(\sum_{j'=1}^{s_i}1\right) \geq \sum_i \frac{1}{s_i}\sum_{j'=1}^{s_i}\hat{A}_{ij} \geq \frac{1}{s}\sum_i\sum_j\hat{A}_{ij} = \frac{N}{s}.
\end{equation}
Here, $s_i$ is the sparsity of row $i$ and the $s$ is the row sparsity of matrix $\hat{A}$. The worst case is when $s=N$, so $\hat{R}_{|}\geq 1$.

The other $\hat{R}$ values only inherit the bounds from $\hat{R}_{|}$, the other terms being calculated precisely. In $\hat{R}_{\land}$, the first term is
\begin{equation}
    \sum_{i,j,k} \hat{A}_{ij}\hat{A}_{kj} = \sum_j \left(\sum_i \hat{A}_{ij}\right)\left(\sum_k \hat{A}_{kj}\right) = \sum_j 1\cdot 1 = N.
\end{equation}
Therefore, by subtracting the bounds of $\hat{R}_{|}$, we obtain the desired bounds for $\hat{R}_\land$. Thanks to the symmetry of $\hat{A}$, the same bounds hold for $\hat{R}_\lor$.

Finally, by rewriting Eq.~\ref{eq: app R hat ||} in terms of Eqs.~\ref{eq: app R hat land} and \ref{eq: app R hat lor}, we have
\begin{equation}
    \hat{R}_{||} = \underbrace{\left(\sum_{i,j}\hat{A}_{ij}\right)}_{=N}\underbrace{\left(\sum_{k,l}\hat{A}_{kl}\right)}_{=N} - 2\underbrace{\sum_{i,j,k}\hat{A}_{ij}\hat{A}_{kj}}_{N} + \hat{R}_{|}.
\end{equation}
Plugging in the bounds for $\hat{R}_{|}$ gives the desired bounds.
\end{proof}

\subsection{Rank-sparsity relation}

Here we prove that the normalized adjacency matrix with self-loops $\hat{A}$ of a graph has a high rank only if it is sparse.

\begin{lemma}\label{lemma: rank vs sparsity}
    For a graph, the normalized adjacency matrix with self-loops $\hat{A}$ is defined as given in Appendix \ref{app: Graph preliminaries}. Then, its rank $r$ depends on its dimension $N$ and sparsity $s$ by the bound 
    \begin{equation}
        r \geq \frac{N}{s}.
    \end{equation}
\end{lemma}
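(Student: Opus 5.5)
The plan is to sandwich $\Tr(\hat{A})$ between $N/s$ from below and the rank $r$ from above. Both bounds come from spectral facts about $\hat{A}$ already established in Appendix~\ref{app: Graph preliminaries}. I work with the unweighted graph with unit self-loops, so that $\hat{A}_{ij}=\tilde{A}_{ij}/\sqrt{\deg(i)\deg(j)}$ as in Eq.~\ref{eq: app norm adj matrix}. The weighted case goes through the same way once the self-loop weights are bounded suitably.

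\textbf{Step 1 (upper bound on the trace by the rank).} $\hat{A}$ is real symmetric, so it is orthogonally diagonalizable with real eigenvalues $\lambda_1,\dots,\lambda_N$. Exactly $r$ of these are nonzero, so
\begin{equation*}
    \Tr(\hat{A})=\sum_{\lambda_i\neq 0}\lambda_i \leq r\cdot\max_i|\lambda_i| = r\,\|\hat{A}\|_2 .
\end{equation*}
I then need $\|\hat{A}\|_2\leq 1$. This is the sub-unitarity already assumed for the block encoding, and I would justify it directly as follows. $\hat{A}$ is similar to the row-stochastic matrix $\tilde{D}^{-1}\tilde{A}$, since $\hat{A}=\tilde{D}^{1/2}(\tilde{D}^{-1}\tilde{A})\tilde{D}^{-1/2}$. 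A row-stochastic matrix has all eigenvalues of modulus at most $1$, by Gershgorin or by its $\infty$-norm being $1$. Combined with Lemma~\ref{lemma: app eigenvector}, this shows the spectral radius equals $1$. For a symmetric matrix the spectral radius equals the spectral norm, so $\|\hat{A}\|_2=1$ and hence $\Tr(\hat{A})\leq r$.

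\textbf{Step 2 (lower bound on the trace by $N/s$).} Each diagonal entry is $\hat{A}_{ii}=1/\deg(i)$, using the unit self-loop. The nonzero entries of column $i$ are exactly the edges at $i$, self-loop included, so $\deg(i)$ equals the number of nonzeros in that column and is at most $s$. Therefore
\begin{equation*}
    \Tr(\hat{A})=\sum_{i=1}^N\frac{1}{\deg(i)}\geq\frac{N}{s}.
\end{equation*}
Chaining the two steps gives $r\geq\Tr(\hat{A})\geq N/s$, which is the statement.

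\textbf{Main obstacle.} The delicate points are the identification $\deg(i)\leq s$ and the weighted case. In the unweighted case with unit self-loops the identification is immediate. For weighted graphs, $\hat{A}_{ii}=w_{ii}/\deg(i)$ is no longer $1/(\text{number of nonzeros})$. There I would instead require $w_{ii}\geq w_{ij}$ for all $j$, that is, a self-loop at least as heavy as any incident edge. This gives $\deg(i)\leq s_i w_{ii}$ and again $\hat{A}_{ii}\geq 1/s$.

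A first attempt via disjoint column supports only yields $N/s^2$, which is why the trace argument is the one to use.
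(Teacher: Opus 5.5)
Your proof is correct, but it takes a different route from the paper's.

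\textbf{The paper's route.} The paper first notes that $\rk(\hat A)=\rk(\tilde A)$, since $\tilde D^{\pm1/2}$ are invertible. It then applies Cauchy--Schwarz to the $r$ nonzero eigenvalues of the \emph{unnormalized} matrix, $(\Tr\tilde A)^2\le r\,\Tr(\tilde A^2)$. Here $\Tr\tilde A=N$ and $\Tr(\tilde A^2)=\sum_{ij}\tilde A_{ij}^2\le Ns$, obtained by counting ones.

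\textbf{Your route.} You stay with $\hat A$ and use the trace/spectral-norm inequality $\Tr\hat A\le r\,\|\hat A\|_2$. This requires $\|\hat A\|_2=1$, which you justify cleanly through similarity to the row-stochastic matrix $\tilde D^{-1}\tilde A$. You then bound the diagonal by $\hat A_{ii}=1/\deg(i)\ge 1/s$.

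\textbf{What each buys.} The paper's version needs no norm information. However, it relies entirely on $\tilde A$ being a $0/1$ matrix with unit diagonal, since it uses $\Tr\tilde A=N$ and counts ones. Your version gives the intermediate bound $r\ge\sum_i 1/\deg(i)$. By the AM--HM inequality this is at least the paper's intermediate bound $N^2/\sum_i\deg(i)$, so yours is slightly sharper before both are relaxed to $N/s$. Your version also extends to weighted graphs under your self-loop dominance condition $w_{ii}\ge w_{ij}$; the paper's argument, as written, does not cover weighted graphs at all.

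\textbf{On your weighted-case obstacle.} The extra hypothesis is actually unnecessary. The paper requires $w_{ii}>0$, so every node has at most $s-1$ neighbours besides itself. A greedy selection therefore yields a set $I$ of pairwise non-adjacent nodes with $|I|\ge N/s$. The principal submatrix $\hat A_{I,I}$ is then diagonal with strictly positive diagonal entries, hence invertible. This gives $r\ge |I|\ge N/s$ for arbitrary positive weights, with no spectral input at all.
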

\begin{proof}
    First, $\hat{A}=\tilde{D}^{-1/2}\tilde{A}\tilde{D}^{-1/2}$, where $\tilde{D}$ are invertible operators. Its rank is the dimension of the image of $\hat{A}$, which we denote by $\rk(\hat{A})=\dim(\im(\hat{A}))$. Therefore, by mapping $\im(\hat{A}) \to \im(\tilde{D}^{1/2}\hat{A})$ with the bijection $\tilde{D}^{1/2}$ to the left, we preserve the rank. Because transposition also preserves both rank and invertibility, applying invertible operators to the right also preserves the rank. Therefore,
    \begin{align}
        \rk(\tilde{A}) &=
        \rk(\tilde{D}^{1/2}(\hat{A}\tilde{D}^{1/2})) =
        \rk(\hat{A}\tilde{D}^{1/2}) =
        \rk((\hat{A}\tilde{D}^{1/2})^T) \nonumber\\ &= 
        \rk((\tilde{D}^{1/2})^T\cdot\hat{A}^T) = \rk(\hat{A}^T) = \rk(\hat{A}) = r.
    \end{align}
    Here, we expanded $\tilde{A}$, then reduced the bijection to the left, transposed, reduced the new bijection to the left, and transposed again, in this order, to reach $r$.

    We now start from the Cauchy-Schwarz inequality applied on the sum of the eigenvalues $\lambda_i$ of $\tilde{A}$:
    \begin{equation}\label{eq: app CBS}
        \left(\sum_{i=1}^r \lambda_i\cdot1\right)^2 \leq \left(\sum_{i=1}^r 1\right)\cdot \left(\sum_{i=1}^r \lambda_i^2\right),
    \end{equation}
    however, the LHS is simply the squared trace $\Tr(\tilde{A})=N$, and the sum over squared eigenvalues can be further bounded by
    \begin{equation}
        \sum_{i=1}^r \lambda_i^2 = \Tr(\tilde{A}^2) = \sum_{i,j}\tilde{A}_{ij}\tilde{A}_{ji} = \sum_{i,j}\tilde{A}_{ij}^2\leq N\cdot s,
    \end{equation}
    which counts the maximum number of ones present in the matrix. Therefore, the Eq.~\ref{eq: app CBS} leads to
    \begin{equation}
        N^2\leq r\cdot N \cdot s \quad \Rightarrow \quad r\geq N/s.
    \end{equation}
    So a sparse graph has high rank $r$.
\end{proof}